\documentclass{lmcs} %
\pdfoutput=1

\usepackage{lastpage}
\lmcsdoi{20}{4}{1}
\lmcsheading{}{\pageref{LastPage}}{}{}%
{Apr.~07,~2023}{Oct.~02,~2024}{}

\usepackage[utf8]{inputenc}

\keywords{Timed Automata, History-determinism, Good-for-games, fair simulation, synthesis}

\usepackage{amsmath}
\usepackage{amssymb}
\usepackage{hyperref}
\usepackage[capitalise]{cleveref}
\usepackage[disable]{todonotes}

\newcommand{\R}{\mathbb{R}}

\renewcommand{\epsilon}{\varepsilon}
\newcommand{\eps}{\varepsilon}
\newcommand{\size}[1]{|#1|}

\newcommand{\N}{\mathbb{N}}

\newcommand{\nnreals}{{\mathbb R}_{\geq 0}}

\newcommand{\x}{\times}

\newcommand{\eqby}[2][=]{\stackrel{\text{{\tiny{#2}}}}{#1}}
\newcommand{\eqdef}{\eqby{def}}

\newcommand{\fractof}[1]{{\it fract}(#1)} %
\newcommand{\intof}[1]{\floor{#1}}
\newcommand{\maxfractof}[1]{{\it maxfrac}(#1)}
\newcommand{\floor}[1]{\lfloor#1\rfloor}

\newcommand{\card}[1]{|#1|}

\newcommand{\B}{\mathcal{B}}
\newcommand{\T}{\mathcal{T}}

\newcommand{\auta}{\mathcal{A}}
\newcommand{\autb}{\mathcal{B}}

\newcommand{\gsgame}{\mathcal{G}}

\newcommand{\comp}[2]{#2\circ #1}
\newcommand{\strat}{\sigma}

\newcommand{\exptime}{\EXPTIME}
\newcommand{\ptime}{\textsc{PTime}}

\newcommand{\NP}{\textsc{NP}}

\hyphenation{non-de-ter-min-is-tic non-de-ter-min-ism au-tom-a-ta au-tom-a-ton in-ter-sec-tion}

\newcommand{\states}{Q}
\newcommand{\transitions}{\Delta}
\newcommand{\clocks}{C}
\newcommand{\valuations}{\nnreals^\clocks}
\newcommand{\acc}{\mathit{Acc}}
\newcommand{\priority}{\mathit{priority}}
\newcommand{\lang}{L_T}  %
\newcommand{\LTSlang}{L}  %

\newcommand{\TAtuple}{(\states,\iota,\clocks,\transitions,\Sigma,\acc)}

\newcommand{\langincl}[1][L]{\subseteq_{L}}
\newcommand{\fairsim}[1][]{\preceq}

\newcommand{\dur}[1]{\mathop{duration}(#1)}
\newcommand{\tword}[1]{\mathop{tword}(#1)}
\newcommand{\step}[2][]{\Step{#2}{}{#1}}
\newcommand{\Step}[3]{\ensuremath{\,{\stackrel{#1}{\longrightarrow}}{}^{\scriptstyle{#2}}_{\scriptstyle{#3}}}\,}

\newcommand{\tstep}[2][]{\xrightarrow[#1]{#2}}

\newcommand{\lts}{S}
\newcommand{\ltsStates}{V}
\newcommand{\ltsAlphabet}{\Sigma}
\newcommand{\ltsTransitions}{E}
\newcommand{\ltsDef}{\lts = (\ltsStates, \ltsAlphabet, \ltsTransitions)}

\newcommand{\req}{\sim}
\newcommand{\cmax}[1]{c_{#1}}

\newcommand{\EXPTIME}{{\sc ExpTime}}

\newcommand{\mypar}[1]{\paragraph*{#1.}}

\newcommand{\block}[2]{b_{#1,#2}}
\newcommand{\blimit}{\frac{1}{3}}
\newcommand{\bstart}[2]{s_{#1,#2}}
\newcommand{\bend}[2]{e_{#1,#2}}
\newcommand{\tick}[3]{f_{#1,#2,#3}}

\newcommand{\confstate}[3]{q_{#1,#2,#3}}
\newcommand{\confval}[3]{\nu_{#1,#2,#3}}

\newcommand{\envalias}[2]{\newenvironment{#1}{\begin{#2}}{\end{#2}}}
\envalias{theorem}{thm}
\envalias{corollary}{cor}
\envalias{lemma}{lem}
\envalias{proposition}{prop}
\envalias{remark}{rem}
\envalias{definition}{defi}
\crefname{defi}{Definition}{Definitions}
\crefname{cor}{Corollary}{Corollarys}
\crefname{rem}{Remark}{Remarks}
\crefname{lem}{Lemma}{Lemmas}
\crefname{thm}{Theorem}{Theorems}
\crefname{prop}{Proposition}{Propositions}

\begin{document}
	
	\title{History-deterministic Timed Automata\rsuper*}
	\titlecomment{{\lsuper*}
		This work has in parts been presented at the 
		33rd International Conference on Concurrency Theory (CONCUR'22) \cite{HLT2022} and
		at the 16th International Workshop on Reachability Problems (RP'22) \cite{BHLST2022}.
	}
	\thanks{
		This work was supported by the EU (ERC-2020-AdG 101020093); the EPSRC (EP/V025848/1,  EP/X042596/1, EP/X017796/1 and EP/X03688X/1); and the ANR (QUASY 23-CE48-0008-01).
	}%

	\author[S.~Bose]{Sougata Bose\lmcsorcid{0000-0003-3662-3915}}[a]
	\author[T.A.~Henzinger]{Thomas A.~Henzinger\lmcsorcid{0000-0002-2985-7724}}[b]
	\author[K.~Lehtinen]{Karoliina Lehtinen\lmcsorcid{0000-0003-1171-8790}}[c]
	\author[S.~Schewe]{Sven Schewe\lmcsorcid{0000-0002-9093-9518}}[a]
	\author[P.~Totzke]{Patrick Totzke\lmcsorcid{0000-0001-5274-8190}}[a]

	\address{University of Liverpool}	%
	\email{\{sougata,svens,totzke\}@liverpool.ac.uk}
	
	\address{IST Austria}
	\email{tah@ist.ac.at }
	
	\address{CNRS, Aix-Marseille University, LIS}
	\email{lehtinen@lis-lab.fr}

	\begin{abstract}
		We explore the notion of history-determinism in the context of timed automata (TA)
over infinite timed words.
History-deterministic (HD) automata are those in which nondeterminism can be
resolved on the fly, based on the run constructed thus far.
History-determinism is a robust property that admits different game-based
characterisations,
and HD specifications allow for game-based verification without an
expensive determinization step.

We show that the class of timed $\omega$-languages recognized by
HD timed automata strictly extends that of deterministic ones, and is strictly included in
those recognized by fully non-deterministic TA.

For non-deterministic timed automata %
it is known that universality is already undecidable for 
safety/reachability TA.
For history-deterministic TA with arbitrary parity acceptance, we show that
timed universality, inclusion, and synthesis all remain decidable and are
\exptime-complete.

For the subclass of TA with safety or reachability acceptance,
one can decide (in \exptime) whether such an automaton is history-deterministic,
and if so, it can be effectively determinized without introducing new automaton states.

	\end{abstract}
	
	\maketitle

	\section{Introduction}
	\label{sec:intro}

Automata offer paradigmatic formalisms both for specifying and for modelling
discrete transition systems,
\emph{i.e.}~for providing descriptive as well as executable definitions of
formal languages.
Given a finite or infinite word, an automaton specifies whether or not the
word belongs to the defined language.
Deterministic automata are executable, because the word can be processed
left-to-right, with each transition of the automaton determined by
the current input letter.
Descriptive automata allow the powerful concept of nondeterminism, which
yields more succinct or even more expressive specifications.

The notion of \emph{history-determinism} lies between determinism and
nondeterminism.
History-deterministic automata are still executable, provided the execution
engine is permitted to keep a record of all past inputs. 
Formally, a strategy~$r$ (\emph{a.k.a.}~``resolver'') is a function from
finite prefix runs to transitions that suggests for each input word $w$
a specific run $r^*(w)$ of the automaton over~$w$,
namely, the run that results from having the function $r$ determine, after
each input letter, the next transition based on the prefix of the word
processed so far.
An automaton is \emph{history-deterministic} if there exists a resolver $r$
so that for every input word~$w$, the automaton has an accepting run over
$w$ iff the specific run $r^*(w)$ is accepting.

The concept of history-determinism was first identified in~\cite{HP06},
where it was noted that for solving graph games, it is not necessary to
determinize history-deterministic specifications of $\omega$-regular
winning conditions.
For this reason, history-deterministic automata were called
``good-for-games''. %
The term ``history-determinism'' was first used by~\cite{Col09}.
The concept itself has since been referred to %
as both
``history-determinism'' and ``good-for-gameness.''
Since \cite{BL21} recently showed that, in a general context of quantitative
automata, the two notions do not always coincide
(specifically: for certain quantitative winning conditions, history-determinism
implies the ``good-for-games'' property of an automaton, but not vice versa),
we follow their more nuanced terminology and use the term ``history-determinism''
to denote the existence of a resolver and ``good-for-games'' for automata that preserve the winner of games under composition, as required for solving games without determinization.

There is also a tight link between a variant of the Church synthesis
problem, called \textit{good-enough synthesis}~\cite{AK20}, and deciding
history-determinism.
Church synthesis asks whether a system can guarantee that its interaction with
an uncontrollable environment satisfies a specification language for all possible
environment behaviours.
This model assumes that the environment is hostile and will, if possible,
sabotage the system's efforts.
This pessimistic view can be counter-productive.
In the canonical example of a coffee machine,
if the users (the environment) do not fill in the water container, the machine
will fail to produce coffee.
Church synthesis would declare the problem unrealisable: the machine
may not produce coffee for all environment behaviours.
In the good-enough synthesis problem, on the other hand, such failures are
acceptable, and we can still return an implementation that produces coffee
(satisfies the specification) whenever the environment behaves in a way that
allows the desired behaviour (fills in the water container).
Deciding the good-enough synthesis problem for a deterministic automaton is
polynomially equivalent to deciding whether a non-deterministic automaton of
the same type is history-deterministic \cite{FLW20,BL21,GJLZ21}.
The decidability and complexity
of checking history-determinism is therefore particularly interesting.

In this paper, we study, for the first time, history-determinism in the context
of \emph{timed} automata.
In a timed word, letters alternate with time delays, which are nonnegative real
numbers.
The resolver gets to look not only at all past input letters,
but also at all past time delays, to suggest the next transition.
We consider timed automata over infinite timed words with standard
$\omega$-regular acceptance conditions~\cite{AD1994}.
For the results of this paper, it does not matter whether or not the sum of all
time delays provided by an infinite input word is required to diverge.

Our results can be classified into two parts.
The first part of our results applies to all timed automata,
and sometimes more generally, to all labelled transition systems.
In this part we are concerned with solving the quintessential verification problem
for timed systems, namely \emph{timed language inclusion},
in the special case of history-deterministic (\emph{i.e.} executable) specifications.
Since universality is undecidable for general timed automata, so is the
timed language-inclusion problem for non-deterministic specifications~\cite{AD1994}.
This is the reason why much previous work in timed verification has focused on
identifying determinizable subclasses of timed automata,
such as event-clock automata~\cite{AFH99},
and on studying deterministic extensions of the timed-automaton model,
such as deterministic two-way timed automata~\cite{AlHe92}.
Determinizable specifications can be complemented, thus supporting the
\emph{complementation-based} approach to language inclusion:
in order to check if every word accepted by the implementation $A$ is also accepted
by the specification~$B$, first determinize and complement~$B$, and then check the
intersection with $A$ for emptiness.
We show that the history-determinism of specifications suffices for deciding timed
language inclusion, which demonstrates that determinizability is not required.
More precisely, we prove that if $A$ is a timed automaton and $B$ is a
history-deterministic timed automaton, it can be decided in \exptime\ if every timed
word accepted by $A$ is also accepted by~$B$ (\cref{cor:inclusion}).

In contrast to the traditional complementation-based approach to language inclusion,
the history-deterministic approach is \emph{game-based}.
Like the complementation-based approach, the game-based approach is best formulated
in the generic setting of labelled transition systems with acceptance conditions,
so-called \emph{fair LTS}.
The acceptance condition of a fair LTS declares a subset of the infinite runs of the
LTS to be fair
(a special case is \emph{safety} acceptance, which declares all infinite runs to be
fair). 
Given two fair LTS $A$ and~$B$, the language of $A$ is included in the language of $B$
if for every fair run of $A$ there is a fair run of $B$ over the same (infinite) word.
A sufficient condition for the language inclusion between $A$ and $B$ is the existence
of a fair simulation relation between the states of $A$ and the states of~$B$,
or equivalently, the existence of a winning strategy for player $p_B$ in the following
2-player \emph{fair simulation game}:
(i)~every transition chosen by player $p_A$ on the state-transition graph $A$
can be matched by a transition chosen by player $p_B$ on the state-transition graph $B$
with the same label (letter or time delay), and
(ii)~if the infinite sequence of transitions chosen by $p_A$ produce a fair run of~$A$,
then the matching transitions chosen by $p_B$ produce a fair run
of~$B$~\cite{HKR1997}.
Solving the fair simulation game is often simpler than checking language inclusion;
it may be polynomial where language inclusion is not
(\emph{e.g.} in the case of finite safety or B\"uchi automata),
or decidable where language inclusion is not
(\emph{e.g.} in the case of timed safety or B\"uchi automata~\cite{TAKB96}).

We show that for all fair LTS $A$ and all history-deterministic fair LTS~$B$,
the condition that the language of $A$ is included in the language of $B$ is
equivalent to the condition that $A$ is fairly simulated by~$B$.
This observation reduces the language inclusion problem for history-deterministic
specifications to the problem of solving a fair simulation game between
implementation and specification.
The solution of fair simulation games depends on the complexity of the acceptance
conditions of $A$ and~$B$, but is often simpler than the complementation of~$B$,
and fair simulation games can be solvable even in the case of specifications that
cannot be complemented.
In \cref{sec:expressivity:d-hd} we show the existence of such a timed language.
The game-based approach to checking language inclusion,
which requires history-determinism,
is therefore more general, and often more efficient,
than the traditional complementation-based approach to checking language inclusion,
which usually requires full determinization.
Indeed, history-determinism is exactly the condition that allows the game-based
approach to language inclusion:
for a given fair LTS~$B$, if it is the case that $B$ can fairly simulate all fair
LTS $A$ whose language is included in the language of~$B$,
then $B$ must be history-deterministic (\cref{thm:triangle}).

More generally, turn-based timed games for which the winning
condition is defined by a history-deterministic timed automaton are no harder to solve than those with deterministic winning conditions:
the winner of such a timed game can be determined on the product of the (timed)
arena with the automaton specifying the winning condition.
We conjecture that this is the case
also for the concurrent timed games of~\cite{DFHM03}
(cf.~\cref{sec_conclusion}).
Timed games have also been defined for the synthesis of timed systems from timed I/O
specifications.
Again, we show that the synthesis game of~\cite{DM2002} can be solved not only for
I/O specifications that are given by deterministic timed automata,
but more generally, for those given by history-deterministic
timed automata (\cref{thm:synthesis}).

The second part of our results investigates the problem of deciding
history-determinism for timed automata and the determinizability of
history-deterministic timed automata.
In this part, we have only partial results,
namely results for timed safety and reachability automata.
Timed safety automata, in particular, constitute an important class of
specifications, as many interesting timed and untimed properties
can be specified by timed safety automata if time is required to
diverge~\cite{HNSY1992,HKW95}.
We prove that for timed safety automata and timed reachability automata,
it can be decided in \exptime\ if a given timed automaton is
history-deterministic (\cref{thm:decidable}).
Checking history-determinism remains open for more general classes of
timed automata, such as timed B\"uchi and coB\"uchi automata.
We also show that every history-deterministic timed safety or reachability automaton can be determinized without increasing the number of automaton states, but with an exponential
increase in the number of transitions or length of guards (Theorem 4.5).
Since the question of determinizability is undecidable for non-deterministic
timed reachability automata~\cite{F06},
it follows from our result, that checking if a given
non-deterministic timed reachability automaton has an equivalent HD timed reachability automaton
is also undecidable.
Finally, we show that if a timed safety or reachability automaton is
good-for-games (in the sense explained earlier),
then the automaton must be history-deterministic (\cref{lem:gfg2hd}).
This implication is open for more general classes of timed automata.

All our results hold regardless of whether one assumes weak or strong progression of time (zero-delays are allowed, resp.~forbidden) and also whether time must diverge, i.e., infinite words of finite duration (Zeno words) are considered or not.

\medskip
\mypar{Related Work}
The notion of history-determinism was introduced independently, with slightly
different definitions, by Henzinger and Piterman~\cite{HP06} for solving games without determinization, by Colcombet~\cite{Col09} for  cost-functions, and by Kupferman, Safra, and Vardi~\cite{KSV06} for recognizing derived tree languages of word automata.
Initially, history-determinism was mostly studied in the $\omega$-regular setting,
where these different definitions all coincide~\cite{BL19}.
For some coB\"uchi-recognizable languages, history-deterministic automata can be
exponentially more succinct than any equivalent deterministic automaton~\cite{KS15},
and for B\"uchi and coB\"uchi automata, history-determinism is decidable in
polynomial time~\cite{BK18,KS15}.
For transition-based history-deterministic automata, minimisation is
\ptime~\cite{RK19}, while for state-based ones, it is \NP-complete~\cite{Sch20}.
Recently, the notion has been extended to richer automata models, such as
pushdown automata~\cite{LZ22,GJLZ21} and quantitative automata~\cite{BL21,BL22},
where deterministic and non-deterministic models have different expressivity,
and therefore, allowing a little bit of non-determinism can, in addition to
succinctness, also provide more expressivity.

\medskip
\mypar{Paper Structure}
After defining preliminary notions we proceed to introduce history-determinism,
and show a new, fair-simulation-based characterisation in \cref{sec:gfg}.
In \cref{sec:expressivity} we consider the expressivity of history-deterministic(HD) timed automata(TA) with different Parity acceptance conditions.
We show that history-deterministic TA with safety or reachability acceptance are determinizable.
\Cref{sec:games}
considers questions concerning timed games, timed synthesis, and timed language inclusion
and shows that history-determinism coincides with good-for-gameness for reachability and safety TA.
In \cref{sec:decision} we derive that one can decide, in \EXPTIME, if a given safety or reachability TA is history-deterministic.
In \cref{sec:synthesis} we study synthesis games with winning conditions given by history-deterministic TA 
and show that solving the corresponding synthesis game remains in \EXPTIME, as for deterministic TA.
We conclude with a summary and some open problems and conjectures.

	\section{Preliminaries}
	\label{sec:definitions}
	\mypar{Numbers, Words}
Let $\N$ and $\nnreals$ denote the nonnegative integers and reals, respectively. 
For $c\in\nnreals$ we write
$\intof{c}$ %
for its integer
and $\fractof{c} \eqdef c-\lfloor{c}\rfloor$
for its fractional part.

An alphabet~$\Sigma$ is a nonempty set of letters. $\Sigma_{\varepsilon}$ denotes $\Sigma\cup \{\varepsilon\}$.  $\Sigma^*$ and $\Sigma^\omega$ denote the sets of finite and infinite words over $\Sigma$, respectively and $\Sigma^\infty=\Sigma^*\cup \Sigma^\omega$ denotes their union.
The  empty word is denoted by $\epsilon$, the length of a finite word~$v$ is denoted by $\size{v}$, and the $n$-th letter of a finite or infinite word is denoted by $w[n]$ (starting with $n = 0$).

\medskip
\mypar{Labelled Transition Systems, Languages, Fair Simulation}
A \emph{labelled transition system} (LTS)
is a graph 
$S=(V,\ltsAlphabet,\ltsTransitions)$
with set $V$ of states %
and edges $\ltsTransitions\;\subseteq\; V \x \ltsAlphabet \x V$, 
labelled by alphabet $\ltsAlphabet$.
It is \emph{deterministic} if for all $(s,a)\in V\x\ltsAlphabet$ there is at most one $s'$ with $s\step{a}s'$,
and \emph{complete} if for all $(s,a)\in V\x\ltsAlphabet$ there is at least one $s'$ with $s\step{a}s'$.
We henceforth consider only complete LTSs.
Together with an \emph{acceptance condition}
$\acc\subseteq \ltsTransitions^\omega$ this can be used to define languages over $\ltsAlphabet$ as usual:
a word $w=l_0l_1\ldots\in\ltsAlphabet^\omega$ is accepted from $s_0$ if
there is a path (also \emph{run})
$\rho=
s_0
\step{l_1}
s_1
\step{l_2}
s_2
\ldots
$
that is accepting, i.e., in $\acc$.
The \emph{language} $\LTSlang(s_0)\subseteq \ltsAlphabet^\omega$ of an initial state $s_0\in V$ 
consists of all words for which there exists an accepting run from $s_0$.
We will write $s\langincl s'$ to denote language inclusion, meaning $\LTSlang(s)\subseteq \LTSlang(s')$.
The acceptance condition $\acc$ can be given by a parity condition 
but does not have to be.
We consider in this paper especially reachability (does the run visit a state in a given target set $T\subseteq V$?)
and safety conditions (does the run always stay in a ``safe'' region $F\subseteq V$?).
An LTS together with an accepting condition is referred to as \emph{fair} LTS \cite{HKR1997}.

\emph{Fair} simulations 
\cite{HKR1997}
are characterised by simulation games on (a pair of) fair LTSs in which  Player 1 stepwise produces a path from $s$, and Player 2 stepwise produces an equally labelled path from $s'$. Player 2 wins if she produces an accepting run
whenever Player 1 does.
That is, $s$ is fairly simulated by $s'$ (write $s\fairsim s'$) iff Player 2 has a strategy in the simulation game
so that, whenever the run produced by Player 1 is accepting then so is the run produced by Player 2 in response.
Fair simulation $s\fairsim s'$ implies language inclusion $\LTSlang(s)\subseteq \LTSlang(s')$ but not vice versa.

\medskip
\mypar{Timed Alphabets, Words, and LTSs}
For any alphabet $\Sigma$ let $\Sigma_T$ denote the timed alphabet $\{(a,t)|a\in \Sigma, t\in \R_\ge0\}$. 
A timed word is a finite or infinite word $w\in (\Sigma_T)^\infty$ consisting of letters in $\Sigma$ paired with distinct non-negative non-decreasing real-valued timestamps. %
We will also write $d_0a_0d_1a_1...$ to denote a timed word $(a_i,t_i)\in\Sigma_T^\infty$ where $t_0=d_0$ and $t_{i+1}=t_i+d_{i+1}$.
Conversely, the duration and the timed word
of any sequence in $(\Sigma\cup\R)^\infty$
is given inductively as follows.
For any
$d\in\nnreals$,
${\tau\in\Sigma}$,
$\alpha\in(\Sigma\cup\R)^*$,
and
$\beta\in(\Sigma\cup\R)^\infty$
let
$\dur{\tau}\eqdef 0$;
$\dur{d}\eqdef d$;
$\dur{\alpha\beta}=\dur{\alpha}+\dur{\beta}$;
$\tword{{\eps}}=\tword{d}\eqdef\eps$;
$\tword{{\alpha d}}\eqdef\tword{\alpha}$; and %
$\tword{{\alpha \tau}}\eqdef\tword{\alpha}(\tau,\dur{\alpha})$.
An infinite timed word of finite duration is called a \emph{Zeno} word. 

A \emph{timed} LTS is one with edge labels in $\Sigma\uplus\nnreals$,
so that edges labelled by $\nnreals$ (modelling the passing of time) satisfy the following conditions
for all $\alpha,\beta,\gamma\in V$ and $d,d'\in \R_{\ge 0}$.
\begin{enumerate}
    \item (Zero-delay): $\alpha\tstep{0}\alpha$,
    \item (Determinism): If
        $
        \alpha\tstep{d}\beta
        \land
        \alpha\tstep{d}\gamma$ then $\beta=\gamma$,
    \item (Additivity): $\alpha\tstep{d}\beta\tstep{d'}\gamma$ then $\alpha\tstep{d+d'}\gamma$.
\end{enumerate}
The {timed language} $\lang(s)\subseteq \Sigma_T^\omega$ of a state $s$
consists of all the timed words read along accepting runs $\lang(s)\eqdef \tword{\LTSlang(s)}$. We write $L(\lts)$ for the timed language of the initial state of the LTS $\lts$.

\medskip
\mypar{Timed Automata}
Timed automata are finite-state automata equipped with
finitely many real-valued variables called \emph{clocks}, whose transitions are guarded by constraints on clocks.
Constraints on clocks $\clocks=\{x,y,\ldots\}$ are (in)equalities $x-y \triangleleft n$ where $x\in \clocks$, $y\in \clocks\cup \{0\}$, $n\in \mathbb{N}$ and $\triangleleft \in \{\leq, <\}$. Let $\B(\clocks)$ denote the set of Boolean combinations of clock constraints, called \emph{guards}.
A clock \emph{valuation} $\nu\in \R^\clocks$ assigns a value $\nu(x)$ to each clock $x\in \clocks$. We write $\nu \models g$ if $\nu$ satisfies the guard $g$. 
A timed automaton (TA) $\T=\TAtuple$ is given by 
\begin{itemize}
\item $Q$ a finite set of states including an initial state $\iota$;
\item $\Sigma$ an input alphabet;
\item $\clocks$ a finite set of clocks;
\item $\transitions\subseteq Q\times \B(\clocks)\times \Sigma\times 2^\clocks \times Q$  a set of transitions; each transition is associated with a guard, a letter, and a set of clocks to reset. 
    A transition that reads letter $a\in\Sigma$ will be called an $a$-transition.
    We assume that %
    for all $(s,\nu,a)\in Q\x\valuations\x\Sigma$ there is at least one
    transition $(s,g,a,r,s')\in\transitions$
    so that $\nu$ satisfies $g$.
\item $\acc\subseteq \Delta^\omega$ an acceptance condition. 
\end{itemize}
Timed automata induce timed LTSs, and can thus be used to define timed languages, as follows.
A \emph{configuration} is a pair consisting of
a control state and a clock valuation.
These can evolve in two ways, as follows.
For all configurations $(s,\nu)\in \states\x\valuations$,
\begin{itemize}
    \item 
        there is a \emph{delay} step 
        $(s,\nu)\step{d}(s,\nu+d)$
        for every $d\ge0$, which increments all clocks by $d$.
    \item there is a \emph{discrete} step
        $(s,\nu)\step{\tau}(s',\nu')$ 
        if
        $\tau=(s,g,a,r,s')\in\transitions$
        is a transition so that $\nu$ satisfies $g$ and $\nu'=\nu[r\rightarrow 0]$, that is, it maps $r$ to $0$ and agrees with $\nu$ on all other values.
\end{itemize}
Naturally, every delay $d$ yields a unique successor configuration
and so, for any two $d,d'\ge 0$ and valuations $\nu,\nu'$,
$\nu\step{d}\step{d'}\nu'\iff\nu\step{d+d'}\nu'$.
Hence, a TA indeed induces a timed LTS.

Discrete steps, however, are a source of nondeterminism: 
a configuration may have several $a$-successors
induced by different transitions whose guards are satisfied.
$\T$ is \emph{deterministic} if its induced LTS is deterministic, which is the case iff for every state $s$, all transitions from $s$ have mutually exclusive guards.

A path
$\rho=
(s_0,\nu_0)
\step{l_1}
(s_1,\nu_1)
\step{l_2}
(s_2,\nu_2)
\ldots
$
is called \emph{reduced} if it does not contain consecutive delay steps.
It is a \emph{run on} timed word
$w
\in (\Sigma_T)^\infty$
if
$\tword{l_1l_2\ldots}=w$.
The acceptance condition is lifted to the LTS as expected. Namely,
a run is \emph{accepting}
if $\rho\in\acc$.
This way,
the \emph{language} $\lang(s,\nu)\subseteq \Sigma_T^\omega$ of a configuration $(s,\nu)$
consists of all timed words for which there exists an accepting run from $(s,\nu)$.
The language of $\T$ is $\lang(\T)\eqdef\lang((\iota,0))$, the languages of the initial configuration with state $\iota$ and all clocks set to zero.

\medskip
\mypar{Region Abstraction}
The following is the standard definition of regions for timed automata (cf.~\cite{AD1994}, def.~4.3).
Let $\T=\TAtuple$ be a timed automaton and
for any clock $x\in\clocks$ let $\cmax{x}$ denote the largest constant in any clock constraint involving $x$.
Two valuations $\nu,\nu'\in \nnreals^C$ are \emph{(region) equivalent} (write $\nu\req\nu'$) if all of the following hold.
\begin{enumerate}
\item For all $x\in\clocks$ either $ \floor{\nu(x)}=\floor{\nu'(x)} $
    or both $\nu(x)$ and $\nu'(x)$ are greater than $\cmax{x}$.
\item For all $x,y\in\clocks$ with $\nu(x)\le \cmax{x}$ and $\nu(y)\le \cmax{y}$,
    $\fractof{\nu(x)}\le \fractof{\nu(y)}$
    iff
    $\fractof{\nu'(x)}\le \fractof{\nu'(y)}$.
\item For all $x\in\clocks$ with $\nu(x)\le \cmax{x}$, $\fractof{\nu(x)}=0$ iff $\fractof{\nu'(x)}=0$.
\end{enumerate}
It follows that there are only finitely many equivalence classes w.r.t.~$\req$, called regions, for any given TA. Two configurations $(s,\nu)$ and $(s',\nu')$ are (region) equivalent,
write $(s,\nu)\req(s',\nu')$, if $s=s'$ and $\nu\req\nu'$.
Two runs are (region) equivalent if they have the same length and stepwise visit region equivalent configurations.
Let $\maxfractof{\nu} = \max\{\fractof{\nu(x)} \mid x\in\clocks\}$ denote the maximal fractional value of any clock in configuration $\nu$.
We will make use of the following two properties.

\begin{proposition}\
    \label{lem:paths}
\begin{enumerate}
    \item 
    \label{lem:paths-short-delay}
For any valuation $\nu$ and $d\le 1- \maxfractof{\nu}$
we have $\nu\req\nu+d$.

    \item 
    \label{lem:paths-short-path}
    Suppose that
    $(p,\nu)\req(p',\nu')$
    and let
    $\rho\in (\transitions\cup\nnreals)^*$
    satisfy
    $\dur{\rho} < 1-\maxfractof{\nu}$,
    $\dur{\rho} < 1-\maxfractof{\nu'}$
    and
    $(p,\nu)\step{\rho}(q,\mu)$.
    
    Then 
    $(p',\nu')\step{\rho}(q',\mu')$
    for some 
    $(q',\mu')\req(q,\mu)$.
\end{enumerate}
\end{proposition}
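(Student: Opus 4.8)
The plan is to prove both items by directly checking the three defining conditions of region equivalence under delays and discrete steps. For item~(1), fix $\nu$ and $d \le 1 - \maxfractof{\nu}$ and verify $\nu \req \nu+d$ condition by condition. The bound is exactly what prevents any fractional part from passing an integer: for every clock $x$ we have $\fractof{\nu(x)} + d \le \maxfractof{\nu} + (1-\maxfractof{\nu}) = 1$, so no clock crosses an integer boundary. This immediately gives condition~1 ($\floor{\nu(x)}$ is preserved, and clocks already above $\cmax{x}$ stay above and drop out of all conditions) and condition~2 (every relevant fractional part increases by the same $d$ without wrapping, so their order is unchanged). The delicate point is condition~3 (the set of clocks with zero fractional part) together with the boundary value $d = 1-\maxfractof{\nu}$: a clock at an integer value turns non-integer as soon as $d>0$, and a clock at the maximal fractional value reaches the next integer exactly when $d = 1-\maxfractof{\nu}$, so these are the cases I would examine most carefully.

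For item~(2) I would argue by induction on the number $k$ of steps in $\rho = z_1 z_2 \cdots z_k$, replaying the very same sequence from $(p',\nu')$ and maintaining as invariant that the two configurations reached after each prefix are region equivalent; the base case $k=0$ is the hypothesis $(p,\nu)\req(p',\nu')$. A discrete step $z_i = \tau = (s,g,a,r,s')\in\transitions$ is handled by two standard auxiliary facts about $\req$ that I would record first: region-equivalent valuations satisfy exactly the same guards, so $\tau$ is enabled on the primed run iff on the unprimed run, and applying the same reset set $r$ to region-equivalent valuations yields region-equivalent valuations; hence the same $\tau$ can be taken and equivalence is preserved. A delay step $z_i = e \in \nnreals$ is the crux: I need that adding the same delay $e$ to region-equivalent $\lambda\req\lambda'$ again gives $\lambda+e \req \lambda'+e$, provided $e$ is small enough that neither valuation crosses an integer. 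Note that, unlike in item~(1), here condition~3 is preserved even though each valuation changes region: a clock is integer-valued in $\lambda$ iff in $\lambda'$, and adding the same $e$ turns both into non-integers simultaneously, so the zero-fractional sets stay synchronised on the two runs.

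The step I expect to be the main obstacle is keeping the ``small delay'' precondition alive throughout the induction, since it is assumed only for the initial valuations, whereas resets along $\rho$ set clocks to $0$ and thereby alter the maximal fractional value of the current valuation. I would control this with the bookkeeping estimate that, after elapsed time $t \le \dur{\rho}$, every clock of the current valuation has fractional part at most $\maxfractof{\nu} + t$: a clock reset along the way has value at most $t < 1$, while an unreset clock has fractional part at most $\maxfractof{\nu} + t < 1$, so in neither case has a crossing occurred. Together with $\dur{\rho} < 1 - \maxfractof{\nu}$ this shows each remaining delay is strictly below $1 - \maxfractof{\lambda}$ for the current valuation $\lambda$, so the delay argument applies at every step; the symmetric estimate using $\dur{\rho} < 1-\maxfractof{\nu'}$ controls the primed run. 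Composing the equivalences preserved at each step then yields $(q',\mu')\req(q,\mu)$, as required.
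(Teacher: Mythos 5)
Your treatment of item (2) is correct and is essentially the paper's own argument: the paper proves it by induction on the length of $\rho$, using exactly your two facts, namely that region-equivalent configurations enable the same discrete transitions (and resets preserve equivalence), and that -- in the paper's words -- ``any delay decreases the duration of the remaining path by the same amount it increases clocks'', which is precisely your bookkeeping estimate that after elapsed time $t$ the current valuation $\lambda$ satisfies $\maxfractof{\lambda}\le\maxfractof{\nu}+t$, so that every remaining delay stays strictly below $1-\maxfractof{\lambda}$. Your additional observation that condition (3) of region equivalence stays synchronised between the two runs (both runs perform the same delays and the same resets, so clocks become integer-valued simultaneously) is left implicit in the paper, and it is indeed the right reason why item (2) is immune to the problem discussed next.

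For item (1), however, your proof stops exactly where it cannot be completed. You flag two delicate cases -- a clock at an integer value turning non-integer as soon as $d>0$, and a clock with the maximal fractional value reaching the next integer when $d=1-\maxfractof{\nu}$ -- as ``cases I would examine most carefully'', but a proof cannot end with that hedge, and in fact these cases are genuine counterexamples to the statement as literally written. Take a single clock $x$ with $\cmax{x}=1$ and $\nu(x)=0$: then $\maxfractof{\nu}=0$, so $d=\frac{1}{2}$ is permitted, yet $\fractof{\nu(x)}=0\neq\frac{1}{2}=\fractof{(\nu+d)(x)}$ violates condition (3), hence $\nu\not\req\nu+d$; taking instead $d=1$ gives $\floor{\nu(x)}=0\neq 1=\floor{(\nu+d)(x)}$, violating condition (1). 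The statement holds only under the strict inequality $d<1-\maxfractof{\nu}$ together with the proviso that no clock below its maximal constant has fractional part $0$ (or trivially when $d=0$). So your instinct about where the difficulty lies was exactly right, but the honest conclusion is that extra hypotheses are needed, not that a more careful examination will push the argument through. For comparison, the paper's entire proof of item (1) is the sentence ``immediate from the definition of regions'', which silently ignores the same boundary phenomena; your analysis is, if anything, more candid about where the statement is fragile.
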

\begin{proof}[Proof sketch]
    Part 1 is immediate from the definition of regions.

    Part 2 can be shown by induction on the length of $\rho$
    using the facts that region-equivalent configurations enable the same discrete transitions
    and that
    any delay decreases the duration of the remaining path by the same amount it increases clocks.
    \qedhere

\end{proof}

	\section{History-determinism}
	\label{sec:gfg}
	Informally, an  automaton or LTS is history-deterministic if the nondeterminism can be resolved on-the-fly, based only on the history of the word and run so far. We give two equivalent definitions, each being more convenient than the other for some technical developments. 
The first one extends the original definition of history-deterministic  automata (defined as good-for-games automata in \cite{HP06}) to all fair LTSs. 

\begin{definition}[History-determinism]
    A fair LTS $\ltsDef$ is \emph{history-deterministic} (from initial state $s_0\in V$) if there is a \emph{resolver} $r:E^*\x\Sigma\to E$ that maps every finite run and letter $a\in \Sigma$ to an $a$-labelled transition
    such that, for all words $w=a_0a_1\dots \in \lang(s_0)$
    the run $\rho$ defined inductively for $i>0$ by $\rho_{i+1} \eqdef \rho_ir(\rho_i, a_{i+1})$,
    is an accepting run on $w$ from $s_0$.
\end{definition}

Equivalently (from~\cite{BL19} for $\omega$-regular automata), a resolver corresponds exactly to a winning strategy for Player 2 in the following \textit{letter game}.
\begin{definition}[Letter game]
The letter game on a fair LTS $\ltsDef$ with initial state $s_0\in V$ is played between Players 1 and 2. At turn $i$:
\begin{itemize}
\item Player 1 chooses a letter $a_i\in \Sigma$.
\item Player 2 chooses an $a_i$ labelled edge $\tau_i \in E$.
\end{itemize}
A play is a pair $(w,\rho)$ where $w=a_0a_1\ldots$ is an infinite word and $\rho=\tau_0\tau_1...$ is a run on $w$. A play is winning for Player 2 if either $w\notin \lang(s_0)$ or $\rho$ is an accepting run on $w$ from $s_0$.
\end{definition}
In these and other games we consider, strategies for both players are defined as usual, associating finite histories (runs) to valid player choices.
Now winning strategies for Player~2 in the letter game
exactly correspond to resolvers for $\lts$ and vice-versa.
\begin{proposition}
Player 2 wins the letter game on a fair LTS $S$ if and only if $S$ is history-deterministic.
\end{proposition}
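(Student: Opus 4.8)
The plan is to make explicit the correspondence, already announced above, between Player~2's strategies in the letter game and resolvers, and then to check that Player~2's winning condition becomes, after this identification, a verbatim restatement of the defining property of a resolver. The one preliminary observation I would establish is that the two kinds of object have the same effective domain. Before Player~2's $i$-th move, the history of the letter game is an interleaving $a_0\tau_0a_1\tau_1\cdots a_{i-1}\tau_{i-1}a_i$; but each edge $\tau_j\in E$ carries its own label $a_j$, so the run prefix $\rho_i=\tau_0\cdots\tau_{i-1}\in E^*$ already determines the letters $a_0,\dots,a_{i-1}$. Hence, without loss of generality, a Player~2 strategy is a function of the pair $(\rho_i,a_i)\in E^*\x\Sigma$, which is exactly the signature of a resolver $r$. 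Moreover a legal Player~2 move is an $a_i$-labelled edge out of the state in which $\rho_i$ ends, which is precisely what $r$ must output for $\rho_i\,r(\rho_i,a_i)$ to remain a run; since $S$ is complete, such a move always exists, so every resolver is a legal Player~2 strategy and conversely.

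For the direction from history-determinism to the letter game, I would take a resolver $r$ and use it directly as a Player~2 strategy. Any play $(w,\rho)$ consistent with it has $\rho$ equal to the run that $r$ builds inductively on $w$, as in the definition of history-determinism. If $w\notin\lang(s_0)$ then Player~2 wins by definition of the letter game; otherwise $w\in\lang(s_0)$ and the resolver property guarantees that this run is accepting, so Player~2 wins again. Thus $r$, read as a strategy, is winning. Conversely, I would take a winning Player~2 strategy and read it as a resolver $r$. Given any $w=a_0a_1\cdots\in\lang(s_0)$, I consider the play in which Player~1 plays the letters of $w$ one at a time and Player~2 follows the strategy; the resulting run is exactly the run that $r$ builds on $w$. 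Since the strategy is winning and $w\in\lang(s_0)$, this play is won by Player~2, which forces the run to be accepting. Hence $r$ satisfies the resolver condition and $S$ is history-deterministic.

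The whole argument is a direct unfolding of the two definitions, so I do not expect a substantial obstacle; no game-theoretic input such as determinacy is required. The single point requiring care is the domain identification of the first paragraph, which rests on edges being individually labelled so that a run prefix recovers its own letter history, together with completeness of $S$ to guarantee that the required transitions (and hence legal moves) always exist.
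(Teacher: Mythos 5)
Your proof is correct and follows exactly the route the paper intends: the paper states this proposition without proof, treating it as the immediate observation that Player~2 strategies in the letter game and resolvers are the same objects (since edges carry their labels, histories collapse to pairs in $E^*\times\Sigma$), with the winning condition matching the resolver condition verbatim. Your write-up simply makes this identification explicit, including the completeness point, so there is nothing to add.
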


While history-determinism is known to relate to fair simulation, in the sense that history-deterministic automata simulate deterministic ones for the same language~\cite{HP06}, their relation has so far not been studied in more details. Below we show that history-determinism can equivalently be characterised in terms of fair simulation.

\begin{theorem}\label{thm:triangle}
For every fair LTS $\lts$ %
and initial state $q$
the following are equivalent:
\begin{enumerate}
\item \label{triangle-hd} $\lts$ is history-deterministic.
\item \label{triangle-sim}For all complete fair LTS $\lts'$ with initial state $q'$, %
    $q' \langincl q$ if and only if $q' \fairsim q$. 
\end{enumerate}
\end{theorem}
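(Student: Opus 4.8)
The plan is to observe first that one half of the biconditional in (\ref{triangle-sim}) is free: fair simulation always implies language inclusion (recalled in the preliminaries), so $q'\fairsim q$ implies $q'\langincl q$ for every $\lts'$. Consequently statement~(\ref{triangle-sim}) is equivalent to the single forward implication ``$q'\langincl q \implies q'\fairsim q$ for every complete fair LTS $\lts'$'', and it is this implication that I would relate to history-determinism. The whole argument then reduces to matching up the fair simulation game with the letter game.

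For $(\ref{triangle-hd})\Rightarrow(\ref{triangle-sim})$ I would fix a resolver $r$ for $\lts$ and, given any complete $\lts'$ with $\LTSlang(q')\subseteq\LTSlang(q)$, let Player~2 play the fair simulation game from $(q',q)$ by feeding into $r$ the labels chosen by Player~1. Concretely, while Player~1 builds a path from $q'$ reading a word $w=a_0a_1\dots$, Player~2 maintains the run $\rho$ in $\lts$ from $q$ with $\rho_{i+1}\eqdef\rho_i\, r(\rho_i,a_{i+1})$; completeness guarantees these extensions exist. If Player~1's run is accepting, then $w\in\LTSlang(q')\subseteq\LTSlang(q)$, so by the defining property of a resolver $\rho$ is accepting as well. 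Hence Player~2 wins and $q'\fairsim q$, which (together with the free direction) gives the full biconditional.

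For $(\ref{triangle-sim})\Rightarrow(\ref{triangle-hd})$ the key step is to exhibit a single witness $\lts'$ that turns the fair simulation game into the letter game on $\lts$. I would take $\lts'$ to be the one-state fair LTS over $\ltsAlphabet$ with a self-loop $e_a$ for every $a\in\ltsAlphabet$ and acceptance condition $\acc'\eqdef\set{e_{a_0}e_{a_1}\cdots : a_0a_1\cdots\in\LTSlang(q)}$. This $\lts'$ is complete, and since the unique path over a word $w$ is accepting exactly when $w\in\LTSlang(q)$, we get $\LTSlang(q')=\LTSlang(q)$ and in particular $q'\langincl q$. Applying~(\ref{triangle-sim}) yields $q'\fairsim q$, so Player~2 wins the fair simulation game from $(q',q)$. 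In that game Player~1's moves are precisely choices of letters (the self-loops), Player~1's run is accepting iff the played word lies in $\LTSlang(q)$, and Player~2 answers with matching edges of $\lts$ from $q$; this is verbatim the letter game on $\lts$. A winning Player~2 strategy is therefore a winning letter-game strategy, which by the letter-game characterisation above means $\lts$ is history-deterministic.

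The only delicate point, and what I expect to be the crux rather than any calculation, is the construction of this letter-revealing witness. It relies on the fact that acceptance conditions may be \emph{arbitrary} subsets of $\ltsTransitions^\omega$, so that $\LTSlang(q)$ can be encoded directly as $\acc'$ even when it is not $\omega$-regular; with that freedom in hand, both directions are short, and the remaining verifications (completeness of $\lts'$, the bijection between simulation-game and letter-game histories) are routine.
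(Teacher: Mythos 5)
Your proposal is correct and takes essentially the same route as the paper: direction (\ref{triangle-hd})$\implies$(\ref{triangle-sim}) uses the resolver as a Player~2 strategy that ignores the opponent's configuration, and direction (\ref{triangle-sim})$\implies$(\ref{triangle-hd}) instantiates $\lts'$ as the single-state LTS with a self-loop per letter whose acceptance condition encodes exactly $\LTSlang(q)$, so that the fair simulation game becomes the letter game. The paper's own proof of the second direction is precisely this witness construction (stated more tersely), and your explicit remark that arbitrary acceptance conditions $\acc'\subseteq \ltsTransitions^\omega$ are what make the encoding possible correctly identifies the point on which it hinges.
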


\begin{proof}[Proof (\textbf{\ref{triangle-hd})$\implies$(\ref{triangle-sim})}]
Fair simulation $q \fairsim q'$ trivially implies $q \langincl q'$ by definition.

For the other implication, assume that $q \langincl q'$.
By assumption (\ref{triangle-hd}) there exists a resolver, i.e.~a winning strategy in the letter game.
    Player~2 can win the fair simulation game by ignoring her opponent's configuration and moving according to this resolver. By the completeness assumption on $\lts'$,
    Player~1 can never propose a letter for which there is no successor in $\lts'$. So each player produces an infinite run on the same word $w$ and the run produced by Player~2 is the same as that produced by the resolver in $\lts'$. 
    If $w\in\lang(q)$ then it is in $\lang(q')$ and Player~2's run accepts.
    If $w\notin\lang(q)$ then Player~2 wins due to the fairness condition.
    In both cases she wins the fair simulation game and therefore $q \fairsim q'$.

    \textbf{(\ref{triangle-sim})$\implies$(\ref{triangle-hd})}
    If condition (\ref{triangle-sim}) holds for all complete fair LTSs then $q$ can fairly simulate the one
    consisting of a single state with self-loops for all transitions of $\lts$ whose acceptance condition contains exactly all accepting runs from $q$. Then the strategy for Player~2 in the fair simulation game
    can be used as a strategy in the letter game.
\end{proof}

	\section{Expressivity}
	\label{sec:expressivity}
	We consider the expressivity of history-deterministic TA in comparison to deterministic and fully non-deterministic variants, for different accepting conditions.
In particular, we prove (in \cref{sec:expressivity:safety-reach}) 
that for safety and reachability acceptance, HD TA can be determinized,
whereas for coB\"uchi and above they cannot (\cref{sec:expressivity:d-hd}).
Moreover, we show (in \cref{sec:expressivity:hd-nd}) that Parity HD TA are strictly less expressive than fully non-deterministic TA, even those with only a reachability acceptance.

\subsection{Safety and Reachability HD TA are determinizable}
\label{sec:expressivity:safety-reach}
We start by showing that history-deterministic timed automata with safety acceptance are determinizable. To do so, we show (in \cref{lem:region-resolver}) that these automata have simple resolvers, which only depend on the equivalence class of the current clock configuration with respect to the region abstraction. That is to say, the resolver only needs to know the integer part of clock values (up to the maximal value that appears in clock constraints) and the ordering of their fractional parts. We can then use such a simple resolver to determinize the automaton by adding guards that restrict transitions so that the automaton can only take one transition per region, as dictated by the resolver.

Before proving this formally, we first consider the reachability timed automaton depicted in \cref{fig:r-hd-ex}.
The accepted language consists of timed words that have a prefix of $b$ and $c$ events, followed by an infinite sequence of $a$ events
such that
within one unit of time 
preceding the second $a$, 
there is either a $b$ or a $c$ event, 
but not both.
The automata is history-deterministic: when reading the first $a$ event, the resolver goes to $q_2$ if the previous letter was a $b$, and to state $q_3$ otherwise.
Notice that this decision can be made only based on the current valuations  and does not require the entire history, because the values of clocks satisfy $x\le y$ iff the last letter was $b$.
The automaton can therefore be determinized by adding a guard $x-y\le 0$ to the blue edge (to $q_2$) and its negation to the red edge (to $q_3$).

\begin{figure}[t]
	\centering
	\includegraphics{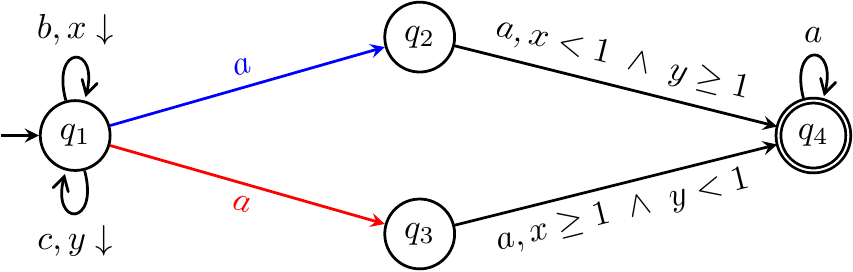}
	
	\caption{A history-deterministic timed reachability automaton where a resolver must distinguish between valuations
            where $x>y$ (go to $q_3$) and those where $x\le y$ (go to $q_2$).
        }
	\label{fig:r-hd-ex}
\end{figure}

\begin{definition}[Run-trees]
    A \emph{run-tree} on a timed word $u=(a_0,t_0)(a_1,t_1)\ldots$ from TA configuration $(s_0,\nu_0)$ is a tree
    where nodes are labelled by configurations,
    and edges by transitions such that
    \begin{enumerate}
        \item 
    The labels along every branch form a run on $u$ from $(s_0,\nu_0)$.
\item It is complete wrt.~discrete steps:
    suppose %
    the path leading towards some node is labelled by a run $\rho$
    which reads $\tword{\rho}=(a_0,t_0)\ldots(a_i,t_i)$,
    ends in a configuration $(s,\nu)$,
    and has $\dur{\rho}=t_{i+1}$. %
    Then 
    for 
    every transition $\tau=(s,g,a_{i+1},r,s')\in \transitions$
    with $\nu\models g$
    and so that $(s,\nu)\step{\tau}(s',\nu')$,
    there is a $\tau$-labelled edge to a new node labelled by $(s',\nu')$.
    \end{enumerate}
    A run-tree is \emph{reduced} if all its branches are. That is, there are no consecutive delay steps.
\end{definition}
Notice that for every initial configuration and timed word, there is a unique reduced run-tree,
all of whose branches are runs on the word (since we have no deadlocks),
and vice versa, all reduced runs on the word appear as branches on the run-tree.

We extend the region equivalence from configurations to run-trees in the natural fashion:
two run-trees are equivalent if they are isomorphic and all corresponding configurations are equivalent.
That is, they can differ only in fractional clock values and the duration of delays.

The following is our key technical lemma.

\begin{restatable}{lemma}{LemRuntreeEq}
\label{lem:runtree-eq}
Consider two region equivalent configurations
$(s,\nu)\req(s',\nu')$.

For every timed word $u$
there is a timed word $u'$
so that the reduced run-tree on $u$
from $(s,\nu)$
is equivalent to 
the reduced run-tree on $u'$ from
$(s',\nu')$.
\end{restatable}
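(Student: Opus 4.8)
The plan is to construct $u'$ and prove the equivalence of the two reduced run-trees by induction on their depth, extending $u'$ one timestamp at a time. First I would recall the shape of a reduced run-tree on $u=(a_0,t_0)(a_1,t_1)\dots$: its levels alternate between delay steps (which, by determinism of delays, do not branch) and discrete steps (which branch, with one child per transition reading the current letter whose guard is satisfied), the delay preceding each $a_i$ being fixed by the timestamps. Since region equivalent configurations satisfy the same guards and hence enable the same discrete transitions (the property already used in the proof of \cref{lem:paths}), as long as I keep corresponding configurations region equivalent the two trees are forced to have the same branching structure. The only freedom, and the only thing to get right, is therefore the choice of the delays of $u'$, i.e.\ the timestamps $t'_i$.

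The key point is that per-configuration region equivalence is \emph{not} a strong enough invariant to carry through the induction: at a delay level every configuration of the current frontier must be advanced by a single common delay $d'$ (the gap $t'_i-t'_{i-1}$ is shared by all branches of $u'$), and one $d'$ can simultaneously reproduce the unprimed delay on all branches only if the \emph{relative} order in which clocks across different branches cross integer values agrees between the two trees. I would therefore strengthen the induction hypothesis to a \emph{global} region equivalence of the whole frontier: viewing the tuple of all configurations at a given level as a single valuation over a disjoint union of clock-copies, I require that the two frontiers be isomorphic as trees, have matching integer parts (up to $\cmax{x}$), matching sets of clocks with zero fractional part, and the \emph{same global ordering} of all fractional parts across all branches. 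For a single starting configuration this global equivalence is exactly the given $(s,\nu)\req(s',\nu')$, so the base case holds.

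It then remains to show that each kind of step preserves the global equivalence. For a discrete level the two frontiers enable the same transitions, and applying the matching transitions resets corresponding clocks to $0$; since zero fractional parts sit at the bottom of the global order and the remaining clocks are unchanged, the global ordering, integer parts, and zero-sets are all preserved, so the new (larger) frontier is again globally equivalent. The main obstacle is the delay level: given the unprimed delay $d$ I must produce a single $d'$ that advances the whole primed frontier to a globally equivalent one. Here I would invoke the standard time-successor (time-abstract bisimulation) property of regions, applied to the global valuation: because the global fractional orderings coincide, the ``wrap-around'' events (a clock, in some branch, reaching an integer value) occur in the same order as $d$ and $d'$ grow, so choosing $d'$ to land in the interval of the global region-successor reached by delaying the unprimed frontier by $d$ yields a globally equivalent frontier, with $d'\ge 0$ (and $d'>0$ whenever $d>0$). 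Iterating over all levels defines the timestamps $t'_i$ consistently, and taking the limit over the depth produces the required timed word $u'$ (finite or infinite, as $u$ is) whose reduced run-tree is equivalent to that of $u$.
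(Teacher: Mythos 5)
Your proposal is correct, and it rests on the same key insight as the paper's proof: per-configuration region equivalence is too weak an induction invariant, and one must strengthen it to a \emph{global} equivalence of the entire frontier that fixes the relative order of all fractional parts across all branches (the paper calls two such frontiers \emph{aligned}; your disjoint-union-of-clock-copies formulation is the same notion). Where you genuinely differ is in how the delay step is discharged. The paper does not try to match the letter gap $t_{i+1}-t_i$ with a single primed delay: it builds a possibly \emph{non-reduced} tree, always advancing both frontiers by the canonical delays $1-\max F$ and $1-\max F'$ (pushing the largest fractional value up to the next integer), verifies preservation of alignment by an explicit computation on fractional-value indices, argues progress (at most $\card{F}+1$ consecutive delay extensions before integral clock values increase), and only at the end collapses consecutive delays to recover the reduced trees on both sides. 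You instead work with the reduced tree directly and invoke the time-successor property of regions (region equivalence is a time-abstract bisimulation) applied to the global valuation, obtaining one matching delay $d'$ per letter gap. That is a legitimate shortcut: the successor property is standard and applies verbatim to the disjoint union of clock copies; but note it is not among the facts the paper establishes (\cref{lem:paths} only covers delays shorter than $1-\maxfractof{\nu}$), so a self-contained write-up would need to prove or cite it---the paper's canonical-delay bookkeeping is essentially an inlined proof of exactly this property. Your route buys a shorter argument with no detour through non-reduced trees; the paper's buys self-containment from its stated propositions.
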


\begin{proof}
    It suffices to show that for some (not necessarily reduced) run-tree on $u$ from $(s,\nu)$
    there exists some equivalent run-tree from $(s',\nu')$ as this implies the claim by collapsing all consecutive delay steps and thus producing the reduced tree on both sides.

    We proceed by stepwise uncovering the run-tree from $(s,\nu)$
    for ever longer prefixes of $u$
    and constructing a corresponding equivalent run-tree from $(s',\nu')$.
    The intermediate finite trees we build have the property that all branches have the same duration.
    In each round we extend all current leafs, in both trees, either by
    \begin{enumerate}
        \item all possible non-deterministic successors (for the letter prescribed by the word $u$),
    in case the duration of the branch is already equal to the next time-stamp in $u$, or
\item one successor configuration due to a delay, which must be \emph{the same on all leafs}.
    \end{enumerate}
    For the second case, the delays used to extend the two trees need not be the same
    because we only want to preserve region equivalence.
    Also, the delay chosen for the tree rooted in $(s,\nu)$ need not follow the
    timestamps in $u$ but can be shorter, meaning the run-tree may not be reduced.
    The difficulty lies in systematically choosing the delays to ensure that the two trees remain equivalent, and secondly, that in the limit this procedure generates a run-tree on the whole word $u$ from $(s,\nu)$.
    Together this implies the existence of a corresponding word $u'$ and a run-tree from $(s',\nu')$.
    
    \medskip
    \textbf{Invariant.}
    We propose a stronger invariant, namely that the relative orderings of the fractional values
    \emph{in all leafs are the same on both sides}.
    To be precise, 
    let's reinterpret a clock valuation as a function $\nu:\clocks\x\N\to\{\bot\}\cup[0,1)$,
    that assigns to every clock and possible integer value either a fractional value between $0$ and $1$,
    or $\bot$ (indicating that the given clock does not have the given integer value). 
    This way for every clock $x$ there is exactly one $n\in\N$ with $\nu(x,n)\neq\bot$
    and the image $\nu(\clocks\x\N)$ has at most $\card{\clocks}+1$ different elements.
    For any ordered set $F=\{\bot<f_1<f_2<\cdots<f_\ell\}\supseteq\nu(\clocks\x\N)$ of fractional values, we can thus represent $\nu$
    as a function $\hat{\nu}:\clocks\x\N\to \{\bot,1,\ldots, \ell\}$ that, instead of exact fractional clock values only yields their index in $F$ (and maps $\bot\mapsto\bot$).

    Consider some run-tree with leafs
    $(q_1,\nu_1)(q_2,\nu_2)\cdots(q_\ell,\nu_\ell)$
    with combined fractional values $F=\bigcup_{j=1}^\ell\nu_j(\clocks\x\N)$,
    and an equivalent run-tree with leafs
    $(q'_1,\nu'_1)(q'_2,\nu'_2)\cdots(q'_\ell\nu'_\ell)$
    with combined fractional values $F'=\bigcup_{j=1}^\ell\nu_j'(\clocks\x\N)$.
    The two trees are \emph{aligned} if
    for all $1\le j \le \ell$, $\hat{\nu}_j = \hat{\nu}'_j$.
    Notice that this still allows the two trees to differ on their exact fractional values
    but now they must agree on the relative order of all contained clocks on leafs, and in particular which ones are maximal and therefore the closest to the next larger integer.
    We will always select a delay of $1-\max\{F\}$ and $1-\max\{F'\}$, respectively, in step 2 above.

    To show the claim we produce the required run-trees starting in $(s,\nu)\req(s',\nu')$.
    These are in particular two aligned run-trees on the empty word.
    
    Assume two aligned trees as above,
    where leafs have fractional values
    $F=\{\bot<f_1<f_2<\cdots<f_m\}$ and 
    $F'=\{\bot<f'_1<\cdots<f'_m\}$, respectively,
    and assume 
    that the tree rooted in $(s,\nu)$ reads a strict prefix $(a_0,t_0),\ldots (a_i,t_i)$ of $u$.
    
    \smallskip
    \emph{Case 1: } the duration of all branches in the first tree equals $t_{i+1}$, the timestamp of the next symbol in $u$.
    Then we extend each leaf in both trees by all possible $a_{i+1}$-successors.
    This will produce two aligned trees because
    each leaf configuration in one must be region equivalent to the corresponding configuration in the other,
    and therefore satisfies the same guards, enabling the same $a_{i+1}$-transitions leading to equivalent successors. Note also that all branches in each tree still have the same duration, as no delay step was taken.

    \smallskip
    \emph{Case 2: } the duration of all branches in the first tree is strictly less than $t_{i+1}$.
    Then we extend all leafs in the tree from $(s,\nu)$ by a delay of duration $d=1-f_m$
    and all leafs in the other tree by a delay of duration $d'=1-f'_m$.
    Naturally, this produces exactly one successor for each former leaf.
    The sets of new fractional values on leafs are
    $\bigcup_{j=1}^\ell(\nu_j+d)(\clocks\x\N) = \{\bot<0<f_1+d<\cdots<f_{m-1}+d\}$,
    and for any former leaf $(q,\mu)= (q_j,\nu_j)$ for $0\le j\le \ell$, extended by a delay $(q,\mu)\step{d}(q,\mu+d)$,
    we have 
    \begin{equation}
        \label{eq:runtree-e1}
        \hat{\mu}(x,n-1) = m \iff \widehat{(\mu+d)}(x,n) = 0
    \end{equation}
    and 
    \begin{equation}
        \label{eq:runtree-e2}
        \hat{\mu}(x,n) = j < m \iff \widehat{(\mu+d)}(x,n) = j+1\le m
    \end{equation}
    Analogous equivalences hold for the corresponding step
    $(q,\mu')\step{d'}(q,\mu'+d')$ on the other tree.
    Notice that the two cases above are exhaustive as 
    again, for all $x\in\clocks$ there is exactly one $n\in\N$ with
    $\mu(x,n)\neq\bot$.
    We aim to show that $\widehat{(\mu+d)}=\widehat{(\mu'+d')}$.
    Consider any $x\in\clocks$ and $n\in\N$. We have that
    \begin{align*}
    \widehat{(\mu+d)}(x,n) = m  %
        &\stackrel{(\ref{eq:runtree-e1})}{\iff}
        \hat{\mu}(x,n+1)=0\\
        &\stackrel{(IH)}{\iff}
        \hat{\mu'}(x,n+1)=0\\
        &\stackrel{(\ref{eq:runtree-e1})}{\iff}
    \widehat{(\mu'+d')}(x,n) = m 
    \end{align*}
    and
    \begin{align*}
    \widehat{(\mu+d)}(x,n)=j < m%
        &\stackrel{(\ref{eq:runtree-e2})}{\iff}
        \hat{\mu}(x,n)=j+1\\
        &\stackrel{(IH)}{\iff}
        \hat{\mu'}(x,n)=j+1\\
        &\stackrel{(\ref{eq:runtree-e2})}{\iff}
        \widehat{(\mu'+d')}(x,n) = j<m
    \end{align*}
    It follows that $\widehat{(\mu+d)}=\widehat{(\mu'+d')}$ which means that the two trees are again aligned, as required.

    To see why this procedure produces a run-tree on $u$ (and an equivalent run-tree on some word $u'$),
    observe that 
    there can be at most $\card{F}+1$ many consecutive delay extensions
    according to step 2) before all integral clock values are strictly increased.
\end{proof}

We are now ready to show that history-deterministic TA with safety acceptance have simple resolvers based on the region abstraction.

We call a resolver $r$
\emph{region-based} if it
bases its decision only on the current letter and region.
That is, 
if for any letter $a\in\Sigma$
and any two finite runs $(\iota,0)\step{\rho}(s,\nu)$ and $(\iota,0)\step{\rho'}(s',\nu')$
consistent with $r$ and so that $(s,\nu)\req(s',\nu')$,
it holds that $r(\rho,a)=r(\rho',a)$.
 
 \begin{lemma}
     \label{lem:region-resolver}
     Every history-deterministic TA with safety acceptance has a region-based resolver.
     The same is true for history-deterministic TA over finite words.
\end{lemma}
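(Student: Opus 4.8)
The plan is to reduce to the letter game and to reason about its winning region. Since the automaton is history-deterministic, by the characterisation of history-determinism through the letter game Player~2 wins the letter game from $(\iota,0)$. Write $\mathcal W$ for the set of configurations from which Player~2 wins the safety letter game, so $(\iota,0)\in\mathcal W$, and write $\mathit{Live}=\set{(s,\nu):\lang(s,\nu)\neq\emptyset}$ for the configurations admitting at least one accepting (safe) run. The first thing I would record is that for safety acceptance a run is accepting \emph{iff} it stays inside $\mathit{Live}$: on the one hand $\mathit{Live}\subseteq F$, and on the other hand every suffix of a safe run is itself a safe run, so a safe run only visits configurations with a safe future. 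Hence Player~2's task is exactly to keep the run in $\mathit{Live}$ as long as the played word can still be completed inside the language. I would then prove a local description of a winning move: from any $(s,\nu)\in\mathcal W$, after any delay to $(s,\nu+d)$ and any letter $a$, Player~2 has an $a$-transition to some $(s'',\nu'')\in\mathcal W$ whose language contains every continuation that is safe after reading $a$ from $(s,\nu+d)$; in particular $(s'',\nu'')\in\mathit{Live}$ whenever such a continuation exists. This is obtained by confronting a winning strategy with a continuation witnessing that residual.

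The key step --- and the one I expect to be the main obstacle --- is to show that $\mathcal W$, $\mathit{Live}$, and language membership are all invariant under region equivalence, i.e.\ that $\mathcal W$ and $\mathit{Live}$ are unions of regions. This is precisely what \cref{lem:runtree-eq} is for. Given $(s,\nu)\req(s',\nu')$ and a word $u$ from $(s,\nu)$, the lemma supplies a word $u'$ from $(s',\nu')$ whose reduced run-tree is isomorphic, with region-equivalent configurations at corresponding nodes. Since the safety condition is region-invariant, safe branches correspond to safe branches, so $u$ has a safe run from $(s,\nu)$ iff $u'$ has one from $(s',\nu')$: applying this to nonemptiness witnesses gives region-invariance of $\mathit{Live}$, and transporting Player~2's branch-selection (i.e.\ her winning strategy) across the isomorphism gives region-invariance of $\mathcal W$. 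The delicate point is that Player~1 also chooses delays, so one cannot copy a strategy step-for-step with identical timestamps; the run-tree lemma is exactly what lets the two sides use \emph{different} delays while preserving the region structure, and hence the safety and language status of every run.

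With invariance in hand I would build the resolver. Region equivalence is a bisimulation for discrete steps --- a syntactic transition enabled at a configuration is enabled at every region-equivalent one, and its target region is determined by the source region and the transition --- and guards are region-invariant. So for every region $R\subseteq\mathcal W$ and letter $a$ I can fix a single syntactic $a$-transition $\tau_{R,a}$ realising the winning move above: it exists by the local characterisation, and it works uniformly across all of $R$ because both "target lies in $\mathcal W$'' and "the target's language contains the residual'' are region-invariant; for regions disjoint from $\mathcal W$ the choice is irrelevant and may be fixed arbitrarily. Setting $r(\rho,a)\eqdef\tau_{R,a}$, where $R$ is the region of the configuration reached by $\rho$, gives a resolver depending only on the current region and letter, hence region-based. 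Its correctness is the winning-strategy property: started in $(\iota,0)\in\mathcal W$ it maintains membership in $\mathcal W$ together with the residual containment, so on any $w\in\lang(\iota,0)$ each reached configuration keeps a language containing the remaining suffix of $w$ and therefore stays in $\mathit{Live}$, making the produced run accepting. The finite-word case is entirely analogous, replacing infinite safe runs by the finite reachability condition and using the finite-word version of \cref{lem:runtree-eq} in the same winning-region argument.
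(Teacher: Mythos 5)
Your architecture is genuinely different from the paper's, but it contains a gap at its load-bearing step. The paper never works with the winning region of the letter game: it fixes, per region, one representative configuration $m_R$ reached by an $r$-consistent run of the \emph{original} resolver $r$, defines the region-based $r'$ by copying what $r$ does at $m_R$, and proves correctness by contradiction, using safety to pick the \emph{last} configuration of a failing run from which the remaining suffix is acceptable, and then invoking \cref{lem:runtree-eq} only to transfer acceptability of that \emph{single fixed word} between the region-equivalent configurations $(s,\nu)$ and $m_R$. That is exactly what the lemma, as stated, provides. You instead need region-invariance of the letter-game winning region $\mathcal W$, i.e.\ you must transfer a \emph{winning strategy} from $(s,\nu)$ to a region-equivalent $(s',\nu')$, and your justification (``transporting Player~2's branch-selection across the isomorphism'') does not follow from the lemma's statement. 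The lemma assigns to each complete word $u$ \emph{some} word $u'$ with an equivalent run-tree; it gives no causality: the assignment need not be prefix-monotone, and two words sharing a prefix may be sent to words (and isomorphisms) that disagree on that prefix. A strategy is an online object --- Player~1 reveals his timed word letter by letter in the game from $(s',\nu')$ --- so per-word branch selections glued together need not form a strategy at all. Nor can you drop $\mathcal W$ from your invariant: cautiousness alone is not inductive, since a residual-containing transition can lead to a configuration whose language is large but from which the letter game is lost (and hence no cautious move need exist at the next step); this is precisely why your local property pairs residual containment with membership in $\mathcal W$. So the region-invariance of $\mathcal W$ is essential to your proof, and it is the one thing you have not established.

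The gap is repairable, but only by going beyond the statement of \cref{lem:runtree-eq}: either extract from its \emph{proof} the fact that the correspondence is built round by round (the alignment invariant), yielding a causal correspondence that does support strategy transport; or avoid transport altogether by a fixpoint argument, showing that the largest set of configurations closed under ``for every delay and letter with nonempty residual there is a cautious transition back into the set'' equals $\mathcal W$ for safety and is a union of regions --- the latter because the \emph{one-step} cautiousness condition is region-invariant, which does follow from the lemma as stated via the transition-label-preserving isomorphism. (The paper's representative-configuration trick is a third repair: it localises the error of the aspiring resolver to a single step, where per-word transfer plus the resolver property of $r$ suffices.) A smaller issue: your local property can fail at configurations with $\lang(s,\nu)=\emptyset$, which lie in $\mathcal W$ trivially but may have all $a$-successors outside $\mathcal W$; you should restrict the invariant to positions where the residual of the word played so far is nonempty, after which the resolver's choices are irrelevant.
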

\begin{proof}
Let $r$ be a resolver for a history-deterministic safety TA $\T$. %

We now build a resolver that only depends on the region of the current configuration. To do so, we choose a representative configuration within each region, which will determine the choice of the resolver for the whole region:
For every region $R\in [\states\x\valuations]_\req$, consider the configurations that are reached by at least one $r$-consistent run, and mark one of them $m_R$, if at least one exists, along with one $r$-consistent run $\rho_R$ leading to the configuration $m_R$.
 
  Let $r'$ be the aspiring resolver that, when reading a letter $a$, considers the region $R$ of the current configuration, and  follows what $r$ does when reading $a$ after the marked $r$-consistent run $\rho_R$.
  We set $r'(\rho,a)\eqdef r(\rho_R,a)$ where $R$ is the final region of the prefix-run $\rho$. 
  Note that $r'$ is well defined since it always follows transitions consistent with some $r$-consistent run
  and can therefore only visit marked regions.

We claim that $r'$ is indeed a resolver. Towards a contradiction, assume that it is not a resolver, that is, there is some word $w\in \lang(\T)$ for which $r'$ builds a rejecting run.
As $\T$ is a safety automaton, we can consider the last configuration $(s,\nu)$ along this run from which the remaining suffix $au$ of $w$ can be accepted
\footnote{
The fact that a rejecting run produced by a non-resolver must ultimately
reach a configuration that cannot accept the remaining word
also holds for TAs over finite words.
However, this is \emph{not} the case for infinite words defined via reachability acceptance.
}.

Suppose that $\rho$ is the prefix of the run
built by $r'$ on $w$, which ends in
$(s,\nu)$  and let $\tau = r'(\rho,a)$ be the $a$-transition chosen by $r'$.
We know that $\tau$ leads from $(s,\nu)$ to some configuration $(s',\nu')$ 
from where $u$ is not accepted.
By definition of $r'$, there must be a marked configuration $m_R\req (s,\nu)$ 
reached by some run $\rho_R$ %
from which $r$ chooses the same $a$-transition $\tau$. 
By \cref{lem:runtree-eq} there must be a word $au'$ so that
the run-tree on $au$ from $(s,\nu)$ is equivalent to that on $au'$ from $m_R$.
This means that $au'\in\lang(m_R)$ and, as $r$ is a resolver,
there must be an accepting run that begins with a step $(m_R)\step{\tau}(m'_R)$.
We derive that $u$ also has an accepting run from $(q,\nu)$ that begins with $\tau$, contradicting the assumption that $(q, \nu)$ is the last position on the run $r'$ built on $w$ so that its suffix can be accepted.
Therefore, $r'$ is indeed a resolver.
\end{proof}

Using a technique of \cite{BL22} we can show the existence of region-based resolvers also for HD timed reachability automata.

\begin{lemma}
    \label{lem:positional-reach-resolvers}
Every history-deterministic TA with reachability acceptance has a region-based resolver.
\end{lemma}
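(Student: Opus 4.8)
The plan is to mimic the representative-based construction of \cref{lem:region-resolver}, but to repair the one place where reachability differs from safety. For safety, a run built by a non-resolver that rejects a word of the language must reach a configuration from which the remaining suffix can no longer be accepted, giving the ``point of no return'' that drives the contradiction. For reachability there is no such point (cf.\ the footnote in \cref{lem:region-resolver}): the resolver may keep a target state reachable at every step yet never actually visit it. So, on top of making region-consistent choices, a region-based resolver must be forced to make genuine \emph{progress} towards a target state.

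First I would move to the finite region abstraction. By \cref{lem:runtree-eq}, region-equivalent configurations have equivalent reduced run-trees on suitably matched words; consequently, membership of a timed word in $\lang$ and reachability of a target region are invariant under $\req$. This lets me recast the letter game on $\T$ as a game on the finite arena whose positions are regions: Player~1 supplies a letter together with a time-elapse (a walk through time-successor regions), Player~2 answers with an enabled transition from $\transitions$. Along any $r$-consistent history this region game is faithful to the letter game up to $\req$, so the existing resolver $r$ witnesses that Player~2 wins it, and, conversely, any Player~2 strategy that depends only on the current region lifts back to a region-based resolver on $\T$.

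The heart of the argument is to extract such a region-only winning strategy, and this is where I would invoke the token-game technique of~\cite{BL22}. The winning condition for Player~2 in the region game is ``visit a target region, unless the word played is not in $\lang$''; once the obligation is triggered (a target region has become reachable, witnessed by the set of configurations reachable on the word read so far) this is a reachability objective on a finite arena. Reachability games are positionally determined, and a positional strategy equips each region with a well-founded rank, namely its attractor level, which the strategy strictly decreases at every discrete step. This decreasing rank is precisely the progress guarantee absent from the safety argument, and it ensures that on every word of $\lang(\T)$ the resolver reaches a target region within boundedly many discrete steps. The delicate point, and the one I expect to be the main obstacle, is that the positional strategy must depend on the \emph{region of the resolver's own run only}, and not on the auxiliary reachable set: this is exactly where history-determinism is used, through the token-game analysis of~\cite{BL22}, to rule out that Player~1 can keep the word in $\lang$ while blocking the resolver's run from the target. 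Having fixed such a strategy, I would finally set $r'(\rho,a)$ to be the transition it prescribes for the region of the final configuration of $\rho$ and the letter $a$; by faithfulness of the abstraction $r'$ is a genuine resolver, and it is region-based by construction.
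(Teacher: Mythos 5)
Your diagnosis of where safety and reachability differ is exactly right, and your instinct to make progress towards the target enforceable is the right one, but the construction you propose has a genuine gap at its core: the ``region game'' you want to solve does not faithfully capture the letter game. To express the winning condition ``the word played is in $\lang(\T)$'' you track \emph{the set of configurations reachable on the word read so far}. That is a subset construction over timed configurations, and its region abstraction is not finite-state in any faithful way: the correlations between the fractional parts of clocks \emph{across different runs} are lost, and these correlations are exactly what decides membership. This is not a technicality --- it is the very obstruction that makes universality of timed automata undecidable, and it is witnessed inside this paper by the language $L'$ of \cref{thm:hd-nd} (two $a$'s at unit distance): a nondeterministic reachability TA recognizes it, but no single run's clocks can track all candidate time points, which is why no HD TA exists for it. So the finite reachability game on which your attractor-rank argument is supposed to run does not exist as described, and the ``delicate point'' you defer to the token-game analysis of~\cite{BL22} is precisely the missing proof, not a citable step.

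The paper closes this gap differently, and the difference is instructive. It defines a configuration $(s,\nu)$ to be \emph{almost final} if it is universal ($\lang(s,\nu)=\Sigma_T^\omega$) and Player~2 wins the letter game from it; this is a property of a \emph{single} configuration, and it is shown to be a region property using \cref{lem:runtree-eq} (for universality) together with the fact that from a universal configuration the letter game's escape clause is vacuous, so it becomes a timed reachability game, which admits region-based winning strategies by \cref{lem:tg-region-strategies}. Any resolver must sit in an almost-final configuration as soon as \emph{some} run on the prefix reaches one. The proof then splits into two phases: a finite-word TA $\T'$ whose acceptance is ``reach an almost-final configuration'' inherits history-determinism from $\T$, so the finite-word case of \cref{lem:region-resolver} gives a region-based strategy $\sigma_1$; from almost-final configurations a region-based strategy $\sigma_2$ exists as above; and the two are concatenated. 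The crucial payoff is that the switch from $\sigma_1$ to $\sigma_2$ is triggered by whether Player~2's \emph{own} current region is almost final --- no auxiliary reachable-set component is ever needed, which is exactly the dependency your construction could not eliminate. If you want to salvage your write-up, the fix is to replace the subset-tracked trigger by the almost-final notion and the two-phase decomposition; the attractor-style progress you wanted is then supplied, in the timed setting, by the region-based strategies of~\cite{DFHM03} rather than by a finite-arena attractor.
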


\begin{proof}
    Given a TA $\T$ with a reachability condition, we call a configuration $(s,\nu)$ \emph{almost final} if
    it is universal
    ($\lang(s,\nu)=\Sigma_T^\omega$)
    and
    Player~2 wins the letter game on $\T$ starting at $(s,\nu)$. 	
    In particular, $(s,\nu)$ is almost final if $s$ is a final state. 

We first argue that this is a property of regions, rather than just configurations:
if $(s,\nu)\req(s,\nu')$ and $(s,\nu)$ is almost final then also $(s,\nu')$ must be almost final.
Indeed, first observe that $(s,\nu')$ is universal by \cref{lem:runtree-eq}. 

To see why there must also be a resolver from $(s,\nu')$,
we observe that the letter game starting from any universal configuration
can be turned into an equivalent timed reachability game.
This is because by the universality assumption, Player~1 can only win by
keeping her opponent away from accepting configurations forever.
The construction uses one extra clock that is used in particular to prevent Player~2 from making delay steps.
For such games, both players have region-based winning strategies (see \cref{lem:tg-region-strategies}).
Consequently, Player~2 can use the same region-based strategy in the letter games
from $(s,\nu')$ and $(s,\nu)$, which exists by our initial assumption.

We now argue that in the letter game for $\T$,
any resolver must reach an almost final configuration as soon as Player~1 has played a word $u$ for which some run $\rho$ reaches an almost final configuration. Indeed, all continuations of $u$ are in $\lang(\T)$, so the configuration $(s,\nu)$ reached by the resolver must accept all continuations, and the resolver must represent a winning strategy in the letter game from $(s,\nu)$.

We now turn $\T$ into a new timed automaton $\T'$ over finite words
with a single new accepting state that can be reached (via some new letter $\$$)
exactly from all almost-final configurations.
This is well defined because almost-final is a property of regions and therefore can be expressed by a clock constraint.
Note that $\T'$ is history-deterministic because a resolver for $\T$ is also a resolver for $\T'$. This is because a resolver for $\T$ gives on a word $u$ a run to an almost final configuration, whenever there exists one.
Since $\T'$ is a history-deterministic TA on finite words, there exists a region-based strategy $\strat_1$ in the letter game for $\T'$  by \cref{lem:region-resolver}. 
We combine $\sigma_1$ with a region-based strategy $\strat_2$ in the letter game from almost final configurations in $\T$.
This yields a region-based strategy for Player~2 in the letter game for $\T$ which plays according to $\strat_1$
while not almost final, and $\sigma_2$ from then onwards.
This strategy must be a resolver:
If Player~2 provides a word $u$ such that, for no finite prefix there is a run ending in an almost final configuration,
then $u$ is not in $\lang(\T)$ and Player~2 wins.
If conversely, some prefix of $u$ admits a run ending in an almost-final configuration,
then $\sigma_1$ guarantees that Player~2 must be in such configuration after reading this prefix.
From here $\strat_2$ ensures that she wins whatever Player~1 plays.
\end{proof}

We can now use the region-based solver to determinize history-deterministic safety and reachability TA.

\begin{theorem}
\label{thm:safety-determinisable}
Every history-deterministic safety and reachability TA is equivalent to a deterministic TA.
\end{theorem}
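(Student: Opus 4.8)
The plan is to take the region-based resolver provided by \cref{lem:region-resolver} (for the safety case) and \cref{lem:positional-reach-resolvers} (for the reachability case) and hard-wire its choices into the guards of the automaton. Since such a resolver $r'$ depends only on the current region and the letter being read, it induces, for every control state $s$, every letter $a\in\Sigma$, and every region $R$ with control state $s$ that is reached by some $r'$-consistent run, a single designated $a$-transition $\tau_{R,a}$ selected by $r'$. I would build a deterministic automaton $\T'$ over the same state set $Q$ and clocks $C$ as $\T$, whose transitions are exactly the transitions of $\T$ but with their guards shrunk so that an $a$-transition $\tau$ from $s$ is enabled precisely in those regions $R$ for which $\tau_{R,a}=\tau$.

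The key classical ingredient I would invoke is that every region $R$ is definable by a guard $\varphi_R\in\B(C)$: the integer part of each clock (capped at $\cmax{x}$), the ordering of the fractional parts, and which fractional parts are zero can all be written as Boolean combinations of constraints $x\triangleleft n$ and $x-y\triangleleft n$, using negation to recover the missing lower bounds and to flip the sign of diagonal differences. With this, for each $a$-transition $\tau=(s,g,a,\mathit{reset},s')$ of $\T$ I set its new guard to $g'_\tau\eqdef\bigvee\{\varphi_R : \tau_{R,a}=\tau\}$, the disjunction over all regions in which $r'$ picks $\tau$. Note that $\varphi_R$ already entails $g$ whenever $r'$ picks $\tau$ in $R$, because a resolver only ever selects enabled transitions and guards are unions of regions, so no intersection with $g$ is needed. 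For regions not visited by any $r'$-consistent run (which therefore remain unreachable in $\T'$) I would assign an arbitrary enabled $a$-transition per region to preserve completeness without affecting the language. Each transition of $\T'$ is thus a copy of an original transition with a smaller guard, so the acceptance condition $\acc$ lifts verbatim by letting each new transition inherit the status of the transition it refines.

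It then remains to verify the three properties. \emph{Determinism and completeness:} since the $\varphi_R$ partition the valuation space and $r'$ selects exactly one $a$-transition per reachable region, every configuration $(s,\nu)$ has, for each letter $a$, exactly one enabled $a$-transition; delay steps are deterministic in any timed LTS. \emph{The inclusion $L(\T')\subseteq L(\T)$:} every run of $\T'$ is a run of $\T$ (guards were only shrunk) with the same acceptance status, so any word accepted by $\T'$ is accepted by $\T$. \emph{The inclusion $L(\T)\subseteq L(\T')$:} by construction the run that $r'$ builds on a word $w$ is consistent with the guards of $\T'$, hence is the unique run of $\T'$ on $w$; since $r'$ is a resolver, this run is accepting whenever $w\in L(\T)$, so $w\in L(\T')$.

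The bulk of the difficulty has already been discharged by the region-based resolver lemmas; the only genuinely non-routine ingredient left is the guard-definability of regions, and even that is classical, so I expect no serious obstacle — the delicate conceptual work lies upstream. I would finally remark that this construction leaves the state set $Q$ unchanged while potentially blowing up the guards, and hence the number of transitions, exponentially, since each $g'_\tau$ is a disjunction over exponentially many regions.
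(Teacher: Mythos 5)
Your proof is correct and takes essentially the same approach as the paper: both hard-wire the region-based resolver of \cref{lem:region-resolver} (and \cref{lem:positional-reach-resolvers} for reachability) into the guards, the only cosmetic difference being that the paper duplicates each transition once per region with guard $g\wedge\varphi_R$, whereas you attach a single disjunctive guard $\bigvee_R\varphi_R$ to each original transition --- exactly the transition-count versus guard-size trade-off the paper itself remarks on after the theorem. Your explicit patch for completeness on regions not visited by any resolver-consistent run is a detail the paper leaves implicit, and is a welcome addition rather than a deviation.
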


\begin{proof}
Consider a history-deterministic TA $\T=(Q,\iota,C,\Delta,\Sigma, \acc)$, with a region-based resolver (as in \cref{lem:region-resolver}) $r$, and let $R$ be the region graph of $\T$. Define $\T'=(Q,\iota,C,\Delta',\Sigma, \acc)$ where $(q,g\wedge z,a,X,q')\in \Delta'$  for $z$ a guard defining a region of $R$, that is, a guard that is satisfied exactly by valuations in $R$, if $(q,g,a,X,q')\in \Delta$ is the transition chosen by $r$ in the region defined by the guard $z$.
 In other words, $\T'$ is $\T$ with duplicated transitions guarded so that a transition can only be taken from a region from which $r$ chooses that transition. 
 Observe that $\T'$ is deterministic: the guards describing regions are mutually exclusive,  therefore the guards of any two transitions from the same state over the same letter have mutually exclusive guards.

As runs of $\T'$ corresponds to a run of $\T$ with added guards, $\lang(\T')\subseteq \lang(\T)$. Conversely, if $w\in \lang(\T)$, then its accepting run consistent with $r$ is also an accepting run in $\T'$, since each transition along this run, being chosen by $r$, is taken at a configuration that satisfies the additional guards in $\T'$. We can therefore conclude that $\lang(\T)=\lang(\T')$.
\end{proof}

While this determinization procedure preserves the state-space of the automaton, %
it multiplies the number of transitions (or the size of guards) by the size of the region abstraction. Thus, while history-deterministic reachability and safety TA are no more expressive than deterministic ones, they could still be more succinct, 
when counting transitions and guards.
Notice that this construction requires guards of the form $x-y\triangleleft c$, where $y\in\clocks$ is a clock (aka diagonal constraints). 
These can be removed using the construction of \cite{BPDG98} by adding additional states to obtain an equivalent deterministic TA without diagonal constraints.

\subsection{CoB\"uchi HD TA are not determinizable}
\label{sec:expressivity:d-hd}
We show that history-deterministic timed automata are strictly more expressive than deterministic ones, for coB\"uchi acceptance and above.
For simplicity let's first assume that our definition of timed words/languages excludes Zeno words.
The case were these are allowed is a straightforward adjustment that we comment on at the end of the section.
The rest of this subsection is a proof of the following theorem.

\begin{theorem}
    \label{thm:d-hd}
    
    History-deterministic co-B\"uchi TA are more expressive than deterministic parity TA.
\end{theorem}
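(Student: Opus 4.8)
The plan is to exhibit a concrete timed language $L$ that is recognised by a history-deterministic co-Büchi TA but by no deterministic parity TA, so that the separation witnessed already at the co-Büchi level against the full deterministic parity hierarchy. First I would design $L$ so that deciding membership requires an unbounded, nondeterministic ``guess'' about the future that cannot be resolved by finitely much region information — for instance a language where, along an infinite sequence of $a$-events, one must eventually commit to a region-based choice whose correctness depends on arbitrarily-late timing data, but where a resolver with access to the \emph{entire history} (past letters and delays) can always make the commitment correctly. A natural candidate is a language in which the automaton must at some finite point switch into a ``locked'' mode such that the locking decision, though nondeterministic, can always be taken correctly on the basis of the word read so far, whereas a deterministic machine with finitely many clocks cannot postpone or predict the lock and is forced to distinguish unboundedly many histories.

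The argument then splits into two independent tasks. For the positive direction I would build an explicit co-Büchi TA $\T$ and describe a resolver $r : E^* \times \Sigma \to E$ that uses the full run so far to resolve the single point of nondeterminism, and verify that $r$ is a genuine resolver in the sense of the \textbf{History-determinism} definition: for every $w \in \lang(\T)$ the induced run is accepting. By \cref{lem:region-resolver}, if $L$ were additionally a \emph{safety} language the resolver could be taken region-based and $L$ would be deterministic; the point of using co-Büchi acceptance is precisely to escape the scope of that lemma, so I would make sure the resolver I exhibit is genuinely \emph{not} region-based and that no region-based resolver exists.

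For the negative direction I would assume towards a contradiction that some deterministic parity TA $\daut$ recognises $L$, and derive a contradiction from its finiteness. The standard tool here is the region abstraction together with a pumping/indistinguishability argument: I would identify an infinite family of finite timed prefixes that $\daut$ cannot tell apart — either because they reach region-equivalent configurations (so by \cref{lem:paths} the automaton is forced to treat their continuations identically) or because a Ramsey/pigeonhole argument over regions forces two prefixes requiring \emph{opposite} acceptance behaviour to collapse — and show that the deterministic, finite-memory, finitely-clocked machine therefore accepts some word outside $L$ or rejects some word inside $L$. Concretely I expect to fix a time-interval pattern that encodes arbitrarily long ``distance'' information into the fractional parts of clocks, then use that $\daut$ has boundedly many regions to force a clash.

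The main obstacle is the negative direction, and specifically making the indistinguishability argument watertight against \emph{parity} acceptance rather than mere reachability or safety: a deterministic parity TA can use its priorities to react to patterns that repeat infinitely often, so it is not enough to collide two prefixes in a single region — I must control the behaviour of $\daut$ along entire infinite suffixes. I would handle this by choosing $L$ so that membership is determined by a tail property that the resolver can enforce but that provably exceeds the ``region memory'' of any deterministic machine, and by combining the region argument with an $\omega$-regular pumping lemma on the region automaton to lift a finite collision to an infinite one. Verifying that the witnessing automaton is truly history-deterministic (the resolver succeeds on \emph{all} accepted words, including the subtle Zeno/divergence boundary cases flagged after the theorem) is the second delicate point; I would treat the non-Zeno case first, as the excerpt suggests, and then argue the Zeno-admitting variant is a routine adjustment.
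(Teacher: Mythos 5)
Your plan has the same architecture as the paper's proof --- a concrete separating language, a small HD co-B\"uchi TA whose resolver genuinely needs the full history rather than just the current region, and a region-based indistinguishability argument against any deterministic parity TA --- but as written it is a proof schema with the two essential ingredients missing, and those are exactly where all the difficulty lives. First, you never exhibit the language. The paper uses $L = \left\{(a,t_0)(a,t_1)\ldots \mid \exists i.\ \forall n.\ \exists j>i.\ t_j - t_i = n\right\}$ (``eventually some event's unit-spaced grid is hit forever''), recognized by a one-clock co-B\"uchi TA that nondeterministically guesses the anchor event; the resolver commits to the oldest fractional timestamp that is still viable, which requires remembering unboundedly many past timestamps and is provably not region-based. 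Without a concrete $L$ of this kind neither direction of your argument can be checked; ``a language in which the automaton must at some finite point switch into a locked mode'' is a desideratum, not a definition, and finding an $L$ that simultaneously admits a full-history resolver and defeats every deterministic parity TA is the creative core of the theorem.

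Second, and more seriously, your negative direction would not go through as described. A pigeonhole collision of two finite prefixes in the same region, or a generic ``$\omega$-regular pumping lemma on the region automaton,'' is not sufficient: parity acceptance depends on the entire infinite run, so you must produce two infinite words, one in $L$ and one not, whose runs in the deterministic automaton are region-equivalent \emph{forever}. The paper achieves this with an on-the-fly construction: inside each unit interval it feeds events until the deterministic run closes a cycle in the region graph, then replays that same cycle in two further sub-intervals (``blocks''); for $w$ the replay positions are pairwise disjoint across unit intervals (so $w \notin L$), while for $w'$ the first replay always sits in the same block $\block{1}{2}$ (so $w' \in L$), yet the two runs visit the same sequence of states. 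Making the replayed cycles executable at shifted real-time offsets while preserving region equivalence is the technical heart (\cref{lem:safe}: one must control $\maxfractof{\cdot}$ so that each cycle's duration fits in the remaining fractional room, which forces the careful geometry of increasing blue/red blocks and \emph{decreasing} green blocks). Nothing in your sketch supplies this mechanism, and pumping the region automaton alone cannot, because moving a pumped segment in real time changes all subsequent clock fractions, and one must prove the shifted events still trace the same region path --- precisely the content of \cref{lem:paths} plus \cref{lem:safe} in the paper.
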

\begin{proof}

The following timed language $L$ over the singleton alphabet $\Sigma=\{a\}$
is recognized by a one-clock history-deterministic co-B\"uchi automaton
yet not by any deterministic Parity timed automaton.
In words, $L$ asks to eventually see events $a$ at unit distance. Formally,
\[
    L~\eqdef~\left\{(a,t_0)(a,t_1)... \mid \exists i\in\N.\quad \forall n\in\N.\quad 
    \exists j>i.\quad t_{j}-t_i =n\right\}.
\]

We first show that $L$ is recognized by the history-deterministic timed $\omega$-automaton in Fig. \ref{fig:introexa}. This automaton has an initial rejecting state $q$, from where there is a non-deterministic choice to either remain in this state or transition to an accepting state $q'$, which resets the unique clock. There are two transitions to stay in the accepting state: one enabled when the clock value is smaller than $1$, and one enabled at clock value $1$, which also resets the clock. If the clock value grows larger than $1$, the only enabled transition goes back to the initial state. Since this is a co-B\"uchi automaton, an accepting run must eventually remain in the accepting state.
\begin{figure}[t]
    \centering
    \includegraphics{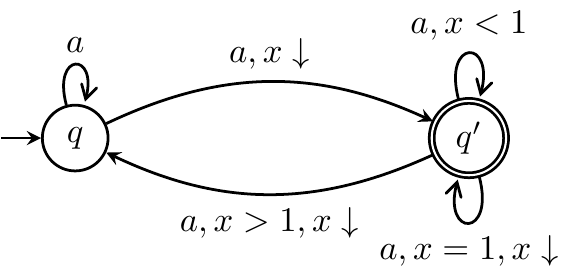}
    
    \caption{A history-deterministic timed co-B\"uchi automaton for $L$. The state $q'$ has priority $0$, i.e. is accepting, while the state $q$ has priority $1$.}
    \label{fig:introexa}
\end{figure}

First, this automaton recognizes $L$: if $w\in L$ then there is an accepting run that moves to state $q'$ at time $t$, where it then remains since the clock $x$ is reset at the occurrence of each event $(a,t+n)$ for $n\in \mathbb{N}$, so the clock value never grows larger than $1$. Conversely, a word accepted by this automaton has a run that eventually moves to $q'$ at a time $t$, and then remains in $q'$. For the run to stay in $q'$, it must reset $x$ at every time-unit after $t$, so $(a,t+n)$ must occur in the word for all $n\in \mathbb{N}$, that is, the word is in $L$.

We now argue that this automaton is also history-deterministic.
Given a finite word read so far and a new letter $a$ at time $t_{\textit{new}}$, the resolver identifies the earliest time $t_{\textit{early}}$ such that $a$ has so far occurred at time $t_{\textit{early}}+n$ for all integers $n$ such that $t_{\textit{early}}+n\leq t_{\textit{new}}$. Let $r$ be the function that maps a run $\rho$ ending in $q$ to $q'$ if $t_{\textit{new}}=t_{\textit{early}}+m$ for some integer $m$, and otherwise to the only other available transition.

We claim that this is indeed a resolver.
If $w\in L$ then there is an earliest time $t$ such that $(a,t+n)$ occurs in $w$ for all integers $n$. Since $t$ is minimal, eventually the resolver $r$ will make its choice whether to move to $q'$ over a letter $(a,t_{\textit{new}})$ based on whether $t_{\textit{new}}=t+m$ for some integer $m$. Since time progresses and $(a,t+n)$ occurs in $w$ for all integers $n$, the run will eventually transition to $q'$ at a time $t+m$ for some $m$. From there, since $(a,t+n)$ occurs in $w$ for all integers $n$, the run over $w$ remains in $q'$ and is therefore accepting.

\begin{figure}[t]
\centering
    \includegraphics{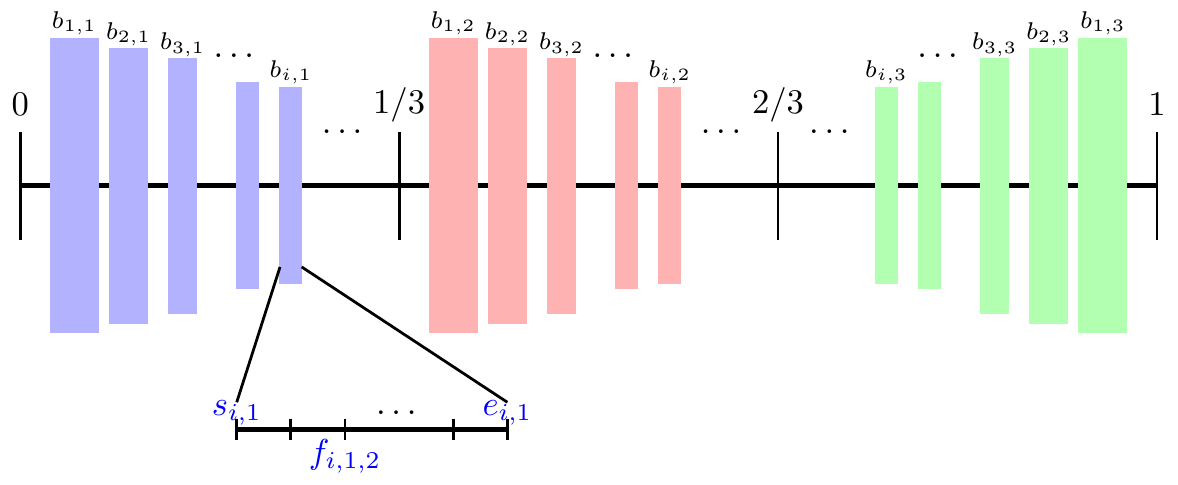}
\caption{Blocks 
    $\block{i}{1}$,
    $\block{i}{2}$, and $\block{i}{3}$ 
    within a unit time interval 
    are displayed in blue, red and green respectively.
    Each block $\block{i}{j}$ has equally spaced out events $\tick{i}{j}{0}\ldots \tick{i}{j}{r}$ where $r$ is at most the number of regions.
}
\label{fig1}
\end{figure}

\bigskip
It remains to be shown that $L$ is not recognized by a deterministic timed automaton.
Suppose towards a contradiction that $L$ is recognizable by some deterministic Timed Automaton $D$
with Parity acceptance. Let $r$ be the number of its regions.

We will construct two words $w$, $w'$ such that $w\in L$ and $w'\not\in L$,
so that the run of $D$ on $w$ is region equivalent to the run on $w'$.
The two words can only differ in the timing of events since there is only one letter in the alphabet.

Both words will be constructed on the fly, according to the following schema.

Consider the intervals and fractional values in Fig.~\ref{fig1};
there are infinitely many disjoint intervals,
$\block{i}{j} = [\bstart{i}{j}, \bend{i}{j}]$
so that all $\block{i}{1}$
have start and endpoint strictly between $0$ and $\blimit$
and are increasing, i.e., $\bstart{i+1}{1} > \bend{i}{1}$ for all $i$.
Similarly,
$\block{i}{2} \subseteq [\blimit,\frac{2}{3}]$,
and $\bstart{i+1}{2} > \bend{i}{2}$
for all $i$.
The third sequence of intervals 
$\block{i}{3} \subseteq [\frac{2}{3},1]$
have start and endpoint strictly between $\frac{2}{3}$ and $1$
and are \emph{decreasing}:
$\bend{i+1}{3} < \bstart{i}{3}$ for all $i$.
Each interval $\block{i}{j}$ contains equi-distant values 
$
\tick{i}{j}{0},
\tick{i}{j}{1},
\ldots,
\tick{i}{j}{r}
$
starting at $\tick{i}{j}{0}=\bstart{i}{j}$, where $r$ is the number of regions.

We step-wise construct $w$ (and $w'$) together with the run of $D$ on it.
In every integral interval from $i-1$ to $i$ we place events as follows.

\begin{itemize}
    \item start with a delay of $\tick{i}{1}{1}$,
    followed by a discrete event $a$, then delay of  $\tick{i}{1}{2}-\tick{i}{1}{1}$ followed by $a$,
    and so on. This induces a run of $D$ on the prefix constructed and we continue constructing the prefix until the induced run closes a cycle in the region graph. Formally, there exists time $\tick{i}{1}{k}$ 
    and $\tick{i}{1}{k+\ell}$ such that the run is in configuration 
    $(\confstate{i}{1}{k}, \confval{i}{1}{k})$ before reading the $k$th $a$ and in configuration $(\confstate{i}{1}{k+\ell}, \confval{i}{1}{k+\ell})$ at time $\tick{i}{1}{k+\ell}$ such that 
    $(\confstate{i}{1}{k}, \confval{i}{1}{k}) \req (\confstate{i}{1}{k+\ell}, \confval{i}{1}{k+\ell})$ at time $\tick{i}{1}{k+\ell}$.
    We denote by $L_i$ the run between $\tick{i}{1}{k}$ and $\tick{i}{1}{k+\ell}$. 
    
    \item Now we force  the automaton to close the same cycle $L_i$, but with all events occurring at times in the interval $\block{i}{2}$ (respectively $\block{1}{2}$) in $w$ (respectively $w'$).
    This can be done by adding a time delay by $\bstart{i}{2}-\tick{i}{1}{k+\ell}$ 
    in $w$ followed by an event $a$ at times $\tick{i}{2}{\ell'}$ for all $\ell'\leq \ell$. 
    We prove this formally in Lemma \ref{lem:safe}.

    \item Finally we force the automaton to close the same cycle $L_i$ once more, with all times
        in interval $\block{i}{3}$.
        This can be done by adding a time delay $\bstart{i}{3}-\tick{i}{2}{\ell}$ followed by events at times
        $\tick{i}{3}{1},
        \tick{i}{3}{2},\ldots
        \tick{i}{3}{\ell}
        $.
      We prove the correctness of the construction in Lemma \ref{lem:safe}.
  
\end{itemize}

Consider the cycle $L_i$ in the region graph obtained in step $1$ above in the interval $[i-1,i]$, between $\tick{i}{1}{k}$ and $\tick{i}{1}{k+l}$. Note that the $k$ and $\ell$ depends on $i$. However, we write $k$ and $\ell$ 
without
the dependency on $i$ 
as we only reason about loops within an integral interval.
The duration of the loop, denoted by $\dur{L_i}$ is $\tick{i}{1}{k+\ell}-\tick{i}{1}{k}$.
An important observation is that $\dur{L_i}\leq \bend{i}{j}-\bstart{i}{j}$ as the loop occurs within the interval $\block{i}{1}$. 
  
\begin{lemma}
	\label{lem:safe}
Let $\nu_i$ and $\nu'_{i}$ be the configurations reached by the run of $D$ at times $i-1+\tick{i}{1}{k}$ and $i-1+\tick{i}{1}{k+\ell}$.
Then $1-\maxfractof{\nu_i+d_{ij}}\geq \dur{L_i} $, where $d_{ij}=\bstart{i}{j}- \tick{i}{1}{k}$ for $j\in \{2,3\}$.

Furthermore, let $\nu_{ij}$ be the configuration reached by the run of $D$ at time $i-1 + \tick{i}{j}{1}$, where $j=\{2,3\}$. The cycle $L_i$ is executable from $\nu_{ij}$.

\end{lemma}

\begin{proof}
	We prove this lemma by induction on $i$.
	The case $i=1$ is easy to see since
	$\maxfractof{\nu_1+d_{1j}} \leq \bstart{1}{j}$ and therefore $1-\maxfractof{\nu_1+d_{1j}}\geq 1-\bstart{1}{j}\geq \bend{1}{j}-\bstart{1}{j}\geq \dur{L_1}$.
	
    Furthermore, $\nu_{12}=\nu'_{1}+d$, where $d=\bstart{1}{2}-\tick{1}{1}{k+\ell}\leq 1-\tick{1}{1}{k+\ell}\leq 1- \maxfractof{\nu'_1}$. Therefore, by Proposition~\ref{lem:paths}.\ref{lem:paths-short-delay},  $\nu_{12}\req \nu'_1\req \nu_1$. 
	For $\nu= \nu_{1}$ and $\nu=\nu_{12}$, $1-\maxfractof{\nu}> \bend{1}{3}-\bstart{1}{3}>\dur{L_1}$
	as $1>\bend{1}{3}$ and $\maxfractof{\nu}<\bstart{1}{3}$.
	By applying Proposition~\ref{lem:paths}.\ref{lem:paths-short-path}, $L_1$ is executable from $\nu_{12}$ and ends in a configuration $\nu'_{12}\req \nu_{12}$. 
	
    The configuration $\nu_{13}$ equals $\nu'_{12}+d'$, where $d'= \bstart{1}{3}- \tick{1}{2}{\ell}< 1-\maxfractof{\nu'_{12}}$ as $\maxfractof{\nu'_{12}}\leq  \tick{1}{2}{\ell}$.
    Proposition~\ref{lem:paths}.\ref{lem:paths-short-delay} gives $\nu_{13}\req \nu_{12}$, and $1-\maxfractof{\nu_{13}}\geq \bend{1}{3}-\bstart{1}{3}\geq \dur{L_1}$. By Proposition~\ref{lem:paths}.\ref{lem:paths-short-path}, we can conclude that $L_1$ is executable from $\nu_{13}$.
	
	To prove the inductive case, we bound the value of $\maxfractof{\nu_i + d_{ij}}$ for $j\in\{2,3\}$. Consider a clock $x\in \clocks$ and the last time when it was reset. Either it was never reset or the reset occurred at time $\tick{i'}{j'}{k'}$. 
	For a clock that is never reset, the fractional part of its value at $\nu_i$ will be $\tick{i}{1}{k}$. 
	If the clock was last reset within some blue block, i.e, at time $i'-1+\tick{i'}{1}{k'}$,  then either $i'<i$ (corresponds to previous blue blocks), or $k'<k$ (corresponds to previous ticks within the current blue block).
	In both cases, the $\fractof{x}= \fractof{\tick{i}{1}{k}-(i'-1 +\tick{i'}{1}{k'})}\leq  \tick{i}{1}{k}$. 
	
	Note that any reset to clock $x$ in a previous red block must also be reset again in the corresponding green block as the runs in the red and green block are the same by construction. 
	For a clock $x$ last reset in some previous green block, i.e, at time $i'-1+ \tick{i'}{3}{k'}$, $\fractof{x}=\fractof{(i-1+ \tick{i}{1}{k}) - (i'-1 +\tick{i'}{3}{k'})}
	= 
	\tick{i}{1}{k}+(1- \tick{i'}{3}{k'})$.
	Furthermore, $\tick{i'}{3}{k'}>\bstart{i-1}{3}$ as $i'\leq i$. Therefore, $\fractof{x}\leq 1+\tick{i}{1}{k}-\bstart{i-1}{3}+1$ which bounds $1-\fractof{x}\geq \bstart{i-1}{3}-\tick{i}{1}{k}$.
	Combining all the possibilities for clock resets, we obtain $1-\maxfractof{\nu_i}\geq \bstart{i-1}{3}-\tick{i}{1}{k}$.

	It is easy to see that for $j\in \{2,3\}$, $\maxfractof{\nu_i+d_{ij}}\leq \maxfractof{\nu_i}+d_{ij}$. Therefore, $1-\maxfractof{\nu_i+d_{ij}}\geq 1-\maxfractof{\nu_i} - d_{ij}\geq 1- (\tick{i}{1}{k}-\bstart{i-1}{3}+1) - (\bstart{i}{j}- \tick{i}{1}{k})\geq \bstart{i-1}{3} - \bstart{i}{j}\geq \bend{i}{j}-\bstart{i}{2}$. The last step follows from the fact that $\bend{i}{j}\leq \bstart{i-1}{3}$ for $j\in\{2,3\}$. Note that the duration of the loop $L_i$ is less than $\bend{i}{j}-\bstart{i}{j}$ and thus completes the proof for first part of the lemma.

    We now show that $L_i$ is executable from $\nu_{i2}$ and $\nu_{i3}$. 	First, $\nu_{i2}\req \nu_i$ and $\nu_{i3}\req \nu_{i2}$ by repeated application of Proposition~\ref{lem:paths}.\ref{lem:paths-short-delay}. This is similar to the argument in the base case.
	We just showed that $1-\maxfractof{\nu_{i2}}> \bstart{i-1}{3}-\bstart{i}{2}> \bend{i}{2}-\bstart{i}{2}= \dur{L_i}$. The same argument holds for $\nu_{i3}$ as well. Also, $\maxfractof{\nu_{i}} \leq 1- \bstart{i-1}{3} + \tick{i}{1}{k}$ and hence $1-\max\nu_{i}\geq  \bstart{i-1}{3}-\tick{i}{1}{k}<\bend{i}{3}-\bstart{i}{3}<\dur{L_i}$.
    Therefore, by Proposition~\ref{lem:paths}.\ref{lem:paths-short-path}, $L_i$ is executable from both $\nu_{i2}$ and $\nu_{i3}$. 
\end{proof}

Notice that the so-constructed word $w$ is not in $L$ because all $\block{i}{j}$ are disjoint.
The word $w'$ will be constructed almost the same way, with the only exception that
the first repetition of the cycle is moved not to $\block{i}{2}$
but always the same interval, $\block{1}{2}$. 
It is easy to see that Lemma \ref{lem:safe} can be modified where $\block{i}{2}$ is replaced everywhere by $\block{1}{2}$. 
In particular this means that
$w$ contains an event at time $n+\bstart{1}{2}$ for any $n\in\N$, and thus must be contained in $L$.
Therefore, $D$ has an accepting run on $w'$ but the run on $w'$ visits the same sequence of states as the run of $D$ on $w$. Therefore, $D$ must accept $w$ as well, which is a contradiction proving that $L$ is not accepted by any deterministic Timed Automaton with Parity acceptance.

This concludes the proof of \cref{thm:d-hd}.
\end{proof}

\begin{remark}
    \label{rem:d-hd-zeno}

In case our definition of timed words does allow for Zeno words, the argument above can be adjusted as follows.
Instead of $L$, we consider the language $L\cup Z$,
where $Z$ denotes the set of all Zeno words.
In fact the automaton depicted in \cref{fig:introexa} recognizes $L\cup Z$. However, the candidate resolver $r$ proposed above may wait for ever in state $q$ for the ``oldest'' fractional timestamp to re-appear while reading in a Zeno word. This results in a rejecting run on a word in the language, and so the $r$ is no resolver.
To fix this issue we adjust the automaton to the one depicted in \cref{fig:introexaa}.
This is still a co-B\"uchi timed automaton, where the acceptance condition is via transitions: we want to finitely often use the edges that change state. Clearly this can be translated into a TA with state-based acceptance as before. The recognized language is still $L\cup Z$, the same as before.
Note that runs that stay in $q$ for ever are accepting Zeno words.
For this automaton, the candidate resolver $r$ (and its justification) is the same as above, except the run built for Zeno-continuations when in state $q$ is actually accepting. The new automaton is therefore history-deterministic. A proof that $L\cup Z$ is not recognizable by any Parity DTA remains the same because the words constructed in the proof are non-Zeno.

\begin{figure}[t]
    \centering
    \includegraphics{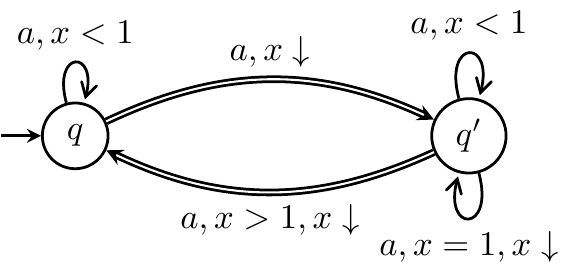}
    
    \caption{A history-deterministic timed co-B\"uchi automaton for $L\cup Z$ (the double edges are rejecting, all other accepting).}
    \label{fig:introexaa}
\end{figure}
\end{remark}

\subsection{HD TA are less expressive than fully non-deterministic TA}
\label{sec:expressivity:hd-nd}

\begin{theorem}
    \label{thm:hd-nd}
    History-deterministic reachability TA are less expressive than non-determi\-nis\-tic parity TA.
    
\end{theorem}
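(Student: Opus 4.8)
The plan is to prove two inclusions at the level of recognized languages and one separating example. The easy direction, that every history-deterministic reachability TA is recognized by a nondeterministic parity TA, I would dispatch immediately: by \cref{thm:safety-determinisable} such an automaton is equivalent to a \emph{deterministic} reachability TA, and a reachability condition can be recast as a (co-B\"uchi) parity condition by adding a monotone flag recording whether the target has been visited, so the language is recognized by a nondeterministic parity TA. For strictness, the cheapest route to the statement exactly as written is to reuse the language $L$ of \cref{thm:d-hd}: it is recognized by a nondeterministic co-B\"uchi (hence parity) TA, but by \cref{thm:d-hd} by no deterministic parity TA; since HD reachability TA are determinizable to deterministic reachability TA, which are a special case of deterministic parity TA, $L$ already separates the two classes. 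The substantive content of the subsection, however, is the sharper separation announced in the introduction, which also yields the theorem: a language recognized by a nondeterministic \emph{reachability} TA that is recognized by no history-deterministic \emph{parity} TA.

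For that sharper separation I would take the one-clock language $L' = \{(a,t_0)(a,t_1)\cdots \mid \exists\, i<j.\ t_j - t_i = 1\}$ of timed words containing two events at distance exactly one. A nondeterministic reachability TA recognizes $L'$ by guessing the first event of such a pair, resetting a clock $x$ on a nondeterministically chosen $a$, and moving to an accepting sink (with self-loops on all continuations) as soon as an $a$ occurs with $x=1$. The point I would emphasize is that this guess is \emph{one-shot}: unlike the language $L$ of \cref{thm:d-hd}, whose ``eventually, for all $n$'' shape gives a resolver infinitely many chances to correct a wrong guess and so makes it history-deterministic under a co-B\"uchi condition, here a single missed witness is fatal, and I claim no resolver, even with parity acceptance, can commit to the right guess on the fly.

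To prove $L'\notin$ HD-parity I would argue in the letter game against a hypothetical history-deterministic parity TA $\mathcal{A}$ with clock set $C$, writing $k=|C|$. The adversary first plays $k+1$ events $\tau_0<\cdots<\tau_k$ with pairwise distinct fractional parts inside a tiny interval $(0,\delta)$. Whatever run the resolver builds, the reached configuration carries only $k$ clock values, so by pigeonhole at least one $\tau_{i^*}$ is the reset-time of no clock; intuitively $\mathcal{A}$ has ``forgotten'' $\tau_{i^*}$ up to its region. The adversary then either produces the event at time $\tau_{i^*}+1$, giving a word $w\in L'$, or the event at time $\tau_{i^*}+1+\eps$, giving a word $w'\notin L'$ (for $\eps$ small and with all remaining events at generic times creating no distance-$1$ pair). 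Using \cref{lem:runtree-eq} and \cref{lem:paths}, I would show that because $\tau_{i^*}$ is untracked the two verification events lie in the same region relative to every clock of $\mathcal{A}$, so the resolver's accepting run on $w$ can be mirrored by a region-equivalent run on $w'$ with an identical sequence of priorities; this run accepts $w'$, contradicting $w'\notin L'=L(\mathcal{A})$.

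The hard part is exactly this region-indistinguishability step. I must choose the fractional parts of the $\tau_j$ and the perturbation $\eps$ so that moving the verification event by $\eps$ crosses no integer boundary for any of the $k$ tracked clocks (those are the values a guard could test) while flipping only the untested distance to $\tau_{i^*}$, and I must then maintain region-equivalence of the two runs through the \emph{entire infinite tail}, not just the verification step. It is this last point, that the two runs have identical priority sequences and hence identical parity verdicts, that makes the argument defeat parity acceptance rather than merely reachability, and I expect it to be the main obstacle, since it requires the adaptive pigeonhole on clocks to be combined cleanly with the region machinery of \cref{lem:paths} and \cref{lem:runtree-eq} in the on-the-fly setting of the letter game.
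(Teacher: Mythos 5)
Your proposal is correct, and it is a mix of a genuinely different route and the paper's own route. Your ``cheap route'' is valid and is \emph{not} what the paper does: reusing $L$ from \cref{thm:d-hd} (recognized by an HD co-B\"uchi, hence nondeterministic parity, TA) together with \cref{thm:safety-determinisable} (an HD reachability TA recognizing $L$ would determinize to a deterministic reachability, hence deterministic parity, TA, contradicting \cref{thm:d-hd}) does prove the theorem exactly as stated. What this buys is economy; what it loses is the strength of the separation: since the witness $L$ is itself HD-recognizable, this argument cannot show that the gap comes from nondeterminism itself rather than from the acceptance condition. The paper instead proves directly the sharper claim you go on to sketch: the language $L'$ (two $a$'s at distance exactly one) is recognized by a one-clock nondeterministic \emph{reachability} TA but by no HD \emph{parity} TA, separating HD from ND even with the weakest acceptance on the nondeterministic side and the strongest on the HD side. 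Your sketch of that sharper argument coincides in essence with the paper's proof: the same $L'$, the same pigeonhole of $k+1$ events against $k$ clocks, and the same small-perturbation region argument. The paper's concrete choices make the bookkeeping painless: the $k+1$ events are equi-distant at times $i/(k+1)$ in $[0,1]$, the index $j$ is chosen after the resolver's (fixed, since the prefix is common to all $w_i$) run so that no clock of the reached valuation $\nu$ equals $1-j/(k+1)$, and the perturbation is $\eps=1/(2(k+1))$, which crosses no clock's integer boundary, giving $\nu+j/(k+1)\req\nu+j/(k+1)+\eps$. Two cleanups to your sketch: first, the pigeonhole statement should be that no clock was \emph{last} reset at the event $\tau_{i^*}$ (a clock reset there but reset again later no longer tracks it), which is how the paper phrases it; second, the ``main obstacle'' you flag about the infinite tail dissolves via the machinery you already cite --- the mimicking run on $w'$ ends in a configuration region-equivalent to the resolver's configuration after $w_j$, every continuation of $w_j$ lies in $L'$ so that configuration is universal, and by \cref{lem:runtree-eq} universality transfers across region equivalence, so some continuation of $w'$ avoiding all distance-one pairs is accepted, contradicting $L(H)=L'$. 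There is no need to track priority sequences along the whole tail; universality of the reached region-equivalent configuration is enough.
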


\begin{figure}[t]
	\centering
    \includegraphics{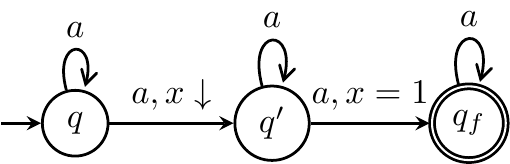}
	\caption{A non-deterministic timed reachability automaton for $L'$.}
	\label{fig:nd}
\end{figure}
 
\begin{proof}
	
The following language $L'$ over the singleton alphabet $\Sigma=\{a\}$
is recognized by a one-clock non-deterministic TA with reachability acceptance but not by any history-deterministic Parity TA.
In words, $L'$ asks to see two events $a$ at unit distance. Formally,
\[
	L' \eqdef \left\{(\sigma_0,t_0)(\sigma_1,t_1)... \mid \exists i,j\in\N.\quad t_{j}-t_i =1 \text{ and } \sigma_i=a \text{ and } \sigma_j = a\right\}.
\]

The non-deterministic TA shown in Figure~\ref{fig:nd} accepts the language $L'$ by guessing positions $i$ by reading an $a$, resetting a clock $x$ and checking that it sees an $a$ at distance $1$.

Assume towards a contradiction that there exists a HD TA $H$ with $k$ clocks and maximum constant in guards $\cmax{x}$, that recognizes $L'$. 
For all $i\le k$ consider the finite word
\[
    w_i= \left(a,\frac{1}{k+1}\right)\cdots \left(a,\frac{k+1}{k+1}\right) \left(a, 1+\frac{i}{k+1}\right) %
\]
that sees $k+1$ equi-distant events in the interval $[0,1]$ and then repeats the $i$th fractional value
in the next integral interval.
Note that $w_i\in L'$, for all $i\leq k$
and so the resolver gives a run on all such words. 
Note that the prefix up to time $1$ is the same on all $w_i$ and therefore the resolver gives the same run on the prefix until time $1$.
Consider the configuration $\nu$ reached by the resolver after reading the prefix up until and including
the discrete event $(a,1)$.
Since $H$ has $k$ clocks and $k+1$ events $a$, there exists $j\le k+1$ 
such that either no clock is reset at the $j$th $a$-event, or $j\le k$ and the clocks reset at the $j$th $a$ are reset again at a later $a$-event. 
This implies that the configuration $\nu$ reached at the end of the block is such that $\nu(x) \neq 1-\frac{j}{k+1}$ holds for all clocks $x$.
To see this note that if $x$ was reset before the $j$th $a$, $\nu(x)\ge 1-\frac{j-1}{k+1}$ and if $x$ was reset after the $j$th $a$, $\nu(x) \le 1-\frac{j+1}{k+1}$. 
It follows that
\[\nu+\frac{j}{k+1} \req \nu+\frac{j}{k+1} +\left(\frac{1}{2(k+1)}\right)\]
because
for all clocks reset before the $j$th $a$, the value of $x$ in both valuations is $>1$ and for all clocks reset after the $j$th $a$, both valuations will satisfy $x<1$. 

Finally, let's take the word
\[
    w'= \left(a,\frac{1}{k+1}\right)\left(a,\frac{2}{k+1}\right)\cdots \left(a,\frac{k+1}{k+1}\right) \left(a, 1+\frac{j}{k+1}+\frac{1}{2(k+1)}\right)
\]
Clearly $w'$ is not in $L'$. 
However, $H$ must have a run on $w'$ which follows the accepting run of $H$ on $w_j$.
The final step in this run can be executed because the two runs end up in equivalent configurations.
A contradiction.
\end{proof}

By \cref{thm:hd-nd,thm:d-hd} we thus conclude that the classes of languages accepted by deterministic, history-deterministic and non-deterministic TAs are all different.

	\section{Timed Games and Composition}
	\label{sec_games}
	\label{sec:games}
	In this section we consider several games played on (LTSs of) timed automata and how they can be used to decide classical verification problems.
We focus on turn-based games, although our techniques can be generalised to concurrent ones. We first look at language inclusion, then synthesis, and finally we consider good-for-games timed automata, that is, automata that preserve the winner when composed with a game and show that good-for-gameness and history-determinism coincide for both reachability and safety timed automata.

\begin{definition}[Timed Games] A \emph{timed game} is a turn-based two player
    game played on the configuration graph of a timed automaton.
    Formally, it consists of an arena $\gsgame=(Q,\iota,  C, \Delta,\Sigma,L)$,
    which is a TA except that the set $Q$ of states is partitioned into those belonging to Player~1 ($Q_1$)
    and those belonging to Player~2 ($Q_2$), and that the acceptance condition $L$ is a timed language
    (not some set of acceptable runs).
    
    Configurations are defined as for TA. %
    A timed game starts in the initial configuration $(\iota,\vec{0})$ and proceeds so that each round $i$ from configuration $c_i$, 
the owner of the current state first advances time with a delay $d_i\in \nnreals$
and follows it up with a discrete step from $c_i+d$ leading to a successor configuration $c_{i+1}$ and emitting a letter $a_i\in\Sigma$.
An infinite play is winning for Player~2 iff the word $d_0a_0d_1a_1\dots$ produced is in $L$. %

A \emph{timed parity game} is a timed game where the acceptance condition $L$ is given by a timed parity automaton. We assume w.l.o.g., that the configuration graph of the game and that of the parity automaton defining $L$ is the same, so that a timed parity game is simply given as a timed game as above, and a colouring $\priority:Q\mapsto\N$.
\end{definition}

\begin{remark}
    Notice that the definition of timed games above enforces that delays and discrete letters are emitted alternatingly. In particular, every play uniquely introduces an (infinite) timed word.
    The definition does not explicitly forbid these words to be Zeno (as for instance \cite{CHP08} do)
    nor requires that only Player~1 advances time.
    To enforce that only Player~1 enforces delay, we can add a clock that is reset at every transition and is checked for $0$ at every outgoing transition from a Player~2 vertex.    
    To handle Zeno words, various winning conditions have been proposed that avoid that one player can always win by delaying. For example, as Zeno words are always winning for Player~2, or the player proposing $0$ delay infinitely often loses. We refer to the introduction of \cite{DFHM03} for a summary of the different frameworks to handle Zeno words in the winning conditions.
    
\end{remark}

We will occasionally use the fact that one can effectively determine the winner of a timed parity game.

\begin{theorem}[Theorem 5, \cite{DFHM03}]
    \label{lem:TPG}
    \label{lem:tg-region-strategies}
        Consider a timed parity game with $m$ states, $n$ clocks, maximal constant $k$ and $c$ priorities.
    \begin{enumerate}
        \item If Player~$i$ has a winning strategy then also one that is region-based, i.e., makes the same choices from all configurations in the same region.
        \item The game can be solved in time $\mathcal{O}((m\cdot n!\cdot 2n\cdot (2k+1)n\cdot m\cdot c)^{(c+1)})$, therefore, in \exptime.
    \end{enumerate}
\end{theorem}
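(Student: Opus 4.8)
The plan is to reduce the infinite-state timed parity game to a finite-state parity game over the region abstraction of the underlying timed automaton, and then to transfer both the existence of simple winning strategies and the complexity bound back across this reduction. This is the standard Alur--Dill region technique adapted to the two-player setting.

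First I would build the finite region game $\gsgame_R$. Its vertices are pairs $(q,R)$, where $q\in\states$ is a control state and $R$ is a region, with ownership and priority $\priority(q)$ inherited from $q$. Following the timed-game move structure, the owner of $q$ makes a single composite move from $(q,R)$: it picks a time-successor region $R''$ reachable from $R$ by letting time elapse (the delay phase), and then a discrete transition $(q,g,a,r,q')$ whose guard $g$ is satisfied on $R''$, leading to $(q',R'')[r\to 0]$. This keeps $\gsgame_R$ turn-based and finite, and a play of $\gsgame_R$ carries the same sequence of priorities as the corresponding play of the timed game.

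The heart of the argument is that region equivalence $\req$ is a time-abstract bisimulation of the game arena: if $(q,\nu)\req(q,\nu')$ then every delay-then-discrete move available from $(q,\nu)$ can be matched from $(q,\nu')$ by a move landing in region-equivalent configurations, and conversely. This extends Proposition~\ref{lem:paths}: region-equivalent valuations pass through the same sequence of time-successor regions as time elapses and satisfy the same guards, while resets map region-equivalent valuations to region-equivalent ones; and for each target region reachable by a delay there is a concrete delay realising it. Consequently Player~$i$ wins the timed game from $(q,\nu)$ with $\nu\in R$ if and only if Player~$i$ wins $\gsgame_R$ from $(q,R)$. For part~1 I then invoke the classical positional (memoryless) determinacy of finite parity games (Emerson--Jutla, Mostowski): in $\gsgame_R$ the winner has a memoryless winning strategy, which fixes at each vertex $(q,R)$ a single target region together with a single discrete transition. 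Concretising it---from any $(q,\nu)$ with $\nu\in R$, play some concrete delay landing in the prescribed target region followed by the prescribed transition---yields a strategy in the timed game that makes identical choices across all configurations of the same region, i.e.\ a region-based strategy, and the bisimulation guarantees it is still winning.

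For part~2 I would count regions: a TA with $n$ clocks and maximal constant $k$ has at most $n!\cdot 2^n\cdot(2k+2)^n$ regions, so $\gsgame_R$ has $O(m\cdot n!\cdot 2n\cdot(2k+1)^n)$ vertices and $c$ priorities. A finite parity game with $V$ vertices and $c$ priorities is solvable in time $V^{O(c)}$, and substituting the vertex count yields the stated bound; since the vertex count is singly exponential in $n$ and in the binary encoding of $k$, the whole procedure runs in \exptime. The main obstacle is the bisimulation step: one must verify carefully that the time-successor closure on regions is faithfully realised by concrete delays and that strategies transfer in both directions for both players. If the ambient framework additionally demands that time diverge, the region graph must be augmented with a component recording whether a full time unit has elapsed (equivalently, whether a designated reset clock has cycled), turning the objective on $\gsgame_R$ into a conjunction of parity with a time-progress condition; this only multiplies the vertex count by a constant and therefore leaves the \exptime\ bound intact.
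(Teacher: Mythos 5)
The paper never proves this statement: it is imported wholesale as Theorem~5 of \cite{DFHM03}, so there is no in-paper argument to compare against, and your proposal is a from-scratch reconstruction. For the turn-based timed parity games as this paper defines them, your reconstruction is essentially the correct and standard one: regions give a time-abstract alternating bisimulation of the configuration-graph game (pre-stability of regions under time elapse, guard-invariance, and compatibility with resets are exactly the needed facts, generalising \cref{lem:paths}); the quotient is a finite parity game with at most $m\cdot n!\cdot 2^n\cdot(2k+2)^n$ vertices (the bound in the statement, with its $2n$ and $(2k+1)n$, is a typo-ridden rendering of this) and $c$ priorities; memoryless determinacy of finite parity games plus concretisation yields region-based winning strategies and, via $V^{O(c)}$ parity-game solving, the \exptime\ bound. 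What you do \emph{not} recover is the theorem in the generality in which \cite{DFHM03} prove it: their Theorem~5 concerns \emph{concurrent} timed games whose winning condition interleaves parity with time divergence and blame assignment, and their proof runs over an enriched region structure (an extra clock plus tick/blame bits), not the plain quotient; your turn-based reduction would not establish that result, but it does establish the statement as it is actually used here, where timed games are turn-based and the parity condition is imposed on the play with Zeno behaviour permitted. One caveat on your closing remark: handling a time-divergence requirement does \emph{not} reduce to a "constant" blow-up of the region graph. Parity conjoined with a time-progress (B\"uchi) condition is not itself a parity condition on the same arena; one must either track extra information such as the maximal priority seen since the last unit tick (a blow-up polynomial in $c$, with two extra priorities), or pass through Rabin/Streett and a latest-appearance record with factorial cost in $c$, as this paper itself does in the proof of \cref{lem:GFGTA:fairsim}. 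Either route keeps the procedure in \exptime, but not for the reason you state.
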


\subsection{Language Inclusion and Fair Simulation Games}
The connection between history-determinism and fair simulation, established in \cref{thm:triangle},
allows to transfer decidability results to history-deterministic TA.
Let's first recall that simulation checking is decidable for timed automata using a region construction
\cite{TAKB96}.
This paper precedes the notion of fair simulation (restricting Player~1 to fair runs)
and is thus only applicable for safety conditions. However, the result holds for more general parity acceptance
(for which each state is assigned an integer priority and where a run is accepted if the highest priority it sees infinitely often is even).

\begin{theorem}
    \label{lem:GFGTA:fairsim-ub}
    \label{lem:GFGTA:fairsim}
    Checking fair simulation is in \EXPTIME\ for parity timed automata.
\end{theorem}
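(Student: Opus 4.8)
The plan is to set up fair simulation as a turn-based timed game and then massage it into a timed parity game so that the complexity bound of \cref{lem:TPG} applies directly. Let $A$ and $B$ be the two parity timed automata, with clock sets $\clocks_A$ and $\clocks_B$ and priority functions $\priority_A,\priority_B$. A position of the fair simulation game is a pair of configurations, one in $A$ (controlled by the challenger, Player~1) and one in $B$ (controlled by the duplicator, Player~2), and I would realise it on the configuration graph of a single product timed automaton whose clock set is the disjoint union $\clocks_A\uplus\clocks_B$. The crucial point of this encoding is that a single delay step of the product advances \emph{all} clocks simultaneously, so that delays are matched between the two components for free; Player~2 never has to, and cannot, choose a delay of her own.

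Concretely, each round would proceed as follows. From a Player~1 position the challenger either (i)~advances time by some $d\ge 0$, which by construction ages both components by $d$, or (ii)~fires a discrete $a$-transition of $A$ at zero delay, recording the emitted letter $a$ in the control state and handing the turn to Player~2. From the resulting Player~2 position the duplicator must fire an $a$-transition of $B$ for the recorded letter $a$, again at zero delay, before control returns to Player~1. To force Player~2's response to be instantaneous, so that she genuinely matches the same timed word, I would add one auxiliary clock that is reset when control passes to Player~2 and is guarded to equal $0$ on her outgoing edges. Completeness of the two timed LTSs guarantees that such an $a$-transition always exists, so no player ever deadlocks, and a play of this game is precisely a pair of equally labelled runs $\rho_A$ in $A$ and $\rho_B$ in $B$.

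It remains to express the fair-simulation winning condition, namely that $\rho_B$ satisfies $B$'s parity condition whenever $\rho_A$ satisfies $A$'s. Reading off the pair $(\priority_A,\priority_B)$ of priorities at each position, this is the $\omega$-regular implication ``$\priority_A$-parity $\Rightarrow$ $\priority_B$-parity'' over the finite alphabet of priority pairs. Since the complement of a parity condition is again a parity condition and a disjunction of parity conditions is $\omega$-regular, this implication is recognised by some deterministic parity automaton $\mathcal{D}$ whose number of states and priorities depends only on the number of priorities of $A$ and $B$, for instance via an index-appearance-record construction. Taking the synchronous product of the arena above with $\mathcal{D}$, and colouring positions by the priority that $\mathcal{D}$ assigns, turns the whole construction into a timed parity game: Player~2 wins it from the initial pair of configurations iff she wins the fair simulation game, iff $A$ is fairly simulated by $B$.

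Finally I would invoke \cref{lem:TPG}. The product game has $\mathcal{O}(\size{Q_A}\cdot\size{Q_B}\cdot\size{\mathcal{D}})$ states, up to the constant turn-taking and letter-storing overhead, together with $\size{\clocks_A}+\size{\clocks_B}+1$ clocks, a maximal constant bounded by the larger of the two automata's constants, and as many priorities as $\mathcal{D}$; all of these are at most polynomial in the two automata save for the factor contributed by $\mathcal{D}$, which is independent of the timed parameters. \cref{lem:TPG} then solves the game in \EXPTIME, yielding the claimed bound. I expect the main obstacle to be the encoding of the game rather than the complexity count: one has to argue carefully that sharing the clock space correctly synchronises the real-valued delays of both components, that the zero-delay gadget genuinely prevents the duplicator from gaining any extra timing power, and that the passage from the implication-of-parities condition to an honest parity condition via $\mathcal{D}$ inflates neither the clocks nor the constants of the timed game.
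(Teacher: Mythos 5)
Your proposal is correct and follows essentially the same route as the paper: encode fair simulation as a turn-based timed game on a bipartite product arena (with an auxiliary clock forcing the duplicator's zero-delay response), express the winning condition ``Player~1's run accepting $\Rightarrow$ Player~2's run accepting'' as a single parity condition with a blow-up depending only on the number of priorities, and invoke \cref{lem:TPG}. The only cosmetic difference is that you package the implication-of-parities as a product with a deterministic parity automaton obtained via index appearance records, whereas the paper builds the same object in place by complementing one colouring, forming a Rabin condition on the product, and applying a latest-appearance-record conversion.
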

\begin{proof}
For membership in \EXPTIME, it suffices to observe that the simulation game can be presented as timed parity game with suitable bounds, then apply \cref{lem:TPG}.

Assume w.l.o.g.\ that both configurations to check are in the same timed automaton
$\auta = (\states,\iota,\clocks,\transitions,\Sigma,\priority)$ with $c=\card{\priority(Q)}$ many colours.

The game is played on the product, with each player moving on their coordinate according to the TA semantics.
The arena is bipartite and enforces alternation; Player~1 states are $Q_1=(Q\x Q)$, Player~2 states are $(\transitions\x Q)$.
The latter record Player~1's choice of discrete transition $t\in\transitions$.
Player~1 first announces a delay (which affects all clocks and she remains in control) or the transition she wants.
The latter choice resets some new clock and moves to a Player~2 owned state. 
Player~2 then has to move according to a transition carrying the same label as $t$,
and updating both components along the two chosen transitions.
This way, both players jointly construct on a timed word read in both copies.

The winning condition (parameter $L$ for the timed parity game) encodes
that if Player~1's run is accepting then so is that of Player~2.
This can be achieved with a blow-up of the state space that is exponential only in $c$, the number of colours.
Essentially, to implement the implication, 
\begin{enumerate}
    \item increment all colours in Player~1's copy by one to negate the set of accepted runs;
    \item interpret the parity colouring(s) as Rabin chain condition with $c/2$ many indices\footnote{More precisely, if $c$ is even then both Rabin chain conditions have size $c/2$ and otherwise, if it is odd, they both have size $\lceil c/2\rceil$.}.
    \item create a new Rabin acceptance condition for the product that implements the implication:
        for every index set $(F_i,I_i)$ of Player~1, of finite and infinitely occurring states, resp.,
        introduce a new Rabin pair $(F_i\x\states, I_i\x\states)$ and similarly,
        for every index set $(F'_i,I'_i)$ of Player~2, introduce a new Rabin pair $(\states\x F'_i, \states\x I'_i)$. The size of the resulting index set is $c$. 
    \item Finally, turn the whole game with Rabin acceptance back into one with parity acceptance
        using least appearance records \cite{Gurevi82,Dziemb97}.
        This results in a timed parity game with $m=\card{\states^2+\transitions\x\states} \cdot c!$
        states and $2c+1$ colours.
\end{enumerate}
The claim of \exptime~ membership now follows by applying \cref{lem:TPG} as the obtained complexity can be bounded by $\mathcal{O}((m\cdot n!\cdot 2n \cdot (2k+1)n \cdot 2c+1)^{(2c+2)})$. The base in the above expression is dominated by $n!\cdot c!$ which can be bounded by $2^{n\cdot c}$ by using Stirling's approximation, which gives an exponential bound on the time complexity. 
\end{proof}
\begin{corollary}\label{cor:inclusion}
    Timed language inclusion is decidable and \EXPTIME-complete
    for history-deterministic TA.
    More precisely, given 
    a TA $\lts$ with initial state $q$
    and a history-deterministic TA $\lts'$ with initial state $q'$,
    checking if $q\langincl q'$ holds is \EXPTIME-complete.
\end{corollary}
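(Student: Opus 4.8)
The plan is to establish the two bounds separately, obtaining the upper bound directly from the results of this and the previous section, and the lower bound by a reduction from timed games.

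For the \EXPTIME\ upper bound, the first step is to turn the inclusion question into a fair-simulation question using \cref{thm:triangle}. Since $\lts'$ is history-deterministic from $q'$, I would instantiate \cref{thm:triangle} with $\lts'$ in the role of the history-deterministic system and with the (complete) system $\lts$, initial state $q$, in the role of the arbitrary complete fair LTS; this yields the equivalence $q \langincl q' \iff q \fairsim q'$. Here I would stress that $\lts$ may be taken complete without loss of generality, since every TA induces a complete timed LTS and we only consider complete LTSs throughout, which is exactly the hypothesis \cref{thm:triangle} imposes on the simulated side.

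The second step is to decide $q \fairsim q'$. To match the single-automaton setting of \cref{lem:GFGTA:fairsim}, I would first take the disjoint union of $\lts$ and $\lts'$, which is again a parity TA of size polynomial in the two inputs and in which both $q$ and $q'$ live. Then \cref{lem:GFGTA:fairsim} decides $q \fairsim q'$ in \EXPTIME, giving \EXPTIME\ membership of the inclusion problem.

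The \EXPTIME-hardness is the main obstacle. Note that with a \emph{deterministic} specification inclusion is only \textsc{PSpace}-complete (complement the specification and test the product for emptiness), so the hardness must genuinely exploit non-determinizable history-deterministic specifications rather than follow from a determinization argument. I would obtain it by a reduction from solving timed (parity) games, which are \EXPTIME-complete: their membership in \EXPTIME\ is \cref{lem:TPG}, and \EXPTIME-hardness is known from the literature. Given such a game $\gsgame$, I would let the implementation $\lts$ generate all environment behaviours and build a history-deterministic specification $\lts'$ of polynomial size whose resolver is forced to encode a winning strategy of Player~2, so that $q \langincl q'$ holds precisely when Player~2 wins $\gsgame$. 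The delicate point, where the bulk of the work lies, is to arrange the nondeterminism of $\lts'$ so that $\lts'$ is history-deterministic \emph{by construction} (independently of who wins $\gsgame$) while still making inclusion capture game-winning; verifying that the intended resolver indeed witnesses history-determinism, and faithfully handling the alternation of delays and discrete moves as well as possible Zeno plays, is the crux of the reduction.
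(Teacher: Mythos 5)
Your \exptime\ membership argument is exactly the paper's proof of this corollary: since the specification is history-deterministic, \cref{thm:triangle} turns language inclusion into fair simulation (the completeness hypothesis needed there is the paper's standing assumption on LTSs and on TAs), and fair simulation is decided in \exptime\ via \cref{lem:GFGTA:fairsim}; your disjoint-union step is just that theorem's ``assume w.l.o.g.\ that both configurations to check are in the same timed automaton.''

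The genuine gap is the lower bound, which you announce but do not prove. You rightly note that hardness cannot be inherited from deterministic specifications, and you correctly identify the crux of any reduction: the constructed specification must be history-deterministic \emph{by construction}, independently of the answer, or else the instance violates the promise of the problem. But you then defer exactly this crux -- no automaton, no resolver, no correctness argument is given -- so the ``\exptime-complete'' claim is only half established. The paper's own lower bound is \cref{lem:GFGTA:fairsim-lb}, proved immediately after the corollary: a reduction from countdown games \cite{JLS2008}, in which Player~1 chooses numbers by letting time elapse and the specification tracks the countdown configuration with two clocks, showing \exptime-hardness of fair simulation already for safety or reachability acceptance. It is worth saying that the concern you raise is not vacuous even for that reduction: the specification it produces is not shown to be history-deterministic independently of who wins the countdown game, so, read literally, it establishes hardness of fair simulation between arbitrary TA rather than hardness of inclusion under the HD promise; making the promise hold by construction is precisely the delicate point you flag. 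Still, the paper can point to a concrete, fully proved \exptime-hardness statement, whereas your proposal contains no completed hardness argument at all; as it stands, it proves only membership in \exptime.
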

\begin{proof}
    As $\autb$ is history-deterministic and by
    \cref{thm:triangle},
    we have 
    $q\langincl q'$ if, and only if,
    $q\fairsim q'$.
    The result follows from 
    \cref{lem:GFGTA:fairsim}.
\end{proof}

A matching lower bound holds even for safety or reachability acceptance, assuming that constants in transition guards are given in binary.
\begin{lemma}
    \label{lem:GFGTA:fairsim-lb}
    Checking fair simulation between TA is \exptime-hard already for reachability or safety acceptance, or over finite words.
\end{lemma}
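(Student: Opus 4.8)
The goal is to prove \exptime-hardness of checking fair simulation between timed automata, even restricted to reachability or safety acceptance (or finite words). The natural plan is to reduce from a known \exptime-complete problem for timed automata. The canonical candidate is the reachability/emptiness or the universality/language-inclusion problem for timed automata with constants encoded in binary, or more robustly, the acceptance problem of a linearly-bounded alternating Turing machine, which is the standard source of \exptime-hardness for timed-automata problems (as in the classical results on timed automata where binary encoding of constants pushes \textsc{PSPACE} up to \exptime).

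Concretely, I would reduce from the halting/acceptance problem of an alternating Turing machine working in polynomial space, or equivalently from a succinctly-presented reachability game, exploiting that a single clock compared against a binary-encoded constant can count up to exponentially many steps. The idea is to build two timed automata $\auta$ and $\autb$ so that the fair simulation game between them simulates the alternation of the Turing machine: Player~1 (the challenger, playing on $\auta$) encodes the choices of one kind of ATM state and Player~2 (the duplicator, playing on $\autb$) the other, with the alternation of the fair-simulation game mirroring the existential/universal alternation of the ATM. The timing constraints, using constants written in binary, let the gadget enforce that a configuration of polynomial size is faithfully copied and verified across a bounded number of steps. One sets up the acceptance condition (reachability to a ``bad'' sink for Player~2, or dually safety) so that Player~2 can maintain the simulation if and only if the duplicator can force the computation into an accepting ATM configuration; this makes $\auta \fairsim \autb$ equivalent to acceptance of the ATM.

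The key steps, in order, are: first fix the source problem and its \exptime-hardness (alternating linear-space TM acceptance, or an \exptime-hard timed reachability-game variant with binary constants); second, design clock gadgets that use binary-encoded guards to encode an exponentially large tape/counter within polynomially many states and clocks, so that one round of the simulation game corresponds to one step (or one block of steps) of the machine; third, partition the moves so that the two players' roles in the simulation game correspond exactly to the two players/quantifiers of alternation, and argue that the matching requirement of simulation (same labels, and Player~2's run must be fair whenever Player~1's is) precisely encodes the winning condition; fourth, choose reachability (or dually safety) acceptance so that the reduction stays within the claimed restricted classes, and separately note the adaptation to finite words. Finally, verify both directions of correctness: a duplicator strategy winning the simulation game yields an accepting ATM computation tree, and conversely.

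The main obstacle will be the encoding step: faithfully simulating alternation inside a \emph{simulation game} (rather than an ordinary reachability game) while ensuring that the matching-of-labels discipline of fair simulation does not give Player~1 (the challenger) spurious ways to escape, nor give Player~2 (the duplicator) unintended freedom. I expect the delicate part is designing the timing gadgets so that a cheating move by either player is immediately punished by the acceptance condition — forcing the challenger's run to be fair only along faithful simulations, and forcing the duplicator into the losing sink otherwise. Getting the binary constants to buy the exponential blow-up while keeping the construction polynomial-sized, and checking that the alternation of turns in the bipartite simulation arena lines up with the ATM's state alternation, is where the bulk of the careful work lies; the rest (correctness of the two simulation directions) should then follow by a routine strategy-translation argument.
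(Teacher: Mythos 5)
Your proposal is a plan rather than a proof: everything that would constitute the actual argument is deferred. You correctly identify a viable source of hardness (alternating polynomial-space Turing machines, with binary-encoded clock constants buying the exponential blow-up), and you correctly identify the main obstacle --- making the ATM's existential/universal alternation line up with the asymmetric move structure of a fair simulation game, where Player~1 chooses letters and transitions in $\auta$ first and Player~2 may only respond with equally-labelled transitions in $\autb$ --- but you then leave exactly that construction undone, saying yourself that this is ``where the bulk of the careful work lies.'' Since the lemma is a hardness claim, the construction \emph{is} the proof; without concrete gadgets (how a polynomial-size ATM configuration is stored in clocks, how one round of the simulation game verifies one machine step, how cheating by either player is punished while staying within reachability or safety acceptance), there is nothing to verify, and the two ``routine'' correctness directions cannot even be stated, let alone checked.

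The paper sidesteps this difficulty by reducing from \emph{countdown games} \cite{JLS2008}, which are already known to be \exptime-complete: a countdown game is a finite graph with binary-encoded edge weights in which Player~1 proposes a number $l$, Player~2 picks an $l$-weighted edge, and Player~1 wins by making the running total hit a target $k$ exactly. This packages all the ATM-encoding work into a known result, so the reduction to fair simulation needs only a small gadget: $\auta$ is clockless with unconstrained self-loops on two letters, Player~1 proposes $l$ by delaying that long before a discrete $a$-move, and $\autb$ uses two clocks (total elapsed time, and per-round delay, reset each round) together with transitions to a winning sink that punish Player~1 for proposing an $l$ matching no edge or for letting total time exceed $k$; Player~1 wins the simulation game iff she wins the countdown game, and the construction works over finite words as well as with reachability or safety acceptance. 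If you wanted to push your direct-ATM approach through, you would essentially be re-proving the hardness of countdown games inside the simulation game --- possible, but much longer; choosing an intermediate problem whose combinatorics already match the target game is precisely what makes the paper's proof short.
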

\begin{proof}
This can be shown by reduction from \emph{countdown games} \cite{JLS2008},
which are two-player games $(Q,T,k)$ given by a finite set $Q$ of control states, a finite set $T \subseteq (Q \x \N_{>0} \x Q)$ of transitions, labelled by positive integers, and a target number $k\in\N$.
All numbers are given in binary encoding.
The game is played in rounds, each of which starts in a pair $(p,n)$ where $p\in Q$ and $n \le k$,
as follows.
First Player~1 picks a number $l \le k-n$, so that at least one $(p,l,p')\in T$ exists;
then Player~2 picks one such transition and the next round starts in $(p',n+l)$.
Player~1 wins iff she can reach a configuration $(q,k)$ for some state $q$.

Determining the winner in a countdown game is \exptime-complete \cite{JLS2008}
and can easily be encoded as a simulation game between two TAs $\auta$ and $\autb$ as follows.
Let $\auta$ be the TA with no clocks and unrestricted (guards are $\mathit{True}$) self-loops 
for the two letters $a$ and $e$;
the idea is that Player~1 proposes $l$ by waiting that long and then makes a discrete $a$-labelled move.
Then Player~2, currently in some state $p$ can update his configuration to mimic that of the countdown game,
and punish (by going to a winning sink) if Player~1 cheated or the game should end.
To implement this, $\autb$ has two clocks: one to store $n$ -- the total time that passed -- and one to store the current $l$, which is reset in each round.

Suppose Player~1 waits for $l$ units of time and then proposes $a$. 
Player~2, currently in some state $p$ 
will have 
\begin{itemize}
    \item 
$a$ and $e$-labelled transitions to a winning state with a guard that verifies that
there is no transition $(p,l,p')$. 
This can be done by checking the guard $\wedge_{i} (x_2\neq l_i)$ for 
all $l_i$ such that there exists $p'$ with an edge $(p,l_i,p')$ in the countdown game.
\item $a$-labelled transitions to a state $p'$, with a guard that verifies that 
some $(p,l,p')\in T$ exists, and which resets clock $x_2$.
\item $a$, and $e$-labelled transitions to a winning state guarded by $x_1>k$.
    This enables Player~2 to win if the global time has exceeded the target $k$.
\end{itemize}
The only way that Player~1 can win is by following a winning strategy in the countdown game
and by playing the letter $e$ once $\autb$ is in a configuration $(q,k)$.
Player~2 will not be able to respond.
\end{proof}

\subsection{Composition with Games}

Implicitly, at the heart of these reductions is the notion of composition: the composition of the game to solve with a history-deterministic automaton for the winning condition yields an equivalent game with a simpler winning condition. We say that an automaton is \emph{good-for-games} if this composition operation preserves the winner of the game for all games. While history-determinism always implies good-for-gameness, the converse is not necessarily true. While the classes of history-deterministic and good-for-games automata coincide for $\omega$-regular automata~\cite{BL19}, this is not the case for quantitative automata~\cite{BL21}, which can be good-for-games without being history-deterministic. We argue that for reachability and safety timed automata, good-for-gameness and history-determinism coincide.

Intuitively, the composition of a timed automaton $\T$ and a timed game $\gsgame$ 
consists of a game in which the two players play on $\gsgame$ while Player~2 must also build, letter by letter, a run of $\T$ on the outcome of the game in $\gsgame$.

\begin{definition}[Composition]
Given a  TA $\T$ and a timed game $\gsgame$ with winning condition $\lang(\T)$, the composition $\comp{\gsgame}{\T}$ consists of a game played on the product of the configuration spaces of $\T$ and $\gsgame$, starting from the initial state of both, in which, at each turn $i$, from a configuration $(c_i,c'_i)$, 
the owner of the current $\gsgame$-state chooses a time delay $d_i\in \nnreals$ advancing all clocks on both sides, and chooses a move in $\gsgame$ to a successor-configuration $c_{i+1}$, producing a letter $a_i$,
and then Player~2 chooses a transition over $a_i$ enabled at the current $\T$-configuration $c'_i$, leading to a successor-configuration $c'_{i+1}$. The game then proceeds from $(c_{i+1},c'_{i+1})$.

Player~2 wins infinite plays if the run built in $\T$ is accepting, and loses if it is rejecting or if she cannot move in the $\gsgame$-component.

\end{definition}
Observe that if Player~1 wins in $\gsgame$, then he also wins in $\comp{\gsgame}{\T}$ with a strategy that produces a word not in $\lang(\T)$ in $\gsgame$, as then Player~2 can not produce an accepting run in $\T$.

\medskip
Boker and Lehtinen \cite[Lemma 7]{BL21} show that for (quantitative) automata for which the letter-game is determined, (threshold) history-determinism coincides with good-for-gameness. The lemma is stated for quantitative automata, where thresholds are relevant; in the Boolean setting, it simply states that the determinacy of the letter game implies the equivalence of history-determinism and good-for-gameness. In our timed setting, a similar argument, combined with the determinacy of the letter game for safety and reachability TA, gives us the following.

\begin{theorem}\label{lem:gfg2hd}
	Let $\T$ be a safety or reachability TA. The following are equivalent: %
	\begin{enumerate}
		\item\label{item:hd} $\T$ is history-deterministic.
		\item\label{item:gfg} For all timed games $\gsgame$ with winning condition $\lang(\T)$, whenever  Player~2 wins $\gsgame$, she also wins $\comp{\gsgame}{\T}$.
	\end{enumerate} 

\end{theorem}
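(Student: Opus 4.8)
The plan is to prove the two implications separately, following the template of Boker and Lehtinen \cite{BL21}. The direction (\ref{item:hd})$\Rightarrow$(\ref{item:gfg}) is the always-true fact that history-determinism implies good-for-gameness, and it needs no restriction on the acceptance condition; the interesting direction is (\ref{item:gfg})$\Rightarrow$(\ref{item:hd}), which is where the determinacy of the letter game for safety and reachability TA enters.

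For (\ref{item:hd})$\Rightarrow$(\ref{item:gfg}), suppose $\T$ has a resolver $r$ and let $\gsgame$ be any timed game with winning condition $\lang(\T)$ that Player~2 wins, say with strategy $\sigma$. In $\comp{\gsgame}{\T}$ Player~2 plays the combined strategy that follows $\sigma$ in the $\gsgame$-component and, in the $\T$-component, resolves each discrete transition by applying $r$ to the $\T$-run built so far together with the current letter (delay steps are forced, since the timed LTS is deterministic on delays). Because $\sigma$ is winning in $\gsgame$, the emitted word $w$ lies in $\lang(\T)$ regardless of Player~1; and because $r$ is a resolver, the run it builds on $w$ is accepting. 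Hence Player~2 wins $\comp{\gsgame}{\T}$.

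For the converse I argue contrapositively and follow the abstract argument of \cite[Lemma~7]{BL21}, which shows that whenever the letter game of an automaton is determined, good-for-gameness implies history-determinism. The timed content is therefore to establish that the letter game on a safety or reachability TA is \emph{determined}. Its winning condition for Player~2, namely ``$w\notin\lang(\T)$ or $\rho$ is accepting'', is Borel: for safety acceptance a word has a safe run iff every finite prefix admits a safe partial run, by König's lemma applied to the tree of safe partial runs, which is finitely branching because the word fixes all delays; thus $\lang(\T)$ is a countable intersection of Borel sets, and dually for reachability it is a countable union. After discretising the continuous delay moves up to region equivalence using \cref{lem:paths} and \cref{lem:runtree-eq}, so that only countably many moves survive, Borel determinacy supplies a winning strategy $\tau$ for Player~1. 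The \cite{BL21} argument then converts $\tau$ into a timed game $\gsgame$ with winning condition $\lang(\T)$ on which Player~2 wins $\gsgame$ but loses $\comp{\gsgame}{\T}$: intuitively $\gsgame$ forces every play into $\lang(\T)$, so Player~2 wins it, while in the composition Player~1 replays $\tau$ against the $\T$-run that Player~2 is forced to reveal, producing a word in $\lang(\T)$ on which Player~2's run is rejecting. This contradicts (\ref{item:gfg}), so $\T$ must be history-deterministic.

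The main obstacle is the determinacy step, and within it the reachability case. For safety the compactness argument above makes $\lang(\T)$ Borel and the forcing game transparent, since $\lang(\T)$ is prefix-closed. For reachability $\lang(\T)$ is no longer prefix-closed, so forcing every play of $\gsgame$ into the language — and turning the letter game into a determined object in the first place — relies on first isolating the \emph{almost-final} configurations of \cref{lem:positional-reach-resolvers}, which are region-definable and from which every continuation is accepted; routing the argument through these, rather than through a direct compactness argument, is where the reachability proof genuinely differs from the safety one. A secondary point to check is that Borel determinacy, which is cleanest for countable move sets, really does apply after the region-based discretisation of delays, i.e.\ that restricting Player~1 to region-respecting delay moves changes neither player's winning status; this is precisely the role played by \cref{lem:runtree-eq} and \cref{lem:paths}.
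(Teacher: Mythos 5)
Your overall route is the paper's own: direction (\ref{item:hd})$\Rightarrow$(\ref{item:gfg}) is proved exactly as in the paper (compose a winning $\gsgame$-strategy with the resolver in the $\T$-component), and for (\ref{item:gfg})$\Rightarrow$(\ref{item:hd}) you, like the paper, invoke determinacy of the letter game to extract a winning strategy $\strat$ for Player~1, build from it the game $\gsgame_\strat$ whose plays are exactly the words $\strat$ produces (so Player~2 wins $\gsgame_\strat$), and let Player~1 replay $\strat$ in $\comp{\gsgame_\strat}{\T}$ to contradict (\ref{item:gfg}). The problems are concentrated in how you establish determinacy, and there are two of them.

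First, the discretisation detour is both unnecessary and, as written, a gap. Martin's Borel determinacy theorem applies to Gale--Stewart games over an \emph{arbitrary} set of moves, not only countable ones, which is why the paper (\cref{determinacy}) applies it directly to the letter game with real-valued delays. If you nevertheless insist on restricting Player~1 to countably many region-respecting delays, the claim that this ``changes neither player's winning status'' is a substantive assertion that you do not prove, and the lemmas you cite do not prove it: \cref{lem:runtree-eq} compares run-trees from two region-equivalent configurations on two \emph{different} words, and says nothing about replacing a letter-game strategy using arbitrary real delays by one using representative delays, for a winning condition that refers to language membership rather than to a region-graph objective. Second, your reachability case points at the wrong machinery and is therefore left unproved. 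For a complete reachability TA, $w\in\lang(\T)$ iff some finite prefix of $w$ admits a partial run reaching the target set (completeness lets any such partial run be extended), so $\lang(\T)$ is a countable union of Borel sets---if anything easier than the safety case; no appeal to the almost-final configurations of \cref{lem:positional-reach-resolvers} is needed, and those are used in the paper for a different purpose (region-based resolvers and the token-game characterisation). Likewise, the construction of $\gsgame_\strat$ is uniform in the acceptance condition: every play consistent with $\strat$ yields a word in $\lang(\T)$ simply because producing a word in the language (with Player~2's run rejecting) is Player~1's winning condition in the letter game; prefix-closedness of the language plays no role. As routed through almost-final configurations, your reachability half of the theorem does not actually get proved, although deleting the detours and citing Borel determinacy for arbitrary move sets repairs the argument and makes it coincide with the paper's.
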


\begin{proof}
	(\ref{item:hd})$\implies$(\ref{item:gfg}) If $\T$ is history-deterministic, the resolver can be used as a strategy in the $\T$ component of $\comp{\gsgame}{\T}$. When combined with a winning strategy in $\gsgame$ that guarantees that the $\gsgame$-component produces a word in $\lang(\T)$, the resolver guarantees that the $\T$-component produces an accepting run, thus giving the victory to Player~2.
	
	(\ref{item:gfg})$\implies$(\ref{item:hd}) Towards a contradiction, assume $\T$ is not history-deterministic, that is, by determinacy of the letter game from~\cref{determinacy}, that Player~1 has a winning strategy $\strat$ in the letter game.
	Now consider the game $\gsgame_\strat$, without clocks or guards, in which positions, all belonging to Player~1, consist of the prefixes of timed words played by $\strat$, with moves $w\xrightarrow{(t,a)} w(t,a)$. %
	As $\strat$ is winning for Player~1, all maximal paths in $\gsgame_\strat$ are labelled by a timed word in $\lang(\T)$, so $\gsgame_\strat$ is winning for Player~2.
	
	We now argue that Player~1 wins $\comp{\gsgame_\strat}{\T}$ by interpreting Player~2's moves in the $\T$ component as her moves in the letter game, and choosing moves in $\gsgame$ mimicking the letter dictated by $\strat$. Then, if Player~2 could win against this strategy in $\comp{\gsgame_\strat}{\T}$, she could also win against $\strat$ in the letter game by interpreting Player~1's choices of letters as moves in $\gsgame$, and responding with the same transition as she plays in the $\T$ component of $\comp{\gsgame_\strat}{\T}$. Such a strategy is a valid strategy in the letter game on $\T$, and while it might not be winning in general, it is winning against $\strat$, contradicting that $\strat$ is a winning strategy for Player~1.
\end{proof}

This proof fails for acceptance conditions beyond safety and reachability, as it isn't clear whether timed B\"uchi and coB\"uchi automata define Borel sets. If this was the case then history-deterministic timed automata would be exactly those that preserve winners in composition with games, as is the case in the $\omega$-regular setting.

	\section{Deciding History-determinism}
	\label{sec_decision}
	\label{sec:decision}

Recall the letter game characterisation of history-determinism: Player~1 plays timed letters and Player~2 responds with transitions.  Player~2 wins if either the word is not in the language of the automaton, or her run is accepting.
As TA are not closed under complement, it isn't clear how to solve this game directly. Bagnol and Kuperberg~\cite{BK18} introduced  \textit{token games}, which are easier to solve, but which coincide with the letter game for various types of automata\cite{BK18,BKLS20,BL22}.

In this section we show that the simplest token game, called the one-token game, characterises history-determinism for safety and reachability LTSs, and therefore TAs. The case of reachability automata was already shown in~\cite{BL22}, but for self-containment we include the proof here, generalised to fair LTSs. This strengthens and simplifies the result of the conference version~\cite{HLT2022}, which used the two-token game to characterise the history-determinism of reachability automata.

\begin{definition}[1-token game~\cite{BK18}]
Given a fair LTS 
$\lts=(V,\Sigma,E)$ with initial state $s_0\in V$, the game $G_1(\lts)$ proceeds in rounds. At each round $i$:
\begin{itemize}
\item Player 1 plays a letter $a_i\in\Sigma$
\item Player 2 plays a transition $\tau_i$ in $\lts$
\item Player 1 plays a transition $\tau'_{i}$ in $\lts$
\end{itemize}
This way, Player~1 chooses an infinite word 
$w=a_0a_1\ldots$ and a run $\rho'=\tau'_{0}\tau'_{1}\tau'_{2}\dots$,
and Player~2 chooses a run $\rho=\tau_0\tau_1\dots$.
The play is winning for Player 1 if $\rho'$ is an accepting run over $t_0a_0...$ from $s_0$ but $\rho$ is not. Else it is winning for Player~2.

Given a TA $\T$, we write $G_1(\T)$ to mean the $1$-token game on the fair LTS induced by $\T$.
\end{definition}

\begin{remark}\label{determinacy} $G_1(\lts)$ and the letter game are determined for any fair LTS $\lts$ with any Borel-definable acceptance condition \cite{M1975}.
In particular, the letter game is determined for both safety and reachability TA $\T$. Indeed, the winning condition for Player~2 is a disjunction of the complement of $\lang(\T)$ and of the acceptance condition of $\T$. Then, as long as $\lang(\T)$ is Borel, by the closure of Borel sets under complementation and disjunction, the letter-game is Borel, and therefore determined, following Martin's Theorem~\cite{M1975}. If time is not required to diverge, then reachability timed languages and safety timed languages are clearly Borel. Since words in which time diverges are also Borel (they can be seen as the countable intersection of words where time reaches each unit time), this remains the case when we require divergence.
\end{remark}

$G_1(\lts)$ was shown to characterise history-determinism for a number of quantitative automata in~\cite{BL22}. In~\cref{lem:G1} below we show, using similar proof techniques, that this is also the case for all safety LTSs. The key observation is that for Player~2 to win the letter game, it suffices that she avoids mistakes. We then show that a winning strategy for her in $G_1(\lts)$ can be used to build such a strategy.

\begin{restatable}{lemma}{lemGone}\label{lem:G1}
Given a fair LTS $\lts$ with a safety acceptance condition, or interpreted over finite words, Player~2 wins $G_1(\lts)$ if and only if $\lts$ is history-deterministic.
\end{restatable}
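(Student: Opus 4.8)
The plan is to prove the two implications separately, reducing everything to the \emph{letter game} characterisation of history-determinism and to a notion of ``mistake'' tailored to safety acceptance.

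\emph{From history-determinism to winning $G_1(\lts)$.} This direction is straightforward. If $\lts$ is history-deterministic, fix a resolver $r$, equivalently a winning strategy for Player~2 in the letter game. In $G_1(\lts)$ Player~2 plays according to $r$, \emph{ignoring} Player~1's token moves $\tau'_i$: after each letter $a_i$ she plays the transition $r$ prescribes. Suppose a play were won by Player~1, so that Player~1's run $\rho'$ is accepting while Player~2's run $\rho$ is not. For safety (respectively finite-word) acceptance, $\rho'$ being accepting witnesses that the word $w=a_0a_1\dots$ produced by Player~1 lies in $\LTSlang(s_0)$; but then, since $r$ is a resolver, the run it builds on $w$ is accepting, i.e.\ $\rho$ is accepting, a contradiction. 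Hence Player~2 wins $G_1(\lts)$.

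\emph{From winning $G_1(\lts)$ to history-determinism.} This is the substantial direction, and I would follow the ``mistake''-based approach of~\cite{BL22}. First I isolate what Player~2 must achieve. Fix the safe region $F$, and call a finite word-prefix $p$ \emph{live} if it extends to some word of $\LTSlang(s_0)$. Say that a configuration $c$ reached by Player~2 after reading a live prefix $p$ is \emph{good} if it still accepts the whole residual, i.e.\ $\{u \mid pu \in \LTSlang(s_0)\} \subseteq \LTSlang(c)$, and that Player~2 makes a \emph{mistake} when she moves from a good configuration to a non-good one along a live prefix. The first key claim is that, for safety acceptance, \emph{avoiding mistakes suffices to win the letter game}: the initial configuration $s_0$ is good, and if Player~2 stays good forever then, on any input $w \in \LTSlang(s_0)$, every prefix is live and each visited configuration $c_i$ satisfies $w_{\ge i}\in\LTSlang(c_i)$, so in particular $c_i\in F$; her run thus never leaves $F$ and is accepting. (Over finite words the same holds with ``$c_i$ can still reach an accepting state'' replacing ``$c_i\in F$''.) It therefore suffices to turn a winning strategy $\sigma$ for Player~2 in $G_1(\lts)$ into a mistake-avoiding letter-game strategy.

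The construction runs $\sigma$ inside a simulated copy of $G_1(\lts)$: Player~2 feeds the real incoming letters as Player~1's letters, plays her own transitions according to $\sigma$, and must herself supply the adversary token $\rho'$. The guarantee offered by $\sigma$ is that \emph{whenever the supplied token $\rho'$ is accepting, Player~2's $\sigma$-run is accepting as well}; the intention is to feed a token that witnesses that the current word is still in the language, so that this guarantee forces her genuine run to stay in $F$, that is, to avoid mistakes. The main obstacle --- and the only place where the single-token game is delicate --- is that this witnessing token must be produced \emph{online}, transition by transition, with no knowledge of the letters Player~1 will play in the future, whereas a global safe run witnessing membership is an object that in general depends on the whole infinite word. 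I expect this to be the crux. It is overcome using the specific shape of safety acceptance: rejection is an \emph{open} event, so Player~2 can only lose by committing a first mistake at some finite stage; up to that stage the word read is a live prefix, for which a safe run certainly exists; and because $G_1(\lts)$ only ever confronts Player~2 with a \emph{single} adversary run, it is enough for her to track one such witness rather than the whole determinised set of safe runs. Committing the token to follow, and remain within, the ``live-safe'' part of the configuration graph lets her maintain a genuine accepting witness for every live prefix, so that $\sigma$ never permits a mistake; by the first claim the resulting letter-game strategy is a resolver, and $\lts$ is history-deterministic. The finite-word case is entirely analogous, with the safe region replaced by the set of configurations from which an accepting state is still reachable along the residual.
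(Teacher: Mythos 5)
Your easy direction and your reduction of the converse to ``avoiding mistakes suffices for safety'' both match the paper. The genuine gap is in the step you yourself flag as the crux: how to extract a mistake-avoiding letter-game strategy from a winning $G_1$ strategy $\sigma$. Your construction has Player~2 simulate $G_1(\lts)$ and supply the adversary token herself, committed to ``remain within the live-safe part of the configuration graph,'' i.e.\ to keep the token in configurations that accept the entire residual of the current live prefix. But producing such a token \emph{online} is exactly the problem being solved: a run that stays good along every live prefix is precisely a mistake-avoiding run, and an online strategy producing one is precisely a resolver. For a safety LTS that is not history-deterministic no such online commitment exists at all --- consider a state whose two $a$-successors accept only $a^\omega$ and only $b^\omega$ respectively: neither successor is good, so the token cannot ``stay live-safe'' --- hence the feasibility of your token strategy is equivalent to the conclusion of the lemma, and assuming it is circular. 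The observation that $G_1$ involves only a single adversary run does not dissolve this: the single witness must still be chosen before future letters are known. Note also that the one natural online choice, letting the token copy Player~2's own transitions, makes the guarantee from $\sigma$ vacuous (the two runs coincide, so Player~1 never wins such a play), so nothing is extracted that way either.

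The paper closes this gap from the opposite side. Instead of Player~2 feeding a witness token, it considers a family of \emph{copycat strategies} for Player~1: copy Player~2's transitions until she plays a \emph{non-cautious} move --- one leading to a configuration $c'$ that fails to accept some word $w$ accepted from a sibling successor $c''$ --- and then deviate to $c''$, play $w$, and follow an accepting run on it, which Player~2's run from $c'$ cannot match. Since $\sigma$ beats every member of this family, $\sigma$ never plays a non-cautious move while its opponent copies it. Player~2 then plays the letter game by running $\sigma$ against an imaginary copycat opponent; cautiousness preserves goodness of her configurations, so for safety acceptance (or over finite words) her run is accepting whenever the word is in $\LTSlang(s_0)$. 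The idea you were missing is that the $G_1$ hypothesis is exploited through the adversary's ability to \emph{deviate and punish}, which yields a cautiousness guarantee for $\sigma$ without anyone ever having to construct a membership witness online.
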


\begin{proof}
If $\lts$ is history-deterministic then Player~2 wins $G_1(\lts)$ by using the resolver to choose her transitions. This guarantees that for all words in $L(\lts)$ played by Player~1, her run is accepting, which makes her victorious regardless of Player 1's run.

For the converse, if Player~2 wins $G_1(\lts)$ with strategy $\sigma$, consider the following family of \textit{copycat strategies} for Player~1: at first, Player~1 plays $\sigma$ and chooses the same transitions as Player~2.  If Player~2 ever chooses a transition $\tau$ from a configuration $c$ that is not \emph{language-maximal}, that is, moves to a configuration $c'$ that does not accept some word $w$ that is accepted by some other configuration $c''$ reachable by some other transition $\tau'$ from $c$, we call such a move \emph{non-cautious}, and Player~1 stops copying Player~2 and instead chooses $\tau'$. From there, Player~1 wins by playing $w$ and an accepting run on $w$ from $c''$. Since Player~2 wins $G_1(\lts)$, her winning strategy $\strat$ does not play any non-cautious moves against copycat strategies. 

Then, she can use $\strat$ in the letter-game, by playing as $\strat$ would play in $G_1(\lts)$ if Player~1 copies her transitions. This guarantees that she never makes a non-cautious move, and, in particular, for $\lts$ with a safety acceptance condition, never moves out of the safe region of the automaton unless the prefix played by Player~1 has no continuations in $L(\lts)$. This is a winning strategy in the letter-game, so $\lts$ is history-deterministic.
Similarly, for $\lts$ interpreted over finite words, the letter game is a safety game, and a strategy that never makes a non-cautious move remains in the safe region of this game, and is therefore winning.
\end{proof}

This argument does not work for reachability TA: it is no longer enough for Player~2 to avoid bad moves to win; she needs to also guarantee that she will actually reach a final state.

Given a fair LTS $\lts$ with a reachability acceptance condition, we call a state $q$ \textit{almost final} if, from $q$, there is an accepting run on all words and Player $2$ wins the letter game starting from $\lts^q$. In particular, every final state is also almost final.

\begin{lemma}\label{lem:G1reach}
Given an $\lts$ with a reachability acceptance condition, let $\lts'$ be $\lts$ where every almost final state of $\lts$ is made final in $\lts$. Then:
\begin{enumerate}
\item \label{item:inftofin} If Player $2$ wins $G_1(\lts)$ on infinite words, then Player $2$ also wins $G_1(\lts')$ on finite words;
\item \label{item:fintoinf} If $\lts'$ on finite words is history-deterministic, then so is $\lts$ on infinite words.
\end{enumerate} 
\end{lemma}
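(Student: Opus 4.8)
The two parts connect the 1-token game on the infinite-word reachability LTS $\lts$ with the 1-token game / history-determinism of the finite-word LTS $\lts'$ obtained by declaring every \emph{almost final} state of $\lts$ to be final. The guiding intuition is that in a reachability condition, reaching an almost final state is exactly as good as winning: from such a state Player~2 can both accept every word and win the letter game. So the finite-word game $G_1(\lts')$ should be ``the same game'' as $G_1(\lts)$ up to the point where an almost final state is hit.

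For part~\ref{item:inftofin}, the plan is to take a winning Player~2 strategy $\sigma$ in $G_1(\lts)$ and reuse it verbatim in $G_1(\lts')$, stopping as soon as Player~2's token enters an almost final (hence, in $\lts'$, final) state. I must argue Player~2 wins the resulting finite-word play. Suppose not: then there is a play in $G_1(\lts')$ where Player~1's token reaches a final state of $\lts'$ — i.e.\ an almost final state of $\lts$ — but Player~2's token, following $\sigma$, never does. I would extend this finite play into an infinite one in $G_1(\lts)$: since Player~1 reached an almost final state, by definition all continuations of the word so far are in $L(\lts)$, and Player~1 can continue with \emph{any} word (e.g.\ extending his own accepting run, which exists because the state is almost final). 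The extended word lies in $L(\lts)$, so some run accepts it, yet Player~2's run under $\sigma$ never visits a final (or almost final) state and hence never accepts — contradicting that $\sigma$ wins $G_1(\lts)$. The subtle point here is matching the two winning conditions: I must check that Player~1's finite-word win in $G_1(\lts')$ (his token final, Player~2's token not final) really does furnish a Player~1 win in the infinite game, which requires producing a Player~1 run that is accepting in $\lts$ over a word in $L(\lts)$ while Player~2's run is not.

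For part~\ref{item:fintoinf}, the plan is to take a resolver for $\lts'$ on finite words and assemble a resolver for $\lts$ on infinite words, in the spirit of the composition already used in the proof of \cref{lem:positional-reach-resolvers}. The finite-word resolver for $\lts'$, run on an input word $w \in L(\lts)$, must eventually drive Player~2's run to a final state of $\lts'$, i.e.\ to an almost final state of $\lts$ — because some prefix of any word in $L(\lts)$ admits a run ending in an almost final configuration, and a resolver for $\lts'$ must realise this. From that almost final state I switch to the letter-game winning strategy guaranteed by the definition of ``almost final'', which wins the reachability letter game from there. Concatenating these two strategies yields a resolver for $\lts$: on any $w \in L(\lts)$ the first phase reaches an almost final state, and the second phase then secures an accepting (reachability) run. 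I would be careful to argue that $w \in L(\lts)$ indeed forces some prefix to admit a run into an almost final state, so that the handover is triggered.

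The main obstacle I expect is part~\ref{item:fintoinf}, specifically justifying that a finite-word resolver for $\lts'$ \emph{must} steer into an almost final state whenever $w \in L(\lts)$. This hinges on the claim that every $w \in L(\lts)$ has a prefix with a run reaching an almost final state — equivalently, that one cannot accept $w$ only ``in the limit'' without ever passing through a configuration from which all continuations are accepted and Player~2 wins the letter game. Since almost-finality bundles universality with a letter-game win, I would need to exhibit, from an accepting run on $w$, a point along some run that is genuinely almost final, and then invoke that $\lts'$ makes it final so that the $\lts'$-resolver is obliged to reach it. This is essentially the argument already sketched for \cref{lem:positional-reach-resolvers}, so I would lean on that reasoning, but it is the delicate step where the infinite reachability condition and the finite-word abstraction have to be reconciled.
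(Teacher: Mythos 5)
Your plan for part~\ref{item:fintoinf} is essentially the paper's own proof (and the argument already used for \cref{lem:positional-reach-resolvers}): follow the finite-word resolver of $\lts'$ until the prefix played lies in $L(\lts')$ --- at which point the resolver's run must sit at a final state of $\lts'$, i.e.\ an almost final state of $\lts$ --- then hand over to the strategy witnessing almost-finality; and if no prefix ever lies in $L(\lts')$, the infinite word is not in $L(\lts)$ (any accepting run of $\lts$ visits a final, hence almost final, state after finitely many steps), so Player~2 wins by default. That part is sound.

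Part~\ref{item:inftofin}, however, has a genuine gap, and it is exactly at the point you flag as ``subtle'' but then wave through. After the bad finite play (Player~1's token at an almost final state $q_1$, Player~2's token, following $\sigma$, never having visited one, now at $q_2$), you let Player~1 continue with \emph{any} word and assert that Player~2's run under $\sigma$ ``never visits a final state and hence never accepts.'' Nothing supports this, and in general it is false: $\sigma$ is a \emph{winning} strategy in $G_1(\lts)$, so if Player~1 plays an arbitrary continuation and builds an accepting run of his own, then Player~2's run under $\sigma$ \emph{must} accept --- precisely because $\sigma$ is winning --- and no contradiction arises at all. The contradiction has to be manufactured by exploiting the hypothesis you never use: that $q_2$ is \emph{not} almost final. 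This splits into two cases. Either $q_2$ is not universal, and Player~1 continues with a word having no accepting run from $q_2$ whatsoever, so Player~2's run cannot accept no matter what she does; or $q_2$ is universal but Player~2 does not win the letter game from $q_2$, in which case, by determinacy of the letter game (\cref{determinacy}), Player~1 has a winning strategy there and uses it --- reacting to Player~2's transition choices, which he sees in $G_1$ --- to choose letters keeping her run away from final states. In both cases Player~1 simultaneously builds his own accepting run from $q_1$ using the strategy witnessing that $q_1$ is almost final (note: his run so far is \emph{not} accepting, since almost final need not be final, so he genuinely needs that strategy rather than ``extending his own accepting run''). Only with this adversarial choice of continuation do you obtain a $\sigma$-consistent play of $G_1(\lts)$ in which Player~1's run accepts and Player~2's does not, contradicting that $\sigma$ is winning. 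Without it, part~\ref{item:inftofin} does not go through.
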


\begin{proof}
(\ref{item:inftofin}) Let $\strat$ be winning strategy for Player 2 in $G_1(\lts)$ on infinite words. Let the strategy $\strat'$ for Player 2 on $G_1(\lts')$ on finite words simply copy $\strat$.

If Player 2's run in a play of $G_1(\lts')$ that agrees with $\strat'$ reaches a state that was almost final in $\lts$, and therefore final in $\lts'$, it is winning for Player 2. If Player 2's run in a play of $G_1(\lts')$ that agrees with $\strat'$ does not reach such a state, we argue that Player 1's run also does not reach such a state. Indeed, towards a contradiction, assume that after some finite play $\pi$, Player 2's run has reached a state $q_2$ that is not final, while Player 1's run has reached a state $q_1$ that is final. Then, $\pi$ is also a play prefix that agrees with $\strat$ in $G_1(\lts)$. We argue that Player 1 can then win in $G_1(\lts)$, a contradiction: since $q_2$ is not almost final, he can play letters so that Player 2 does not build an accepting run; meanwhile, from $q_1$, he can use the strategy witnessing that it is almost final to build an accepting run on any word.

(\ref{item:fintoinf}) Let $\strat'$ be Player 2's winning strategy in the letter-game on $\lts'$ on finite words. In the letter-game on $\lts$ on infinite words, Player 2 follows $\strat'$ until the prefix of the word is in $L(\lts')$, at which point her run must have reached an almost final state of $\lts$, due to $\strat'$ being winning. From there, she can play the strategy witnessing that the state is almost final, and win.
If Player 1 never plays a prefix in $L(\lts')$ then the word he plays is not in $L(\lts)$, and Player 2 wins.
\end{proof}

\begin{theorem}
Given a fair LTS $\lts$ with a reachability acceptance condition, Player 2 wins $G_1(\lts)$ if and only if $\lts$ is history-deterministic. 
\end{theorem}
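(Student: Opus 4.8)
The plan is to reduce the infinite-word reachability case to the finite-word safety case, using \cref{lem:G1reach} as the bridge between them and \cref{lem:G1} (which already covers finite words) as the base engine. The two directions of the biconditional are handled separately, but both rest on the same auxiliary automaton $\lts'$ obtained from $\lts$ by promoting every almost-final state to a genuine final state.

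First I would prove the forward direction. Suppose Player~2 wins $G_1(\lts)$ on infinite words. By \cref{lem:G1reach}(\ref{item:inftofin}), Player~2 then also wins $G_1(\lts')$ on finite words. Now $\lts'$ is a reachability LTS interpreted over finite words, so \cref{lem:G1} applies (its statement explicitly covers the ``interpreted over finite words'' case): winning $G_1(\lts')$ is equivalent to $\lts'$ being history-deterministic over finite words. Hence $\lts'$ is history-deterministic over finite words, and by \cref{lem:G1reach}(\ref{item:fintoinf}) this gives that $\lts$ is history-deterministic over infinite words, as required.

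For the converse I would argue more directly, since it does not need the detour through $\lts'$: if $\lts$ is history-deterministic then a resolver $r$ for the letter game is immediately a winning strategy for Player~2 in $G_1(\lts)$. Indeed, Player~2 plays her transitions according to $r$ and simply ignores Player~1's token run $\rho'$. Whenever $w\in L(\lts)$ the resolver guarantees that Player~2's run $\rho$ is accepting, so she wins regardless of $\rho'$; and whenever $w\notin L(\lts)$, the winning condition for Player~1 requires his own run $\rho'$ to be accepting while $w\notin L(\lts)$, which is impossible, so Player~2 wins there too. This is exactly the (easy) ``only if'' half already spelled out for safety in \cref{lem:G1}, and it is insensitive to the acceptance type.

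The main obstacle is really packaged inside \cref{lem:G1reach}, which I may assume: the delicate point is that ``almost final'' is the correct notion to close the gap between reaching a final state and merely being able to eventually reach one, and that $G_1$-wins transfer through the promotion $\lts\mapsto\lts'$ in both directions. Given that lemma and \cref{lem:G1}, the theorem itself is a short composition of three equivalences, so I would keep the proof to a few lines, being careful only to invoke the finite-word clause of \cref{lem:G1} for $\lts'$ rather than its safety clause, and to note that the forward chain $G_1(\lts)\Rightarrow G_1(\lts')\Rightarrow \text{HD}(\lts')\Rightarrow\text{HD}(\lts)$ uses each cited result exactly once.
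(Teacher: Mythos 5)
Your proposal is correct and follows essentially the same route as the paper's own proof: the forward direction is the identical chain $G_1(\lts)\Rightarrow G_1(\lts')$ on finite words (via \cref{lem:G1reach}, item 1), then history-determinism of $\lts'$ via the finite-word clause of \cref{lem:G1}, then history-determinism of $\lts$ via \cref{lem:G1reach}, item 2; and the converse is the same observation that a resolver wins $G_1(\lts)$ while ignoring Player~1's run. Your explicit care to invoke the finite-word clause of \cref{lem:G1} (rather than the safety clause) is exactly the right reading of that lemma.
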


\begin{proof}
One direction is immediate since if Player 2 wins in the letter game, she can use the same strategy to win in $G_1(\lts)$ by ignoring Player 1's run.

For the other direction, if Player 2 wins $G_1(\lts)$, she also wins $G_1(\lts')$ on finite words, for $\lts'$ defined as in Lemma~\ref{lem:G1reach} above. Then, $\lts'$ is history-deterministic from Lemma~\ref{lem:G1}, and, again from Lemma~\ref{lem:G1reach}, $\lts$ is also history-deterministic.
\end{proof}

We now consider the problem of deciding whether a given safety or reachability TA
is history-deterministic. We use the observation that
the $k$-token games played on LTSs induced by TA can be expressed as a timed parity game from~\cite{CHP08} played on the $(k+1)$-fold product.

\begin{theorem}\label{thm:decidable}
Given a safety or reachability TA, deciding whether it is history-deterministic is decidable in \exptime.
\end{theorem}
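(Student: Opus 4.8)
The plan is to combine the game-theoretic characterisation of history-determinism by the one-token game (\cref{lem:G1} for safety and the preceding theorem for reachability) with the fact that timed parity games are solvable in \exptime\ (\cref{lem:TPG}). By those characterisations, a safety or reachability TA $\T$ is history-deterministic if and only if Player~2 wins $G_1(\T)$, so it suffices to phrase $G_1(\T)$ as a timed parity game of suitably bounded size and invoke \cref{lem:TPG}. Note that for reachability we appeal to the characterisation directly, so no explicit construction of the auxiliary automaton $\lts'$ is needed at algorithm time.

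First I would build the arena as a timed game on (essentially) the two-fold product of the configuration space of $\T$ with itself, one copy tracking Player~2's token and one tracking Player~1's token. Each round of $G_1$ is split into three phases encoded by distinct control states: from a phase-$0$ state $(q_1,q_2)$ Player~1 advances time and commits to a letter $a$ (recorded in the state) while resetting a fresh clock $z$; from the resulting phase-$1$ state Player~2 chooses an $a$-transition on her copy; and from the phase-$2$ state Player~1 chooses an $a$-transition on his copy, returning to a phase-$0$ state. The two intermediate discrete moves are guarded by $z=0$, which, as in the construction of \cref{lem:GFGTA:fairsim}, forbids spurious delays, so that both copies advance by the same delay each round and jointly read one timed word. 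Completeness of $\T$ guarantees that neither player ever deadlocks.

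The winning condition of $G_1$, namely that Player~1 wins iff his run $\rho'$ is accepting while Player~2's run $\rho$ is not, translates into a timed parity condition with a constant number of priorities. For reachability I would add two monotone bits recording whether each token has visited the target set $T$; Player~1 then wins exactly when the play eventually remains in the region ``$\rho'$-token in $T$ and $\rho$-token not in $T$'', which is captured by a two-priority parity colouring, with the region favourable to Player~1 receiving the odd priority and everything else the even one. The safety case is symmetric, tracking instead whether each token has left the safe region $F$. In both cases the resulting arena has $\mathcal{O}(m^2\cdot\card{\Sigma})$ control states, $2n+1$ clocks, the same maximal constant $k$, and $c=2$ priorities.

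Finally I would apply \cref{lem:TPG} to this timed parity game: its running time is polynomial in the product arena but carries the factorial and $(2k+1)$ factors of the region construction, which are exponential in the size of $\T$ (constants being given in binary), and hence the whole procedure runs in \exptime. The main obstacle is not any single step but getting the three-phase turn order and the ``implication'' winning condition right while keeping the number of priorities constant; once the condition is a fixed-priority parity condition on a polynomially larger arena with $2n+1$ clocks, the complexity bound follows immediately from \cref{lem:TPG}.
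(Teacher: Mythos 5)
Your proposal is correct and follows essentially the same route as the paper: both reduce history-determinism to solving $G_1(\T)$ via \cref{lem:G1} and the reachability analogue, encode $G_1(\T)$ as a timed parity game on the two-fold product, and invoke \cref{lem:TPG} for the \exptime\ bound. The only difference is cosmetic: where the paper defers to the Rabin-chain/latest-appearance-record construction of \cref{lem:GFGTA:fairsim} for the winning condition, you encode it directly with monotone bits and two priorities, which is a perfectly valid (and arguably more explicit) instantiation of the same idea given that the number of colours here is constant.
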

\begin{proof}
From~\cref{lem:G1} and ~\cref{lem:G1reach}, deciding the history-determinism of a safety or reachability TA $\T$ reduces to solving $G_1(\T)$, which is a timed game played on the product of two copies of $\T$ (plus some intermediate states to encode the interaction in each round). The winning condition consists of a Boolean combination of safety or reachability conditions.

In the same way as done in the proof of \cref{lem:GFGTA:fairsim}, we can implement $G_1(\T)$ as a timed parity
game where the set states grows exponentially only in the number of colours in the parity condition.
The resulting timed parity game can be solved in time exponential in the number of clocks $c$ (see
\cref{lem:TPG}).
\end{proof}

As explained in the introduction, this also solves the good-enough synthesis problem of deterministic safety and reachability TA.

	\section{Synthesis}
	\label{sec_synthesis}
	\label{sec:synthesis}
	We show that as is the case in the regular~\cite{HP06}, pushdown~\cite{LZ22}, cost function~\cite{Col09}, and quantitative~\cite{BL21} settings, synthesis games with winning conditions given by history-deterministic TA are no harder to solve than those with winning condition given by deterministic TA.

\begin{definition}[Timed synthesis game]
	Given a timed language $L\subseteq (\Sigma_I\times \Sigma_O)^\omega_T$, the synthesis game for $L$ proceeds as follows.
	At turn $i$:
	\begin{itemize}
		\item Player~1 plays a delay $d_i$ and a letter $a_i\in \Sigma_I$
		\item Player~2 plays a letter $b_i\in \Sigma_O$.
	\end{itemize}
	Player~2 wins if $d_0\binom{a_0}{b_0}d_1\binom{a_1}{b_1}...\in L$ or if time does not progress.
If Player~2 has a winning strategy in the synthesis game, we say that $L$ is \emph{realisable}.
\end{definition}

\begin{restatable}{theorem}{synthesis}
    \label{thm:synthesis}
	Given a history-deterministic timed parity automaton $\T$, the synthesis game for $\lang(\T)$ is decidable and \exptime-complete.
\end{restatable}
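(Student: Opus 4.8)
The plan is to prove the two bounds separately, with the upper bound resting on a composition argument that turns the synthesis game into a timed parity game solvable by \cref{lem:TPG}, and the lower bound coming from the deterministic special case.

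For membership in \exptime\ I would first model the synthesis game as a timed game arena $\gsgame$ whose turn structure alternates a Player~1 move (a delay $d_i$ together with an input letter $a_i\in\Sigma_I$) with a Player~2 move (an output letter $b_i\in\Sigma_O$), so that each round emits one timed letter $\binom{a_i}{b_i}$ at delay $d_i$, and whose winning condition is $\lang(\T)$. I would then form the composition $\comp{\gsgame}{\T}$ as in \cref{sec:games}: on the product of the synthesis arena with the configuration graph of $\T$, Player~2 additionally picks, each round, a transition of $\T$ over the letter just produced, thereby building a run of $\T$ on the outcome of the play. The winning condition of $\comp{\gsgame}{\T}$ is simply that this run is accepting, i.e.\ the parity condition inherited from $\T$ (together with the disjunct that time does not progress, handled exactly as in the synthesis game itself). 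This is a timed parity game whose state space is $\gsgame$ times the states of $\T$, whose clocks are those of $\T$, and whose priorities are those of $\T$, so \cref{lem:TPG} solves it in \exptime.

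The crux is the equivalence: Player~2 wins the synthesis game for $\lang(\T)$ if and only if Player~2 wins $\comp{\gsgame}{\T}$. The backward direction is immediate, since any run Player~2 builds in the $\T$-component witnesses that the emitted word lies in $\lang(\T)$, which (or else the failure of time to progress) is exactly what she needs to win the synthesis game. The forward direction is where history-determinism enters, and is the \emph{main obstacle}: given a winning output strategy $\sigma$ for Player~2 in the synthesis game, I would have her play $\sigma$ for the output letters while resolving the $\T$-component on the fly using a resolver $r$ for $\T$. These two strategies are compatible because both read only the finite history of the play. Since $\sigma$ is winning, the emitted word $w$ lies in $\lang(\T)$ whenever time progresses; for such $w$ the resolver guarantees that the run it builds is accepting, so Player~2 wins $\comp{\gsgame}{\T}$, while if time does not progress she wins by the time disjunct regardless. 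Note that this argument only uses the implication that history-determinism yields good-for-gameness, which holds for \emph{all} parity acceptance conditions, and hence does not require the safety/reachability restriction of \cref{lem:gfg2hd}.

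For the lower bound, \exptime-hardness holds already for \emph{deterministic} timed parity automata, which are in particular history-deterministic, so it suffices to treat that case. I would obtain it by a reduction from countdown games analogous to \cref{lem:GFGTA:fairsim-lb}: the environment (Player~1) encodes the choice of a number $l$ by waiting that long before its input move, the system (Player~2) encodes the selected countdown transition as its output letter, and the winning condition is recognised by a deterministic timed automaton that accumulates the total elapsed time against the target $k$ and rejects illegal moves. Realisability of the resulting language then coincides with the winner of the countdown game, and since determining that winner is \exptime-complete, the matching lower bound follows.
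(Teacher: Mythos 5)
Your proposal is correct, and its engine is the same as the paper's: Player~2 resolves the nondeterminism of $\T$ as part of her own moves, the resolver makes this lossless, and hardness transfers from the deterministic special case. The formal packaging, however, is genuinely different. The paper does not form the composition $\comp{\gsgame}{\T}$ and solve it via \cref{lem:TPG}; instead it builds a \emph{deterministic} timed automaton $\T'$ over the enlarged output alphabet $\Sigma\times\Delta$, whose language is the set of pairs $(w,\rho)$ with $\rho$ an accepting run of $\T$ on $w$, and reduces to the synthesis problem for deterministic timed parity automata, which is \exptime-complete by \cite{DM2002}. The two constructions yield essentially isomorphic games, but the paper's route delegates to \cite{DM2002} the turn-structure and time-progress issues that your route must encode by hand inside the timed-game formalism of \cref{sec:games} (emitting the pair $\binom{a_i}{b_i}$ at a single timestamp, forcing Player~2's step at zero delay via an extra clock, folding the time-progress disjunct into the parity condition); these encodings are routine but you gloss over them. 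Conversely, your treatment of the equivalence is tighter in one respect: for the direction ``Player~2 wins the product game $\Rightarrow$ Player~2 wins the synthesis game'' you simply project her strategy, the run she builds certifying membership in $\lang(\T)$, whereas the paper argues this direction by transferring a winning strategy of Player~1, which closes the equivalence only modulo determinacy of the synthesis game on $\T$ --- not obviously available for general parity TA, since (as the paper itself notes after \cref{lem:gfg2hd}) it is unclear whether such timed languages are Borel. Your remark that only the ``history-determinism implies good-for-gameness'' direction is needed, which holds for every acceptance condition, is exactly right. For the lower bound, the paper simply cites the \exptime-hardness of deterministic TA synthesis \cite{DM2002} rather than re-deriving it from countdown games, but since deterministic automata are history-deterministic, both versions of that step are sound.
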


The proof below follows a similar reduction to one in~\cite{LZ22}, in which the nondeterminism of the automaton is moved into Player~2's output alphabet, forcing her to simultaneously build a word in the winning condition and an accepting run witnessing this. Since accepting runs are recognized by deterministic automata, this reduces the problem to the synthesis problem for deterministic timed automata.	The lower bound follows from the \exptime-completeness of synthesis for deterministic TA~\cite{DM2002}.

The \exptime\ decidability of universality for history-deterministic TA follows both from the decidability of language inclusion in the previous section and from the decidability of synthesis: the universality of $\T$ reduces to deciding the winner of the synthesis game over $\{\binom{w}{w}\mid w\in \lang(\T)\}$, recognized by a history-deterministic TA if $\T$ is history-deterministic.

\begin{proof}[Proof of \cref{thm:synthesis}]
    For the upper bound, we reduce the problem to solving synthesis games for deterministic timed parity automata, which is in \exptime~\cite{DM2002}.
 
	Let $\T=(S,\iota,C,\Delta,\Sigma,\acc)$ be a timed automaton. Let $\T'$ be the deterministic timed automaton $(S,\iota,C,\Delta',\Sigma\times \Delta, \acc)$ where:

	$$\Delta' = \{(s,g,(\sigma,(s,g,\sigma,c,s')),c,s')| (s,g,\sigma,c,s')\in \Delta\}$$
	
	In other words, $\T'$ is a deterministic automaton with the state space of $\T$, over the alphabet $\Sigma\times \Delta$, where the transition in the input letter dictates the transition in the automaton. The language of $\T'$ is the set of words $(w,\rho)$ such that there is an accepting run of $\T$ over $w$ along the transitions of $\rho$.
	
	We now claim that given a history-deterministic automaton $\T$ with resolver $r$, Player~2 wins the synthesis game on $\T$ if and only if she wins it on $\T'$.
	First assume that Player~2 wins the synthesis game for $\T$ with a strategy $s$. Then, to win the synthesis game for $\T'$, at each turn $i$, after Player~1 plays $d_i$ and $a_i$, she needs to make two choices: she must choose both a response letter $b_i$ and a transition in $\T$ over $(a_i,b_i)$. Given Player~1's move and the (first component of the) word built so far, she can use the strategy $s$ to choose the response letter $b_i$; this guarantees that the first component of the play is a word accepted by $\T$. To choose the transition of $\T$, she can use the resolver $r$: given the run $\rho$ built from the delays (including $d_i$) and transitions played so far, she plays $r(\rho,(a_i,b_i))$. Since $r$ is a resolver, this strategy guarantees that the resulting run is accepting, and hence that she wins the synthesis game on $\T'$.
	
	On the other hand, if Player~1 wins the synthesis game on $\T$, he has a strategy $s$ which guarantees a play $w\in (\Sigma_i\times \Sigma_O)^T$ that is not in the language of $\T$. He can use the same strategy in the synthesis game of $\T'$ to guarantee a play $(w,\rho)$ such that $w$ is not in the language of $\T$, and by extension $(w,\rho)$ is not in the language of $\T'$, as there are no accepting runs over $w$ in $\T$.
	
	The lower bound follows from the \exptime-completeness of synthesis for deterministic TA~\cite{DM2002}.
\end{proof}

	\section{Conclusion}
	\label{sec_conclusion}
	We introduced history-determinism for timed automata and showed that several natural decision problems
-- timed language inclusion, universality and synthesis --
that are undecidable in general,
remain decidable for such automata.
We proved that for the important subclasses of timed safety and timed reachability automata, history-determinism can be checked (and therefore good-enough synthesis of deterministic reachability and safety automata can be solved) in exponential time.

We also showed that every history-deterministic timed safety or reachability automaton can be determinized, based on pruning the region automaton, and therefore these classes of automata are strictly less expressive than their fully non-deterministic counterparts.

We further establish that in terms of recognizing languages over infinite timed words, HD timed automata are strictly in between fully deterministic and the more general non-deterministic TA. More precisely, we present a history-deterministic one-clock coB\"uchi timed automata whose language is not recognized by any $k$-clock deterministic parity TA,
and a non-deterministic $1$-clock reachability TA whose language is not recognized by any HD $k$-clock parity TA.

We leave open the comparative expressiveness of history-deterministic B\"uchi timed automata:
are they determinizable? Already in the untimed case \cite{RK19},
positional resolvers for HD B\"uchi may not exist, but a finite amount of memory suffices.
We conjecture that HD B\"uchi timed automata can still be determinized via resolvers that
are based on regions and a finite amount of additional memory.

Let us conclude with another conjecture.
We showed that history-deterministic timed automata are ``good'' for solving
turn-based timed games, where in each turn of the game, one of the two players
chooses a time delay or an action. 
A more general, concurrent setting for timed games is presented in~\cite{DFHM03}.
In their concurrent version, both players simultaneously choose permissible pairs
of time delays and actions, and the player who has picked the shorter time delay
gets to move.
While concurrent games may not be determined, we conjecture that these concurrent
timed games can again be solved by composing the (timed) arena with the (timed)
winning condition, as long as the winning condition is history-deterministic.

	\bibliographystyle{alphaurl}
	\bibliography{bib/conferences,bib/journals,bib/references.clean}

\newcommand{\etalchar}[1]{$^{#1}$}
\begin{thebibliography}{dAFH{\etalchar{+}}03}

\bibitem[AD94]{AD1994}
Rajeev Alur and David~L. Dill.
\newblock A theory of timed automata.
\newblock {\em Theoretical Computer Science}, 1994.
\newblock \href {https://doi.org/10.1016/0304-3975(94)90010-8}
  {\path{doi:10.1016/0304-3975(94)90010-8}}.

\bibitem[AFH99]{AFH99}
Rajeev Alur, Limor Fix, and Thomas~A. Henzinger.
\newblock Event-clock automata: A determinizable class of timed automata.
\newblock {\em Theoretical Computer Science}, 1999.
\newblock \href {https://doi.org/10.1016/S0304-3975(97)00173-4}
  {\path{doi:10.1016/S0304-3975(97)00173-4}}.

\bibitem[AH92]{AlHe92}
Rajeev Alur and Thomas~A. Henzinger.
\newblock Back to the future: Towards a theory of timed regular languages.
\newblock In {\em 33rd Annual Symposium on Foundations of Computer Science},
  1992.
\newblock \href {https://doi.org/10.1109/SFCS.1992.267774}
  {\path{doi:10.1109/SFCS.1992.267774}}.

\bibitem[AK20]{AK20}
Shaull Almagor and Orna Kupferman.
\newblock Good-enough synthesis.
\newblock In {\em Computer Aided Verification (CAV)}, 2020.
\newblock \href {https://doi.org/10.1007/978-3-030-53291-8_28}
  {\path{doi:10.1007/978-3-030-53291-8_28}}.

\bibitem[ARK19]{RK19}
Bader Abu~Radi and Orna Kupferman.
\newblock Minimizing gfg transition-based automata.
\newblock In {\em International Colloquium on Automata, Languages and
  Programming (ICALP)}, 2019.
\newblock \href {https://doi.org/10.4230/LIPIcs.ICALP.2019.100}
  {\path{doi:10.4230/LIPIcs.ICALP.2019.100}}.

\bibitem[BHL{\etalchar{+}}22]{BHLST2022}
Sougata Bose, Thomas~A. Henzinger, Karoliina Lehtinen, Sven Schewe, and Patrick
  Totzke.
\newblock History-deterministic timed automata are not determinizable.
\newblock In {\em International Workshop on Reachability Problems (RP)}, 2022.
\newblock \href {https://doi.org/10.1007/978-3-031-19135-0_5}
  {\path{doi:10.1007/978-3-031-19135-0_5}}.

\bibitem[BK18]{BK18}
Marc Bagnol and Denis Kuperberg.
\newblock {B{\"u}chi Good-for-Games Automata Are Efficiently Recognizable}.
\newblock In {\em IARCS Annual Conference on Foundations of Software Technology
  and Theoretical Computer Science (FSTTCS)}, 2018.
\newblock \href {https://doi.org/10.4230/LIPIcs.FSTTCS.2018.16}
  {\path{doi:10.4230/LIPIcs.FSTTCS.2018.16}}.

\bibitem[BKLS20]{BKLS20}
Udi Boker, Denis Kuperberg, Karoliina Lehtinen, and Michal Skrzypczak.
\newblock On succinctness and recognisability of alternating good-for-games
  automata.
\newblock {\em CoRR}, 2020.
\newblock \href {http://arxiv.org/abs/2002.07278} {\path{arXiv:2002.07278}}.

\bibitem[BL19]{BL19}
Udi Boker and Karoliina Lehtinen.
\newblock Good for games automata: From nondeterminism to alternation.
\newblock In {\em International Conference on Concurrency Theory (CONCUR)},
  2019.
\newblock \href {https://doi.org/10.4230/LIPIcs.CONCUR.2019.19}
  {\path{doi:10.4230/LIPIcs.CONCUR.2019.19}}.

\bibitem[BL21]{BL21}
Udi Boker and Karoliina Lehtinen.
\newblock {History Determinism vs. Good for Gameness in Quantitative Automata}.
\newblock In {\em IARCS Annual Conference on Foundations of Software Technology
  and Theoretical Computer Science (FSTTCS)}, 2021.
\newblock \href {https://doi.org/10.4230/LIPIcs.FSTTCS.2021.38}
  {\path{doi:10.4230/LIPIcs.FSTTCS.2021.38}}.

\bibitem[BL22]{BL22}
Udi Boker and Karoliina Lehtinen.
\newblock Token games and history-deterministic quantitative automata.
\newblock In {\em IARCS Annual Conference on Foundations of Software Technology
  and Theoretical Computer Science (FSTTCS)}, 2022.
\newblock \href {https://doi.org/10.1007/978-3-030-99253-8_7}
  {\path{doi:10.1007/978-3-030-99253-8_7}}.

\bibitem[BPDG98]{BPDG98}
B\'{e}atrice B\'{e}rard, Antoine Petit, Volker Diekert, and Paul Gastin.
\newblock Characterization of the expressive power of silent transitions in
  timed automata.
\newblock {\em Fundamenta Informaticae}, 1998.
\newblock \href {https://doi.org/10.3233/FI-1998-36233}
  {\path{doi:10.3233/FI-1998-36233}}.

\bibitem[CHP08]{CHP08}
Krishnendu Chatterjee, Thomas~A. Henzinger, and Vinayak~S. Prabhu.
\newblock Timed parity games: Complexity and robustness.
\newblock In {\em International Conference on Formal Modeling and Analysis of
  Timed Systems (FORMATS)}, 2008.
\newblock \href {https://doi.org/10.1007/978-3-540-85778-5_10}
  {\path{doi:10.1007/978-3-540-85778-5_10}}.

\bibitem[Col09]{Col09}
Thomas Colcombet.
\newblock The theory of stabilisation monoids and regular cost functions.
\newblock In {\em International Colloquium on Automata, Languages and
  Programming (ICALP)}, 2009.
\newblock \href {https://doi.org/10.1007/978-3-642-02930-1_12}
  {\path{doi:10.1007/978-3-642-02930-1_12}}.

\bibitem[dAFH{\etalchar{+}}03]{DFHM03}
Luca de~Alfaro, Marco Faella, Thomas~A. Henzinger, Rupak Majumdar, and
  Mari{\"{e}}lle Stoelinga.
\newblock The element of surprise in timed games.
\newblock In {\em International Conference on Concurrency Theory (CONCUR)},
  2003.
\newblock \href {https://doi.org/10.1007/978-3-540-45187-7_9}
  {\path{doi:10.1007/978-3-540-45187-7_9}}.

\bibitem[DJW97]{Dziemb97}
S.~Dziembowski, M.~Jurdzi{\'n}ski, and I.~Walukiewicz.
\newblock How much memory is needed to win infinite games?
\newblock In {\em ACM/IEEE Symposium on Logic in Computer Science (LICS)},
  1997.
\newblock \href {https://doi.org/10.1109/LICS.1997.614939}
  {\path{doi:10.1109/LICS.1997.614939}}.

\bibitem[DM02]{DM2002}
Deepak D'souza and P.~Madhusudan.
\newblock Timed control synthesis for external specifications.
\newblock In {\em International Symposium on Theoretical Aspects of Computer
  Science (STACS)}, 2002.
\newblock \href {https://doi.org/10.1007/3-540-45841-7_47}
  {\path{doi:10.1007/3-540-45841-7_47}}.

\bibitem[Fin06]{F06}
Olivier Finkel.
\newblock Undecidable problems about timed automata.
\newblock In {\em International Conference on Formal Modeling and Analysis of
  Timed Systems (FORMATS)}, 2006.
\newblock \href {https://doi.org/10.1007/11867340_14}
  {\path{doi:10.1007/11867340_14}}.

\bibitem[FLW20]{FLW20}
Emmanuel Filiot, Christof L{\"o}ding, and Sarah Winter.
\newblock Synthesis from weighted specifications with partial domains over
  finite words.
\newblock In {\em IARCS Annual Conference on Foundations of Software Technology
  and Theoretical Computer Science (FSTTCS)}, 2020.
\newblock \href {https://doi.org/10.4230/LIPIcs.FSTTCS.2020.46}
  {\path{doi:10.4230/LIPIcs.FSTTCS.2020.46}}.

\bibitem[GH82]{Gurevi82}
Y.~Gurevich and L.~Harrington.
\newblock Trees, automata and games.
\newblock In {\em Symposium on Theory of Computing (STOC)}, 1982.
\newblock \href {https://doi.org/10.1145/800070.802177}
  {\path{doi:10.1145/800070.802177}}.

\bibitem[GJLZ21]{GJLZ21}
Shibashis Guha, Isma\"{e}l Jecker, Karoliina Lehtinen, and Martin Zimmermann.
\newblock {A Bit of Nondeterminism Makes Pushdown Automata Expressive and
  Succinct}.
\newblock In {\em International Symposium on Mathematical Foundations of
  Computer Science (MFCS)}, 2021.
\newblock \href {https://doi.org/10.4230/LIPIcs.MFCS.2021.53}
  {\path{doi:10.4230/LIPIcs.MFCS.2021.53}}.

\bibitem[HKR97]{HKR1997}
Thomas~A. Henzinger, Orna Kupferman, and Sriram~K. Rajamani.
\newblock Fair simulation.
\newblock In {\em International Conference on Concurrency Theory (CONCUR)},
  1997.
\newblock \href {https://doi.org/10.1007/3-540-63141-0_19}
  {\path{doi:10.1007/3-540-63141-0_19}}.

\bibitem[HKW95]{HKW95}
Thomas~A. Henzinger, Peter~W. Kopke, and Howard Wong{-}Toi.
\newblock The expressive power of clocks.
\newblock In {\em International Colloquium on Automata, Languages and
  Programming (ICALP)}, 1995.
\newblock \href {https://doi.org/10.1007/3-540-60084-1_93}
  {\path{doi:10.1007/3-540-60084-1_93}}.

\bibitem[HLT22]{HLT2022}
Thomas~A. Henzinger, Karoliina Lehtinen, and Patrick Totzke.
\newblock History-deterministic timed automata.
\newblock In {\em International Conference on Concurrency Theory (CONCUR)},
  2022.
\newblock \href {https://doi.org/10.4230/LIPIcs.CONCUR.2022.14}
  {\path{doi:10.4230/LIPIcs.CONCUR.2022.14}}.

\bibitem[HNSY92]{HNSY1992}
T.A. Henzinger, X.~Nicollin, J.~Sifakis, and S.~Yovine.
\newblock Symbolic model checking for real-time systems.
\newblock In {\em ACM/IEEE Symposium on Logic in Computer Science (LICS)},
  1992.
\newblock \href {https://doi.org/10.1109/LICS.1992.185551}
  {\path{doi:10.1109/LICS.1992.185551}}.

\bibitem[HP06]{HP06}
Thomas~A. Henzinger and Nir Piterman.
\newblock Solving games without determinization.
\newblock In {\em Computer Science Logic (CSL)}, 2006.
\newblock \href {https://doi.org/10.1007/11874683_26}
  {\path{doi:10.1007/11874683_26}}.

\bibitem[JLS08]{JLS2008}
Marcin Jurdzinski, Francois Laroussinie, and Jeremy Sproston.
\newblock {Model Checking Probabilistic Timed Automata with One or Two Clocks}.
\newblock {\em Logical Methods in Computer Science}, 2008.
\newblock \href {https://doi.org/10.2168/LMCS-4(3:12)2008}
  {\path{doi:10.2168/LMCS-4(3:12)2008}}.

\bibitem[KS15]{KS15}
Denis Kuperberg and Micha{\l} Skrzypczak.
\newblock On determinisation of good-for-games automata.
\newblock In {\em International Colloquium on Automata, Languages and
  Programming (ICALP)}, 2015.
\newblock \href {https://doi.org/10.1007/978-3-662-47666-6_24}
  {\path{doi:10.1007/978-3-662-47666-6_24}}.

\bibitem[KSV06]{KSV06}
Orna Kupferman, Shmuel Safra, and Moshe~Y Vardi.
\newblock Relating word and tree automata.
\newblock {\em Ann. Pure Appl. Logic}, 2006.
\newblock \href {https://doi.org/10.1016/j.apal.2005.06.009}
  {\path{doi:10.1016/j.apal.2005.06.009}}.

\bibitem[LZ22]{LZ22}
Karoliina Lehtinen and Martin Zimmermann.
\newblock Good-for-games $\omega$-pushdown automata.
\newblock {\em Logical Methods in Computer Science}, 2022.
\newblock \href {https://doi.org/10.46298/lmcs-18(1:3)2022}
  {\path{doi:10.46298/lmcs-18(1:3)2022}}.

\bibitem[Mar75]{M1975}
Donald~A Martin.
\newblock Borel determinacy.
\newblock {\em Annals of Mathematics}, 1975.
\newblock \href {https://doi.org/10.2307/1971035} {\path{doi:10.2307/1971035}}.

\bibitem[Sch20]{Sch20}
Sven Schewe.
\newblock Minimising good-for-games automata is np-complete.
\newblock In {\em IARCS Annual Conference on Foundations of Software Technology
  and Theoretical Computer Science (FSTTCS)}, 2020.
\newblock \href {https://doi.org/10.4230/LIPIcs.FSTTCS.2020.56}
  {\path{doi:10.4230/LIPIcs.FSTTCS.2020.56}}.

\bibitem[TAKB96]{TAKB96}
Serdar Tasiran, Rajeev Alur, Robert~P. Kurshan, and Robert~K. Brayton.
\newblock Verifying abstractions of timed systems.
\newblock In {\em International Conference on Concurrency Theory (CONCUR)},
  1996.
\newblock \href {https://doi.org/10.1007/3-540-61604-7_75}
  {\path{doi:10.1007/3-540-61604-7_75}}.

\end{thebibliography}

\end{document}